\title{Relational Width of  First-Order Expansions of Homogeneous Graphs  with Bounded Strict Width} 
\titlerunning{Relational Width of  FO-Expansions of Homogeneous Graphs with BSW}
\author{Micha{\l} Wrona}{Theoretical Computer Science Department,
	Faculty of Mathematics and Computer Science, Jagiellonian University, Poland}{wrona@tcs.uj.edu.pl}{https://orcid.org/0000-0002-2723-0768}{}
\authorrunning{M. Wrona}
\keywords{Constraint Satisfaction, Homogeneous Graphs, Bounded Width, Strict Width, Relational Width, Computational Complexity}
\theoremstyle{plain}
\newtheorem{question}{Question}
\theoremstyle{plain}
\theoremstyle{plain}
\newtheorem{observation}[theorem]{Observation}
\theoremstyle{plain}
\newtheorem{fact}[theorem]{Fact}
\DeclareMathOperator{\csp}{CSP}
\DeclareMathOperator{\Aut}{Aut}
\DeclareMathOperator{\maxbound}{\mathbb{L}}
\DeclareMathOperator{\instance}{\mathcal{I}}
\DeclareMathOperator{\structA}{\mathbb{A}}
\DeclareMathOperator{\structB}{\mathbb{B}}
\DeclareMathOperator{\solution}{\mathbf{s}}
\DeclareMathOperator{\V}{\mathcal{V}}
\DeclareMathOperator{\Q}{\mathbb{Q}}
\DeclareMathOperator{\N}{\mathbb{N}}
\DeclareMathOperator{\bounds}{\mathcal{F}}
\DeclareMathOperator{\homograph}{\mathcal{H}}
\DeclareMathOperator{\OP}{OP}
\DeclareMathOperator{\PO}{PO}
\DeclareMathOperator{\EN}{EN}
\DeclareMathOperator{\NE}{NE}
\DeclareMathOperator{\EE}{EE}
\DeclareMathOperator{\NN}{NN}
\DeclareMathOperator{\EEq}{E\!\!=}
\DeclareMathOperator{\NEq}{N\!\!=}
\DeclareMathOperator{\EqE}{=\!\!E}
\DeclareMathOperator{\EqN}{=\!\!N}
\DeclareMathOperator{\NeqEq}{\neq =}
\DeclareMathOperator{\EqNeq}{= \neq}
\DeclareMathOperator{\NeqNeq}{\neq \neq}
\DeclareMathOperator{\uuE}{\underline{\underline{E}}}
\DeclareMathOperator{\uuN}{\underline{\underline{N}}}
\DeclareMathOperator{\uuO}{\underline{\underline{O}}}
\DeclareMathOperator{\uuOone}{\underline{\underline{O_1}}}
\DeclareMathOperator{\uuOtwo}{\underline{\underline{O_2}}}
\begin{document}

\maketitle

\begin{abstract}
Solving the algebraic dichotomy conjecture for constraint satisfaction problems over structures first-order definable in countably infinite finitely bounded homogeneous structures requires understanding the applicability of local-consistency methods in this setting. 
We study the amount of consistency (measured by relational width) needed  to solve $\csp(\mathbb{A})$
for first-order expansions  $\mathbb{A}$ of countably infinite homogeneous graphs $\homograph := (A; E)$, which happen all to be finitely bounded. We study our problem for structures
$\mathbb{A}$ that additionally have bounded strict width, i.e., for which establishing local consistency of an instance of $\csp(\mathbb{A})$ not only decides 
if there is a solution but also ensures that every solution may be  obtained from a locally consistent instance by greedily assigning values to variables, without backtracking.

Our main result is that the structures $\mathbb{A}$ under consideration have  relational width exactly $(2, \maxbound_{\homograph})$ where $\maxbound_{\homograph}$ is the maximal size of a forbidden subgraph of $\homograph$,
but not smaller than $3$. It beats the upper bound: $(2 m, 3 m)$ where $m = \max(\textrm{arity}(\mathbb{A})+1, \maxbound, 3)$ and
$\textrm{arity}(\mathbb{A})$ is the largest arity of a relation in $\mathbb{A}$,
which follows from a sufficient condition implying bounded relational width given in~\cite{UnaryDichotomy}.
Since $\maxbound_{\homograph}$ may be arbitrarily large, our result contrasts the collapse of the relational bounded width hierarchy for finite structures $\mathbb{A}$, whose relational width, if finite,
is always at most $(2,3)$. 
\end{abstract}

\section{Introduction}

The constraint satisfaction problem (CSP) is one of the most important problems in theoretical and applied computer science and at the same time it is a general framework  in which many other computational problems may be 
formalized. Given a number of constraints imposed on variables one asks if there is a global solution, i.e., a function
assigning values to variables so that all the constraints are simultaneously satisfied.
Boolean satisfiability and graph colouring are among the most prominent examples of NP-hard problems that can be formalized as CSPs and hence the CSP is NP-hard in general. Thus, one considers  
the  problem $\csp(\structA)$ parametrized by a relational structure (called also a constraint language, a language or a template) $\structA$. (In this paper, $\structA$ is always over a finite signature).
A longstanding open problem in this area was to verify the Feder-Vardi~\cite{FederVardi} conjecture
which states that for every finite $\structA$ the problem 
$\csp(\structA)$ is either in P or it is NP-complete.  After over thirty years of work and a number of
important partial results this so-called Dichotomy Conjecture was confirmed independently in~\cite{ZhukDichotomy} and~\cite{BulatovDichotomy}. In both cases the proof was carried out in the so-called  universal-algebraic approach to the complexity of CSPs~\cite{JeavonsClosure,JBK}. The approach not only provided appropriate tools but also suggested the delineation. This so-called algebraic dichotomy conjecture~\cite{JBK} saying that $\csp(\structA)$ is hard under the condition that the algebra corresponding to $\structA$ 
lacks interesting operations  also has been confirmed in both proofs. 

The universal-algebraic approach to finite-domain constraint satisfaction problems has been generalized to capture the computational complexity in many other similar settings. In particular, the complexity of $\csp(\mathbb{A})$
depends on the algebra corresponding to $\mathbb{A}$ when $\mathbb{A}$ is $\omega$-categorical~\cite{BodirskyN06}, i.e., all countable models of the first-order theory of $\structA$ are isomorphic. In particular, all structures
first-order definable 
in (reducts of) (countably infinite) homogeneous structures over finite signatures are
$\omega$-categorical structure. (A structure is homogeneous if every isomorphism between its finite substructures may be extended to an automorphism of a structure.)
Considering these infinite structures significantly broadens the class of problems that may be captured within the CSP framework. In particular, the order over rational numbers $(\Q,<)$, which  is homogeneous, gives rise to $\csp(\Q; <)$ that can be seen as the digraph acyclicity problem. The latter  cannot be expressed as the CSP over a finite template. Furthermore a number of problems of interest in qualitative reasoning may be captured by $\csp(\mathbb{A})$ where $\mathbb{A}$ is a reduct of a homogeneous 
structure $\structB$. 
It concerns constraint satisfaction problems in formalisms such as Allen's interval algebra or RCC-5, see~\cite{BodirskyJonsson-survey} for a survey. Many of the homogeneous structures $\mathbb{B}$ of interest 
are finitely bounded, i.e., there exists a finite unique minimal set $\bounds_{\mathbb{B}}$ of finite structures  over the signature of $\mathbb{B}$ such that a finite structure $\Delta$ embeds into 
$\mathbb{B}$ if and only if
none of the structures in $\bounds_{\mathbb{B}}$ embeds into $\Delta$. 
A dichotomy for algebras corresponding to reducts of 
countably infinite finitely bounded  homogeneous structures was proved in~\cite{algebraic-dichotomy-omegacat}. 
As in the finite case, it suggests the delineation between polynomial-time  solvable and NP-hard CSPs.
Although the complexity dichotomy is still far from being obtained, the algebraic dichotomy conjecture for reducts of finitely bounded homogeneous structures is known to hold in the number of cases including the reducts of $(\N,=)$~\cite{ecsps}, $(\Q,<)$~\cite{tcsps-journal}, the random partial order~\cite{poset-stacs} or a  countably infinite  homogeneous graph~\cite{BodirskyP15,equiv-csps,homographs-arxiv}.

 Theoretical research on CSPs is focused not only on providing classifications of computational complexity but also on settling the limits of applicability of widely known algorithms or algorithmic techniques such as
establishing local consistency. 
This method is used not only for finite CSP
but is also considered to be the most important (if not the only)
algorithmic technique for qualitative CSPs~\cite{Renz12}. 
The algebraic
characterization of finite structures $\mathbb{A}$ with bounded width~\cite{BoundedWidth}, i.e., for which $\csp(\mathbb{A})$ can be solved by establishing local consistency, is considered to be an important step towards solving 
the Feder-Vardi conjecture. Thus, in order to
understand the complexity of CSPs for reducts 
$\mathbb{A}$ of finitely bounded homogeneous structures, we need to 
characterize $\mathbb{A}$ with bounded width
and to understand 
how different notions of consistency relate to each other for templates under consideration.
The focus of this paper is on the latter. 

The amount of consistency needed to solve $\csp(\mathbb{A})$ for $\mathbb{A}$
with bounded width is measured here~\cite{BRWHierarchy} and here~\cite{BulatovRW} by  relational width. The relational width of $\mathbb{A}$ is  a pair of numbers $(k,l)$ with $k \leq l$ (for the exact definition we refer the reader to Section~\ref{sect:consmin}). The following question was of interest for finite structures.

\begin{question}
\label{quest:boundedrelwidth}
What is the exact relational width of $\mathbb{A}$ with bounded width?
\end{question}

Question~\ref{quest:boundedrelwidth} for finite $\mathbb{A}$ was completely answered in~\cite{BRWHierarchy} where it was proved that $\mathbb{A}$ with bounded width has always either relational width $(1,1)$ or $(2,3)$, see~\cite{BulatovRW} for another proof. Both proofs rely, however, on the algebraic characterization  of structures $\mathbb{A}$ with bounded  width. 
Although the notion of bounded width has been generalized to $\omega$-categorical  structures~\cite{Datalog-omegacat}, according to our knowledge, no algebraic characterization of bounded width for such structures is within sight.  Nevertheless the algebraic characterization of strict bounded width has been  quite easily lifted from finite~\cite{FederVardi} to infinite domains~\cite{Datalog-omegacat}. (Again, for a detailed definition we refer the reader to Section~\ref{sect:consmin}.) A reduct of a finitely bounded homogeneous structure has bounded strict width if and only if it is preserved by so-called  oligopotent quasi near-unanimity operation. This algebraic characterization gives us a hope to answer the following question analogous to Question~\ref{quest:boundedrelwidth}.

\begin{question}
\label{quest:relstrictwidth}
What is the exact relational width of reducts $\mathbb{A}$ of  finitely bounded homogeneous structures with bounded strict width?
\end{question}

In this paper we answer Question~\ref{quest:relstrictwidth} for first-order expansions of  countably infinite homogeneous graphs.
We believe that our method may be used to provide the general answer in the near future.
We note that the answer to Question~\ref{quest:relstrictwidth}
would be not only a nice theoretical result but should be also of particular interest for structures that give rise to constraint satisfaction problems in qualitative reasoning. In this context strict width is called \emph{local-to-global consistency} and has been widely studied see, e.g.,~\cite{Dechter}.

\subsection{Our results}

 In contrast to all homogeneous structures, all countably infinite homogeneous graphs are well understood and have been classified in~\cite{LachlanWoodrow}. It happens that every such graph $\homograph$ is also finitely bounded, i.e., in each case there exists a finite unique minimal set of finite graphs $\bounds_{\homograph}$ such that a finite graph $G$ embeds into $\homograph$ if and only if none of the graphs in $\bounds_{\homograph}$ embeds into $G$. 
 We will write $\maxbound_{\homograph}$ for the maximum of the number $3$ and the size of the largest finite structure in $\maxbound_{\homograph}$.
Perhaps the best known example of a homogeneous graph is the random graph that is  determined up to isomorphism
by the two properties of being homogeneous  and universal (i.e., it contains all
countable graphs as induced subgraphs).  Equivalently, the random graph is a 
unique countably
infinite graph which has this extension property:
 for all disjoint finite subsets $U, U'$ of the domain there exists an element $v$ such that $v$ is adjacent to all
members of $U$ and to none in $U'$. In this case the finite set of bounds consists of a single directed edge and a loop, and hence $\mathbb{L}_G$ for the random graph $G$ is $3$. Furthermore, the family of homogeneous graphs contains  universal countable $k$-clique free graphs $H_k$ with $k \geq 3$, called also Henson graphs, in which case $\bounds_{H_k}$ contains also a $k$-clique, and hence $\mathbb{L}_{H_k}$ is $k$  or  the graphs $C_n^s$ that are disjoint sums of $n$ cliques of size $s$ where $1 \leq n,s \leq \omega$ and either $n$ or $s$ equals $\omega$. Observe that $\bounds_{C_n^s}$
contains a graph on three vertices with two edges and one non-edge as well as a null graph over $n+1$ vertices in case $n$ is finite or a $(s+1)$-clique in the case where $s$ is finite. Thus, $\mathbb{L}_{C_n^s}$ is either $3, n+1$ or $s+1$. 
All remaining homogeneous graphs are the complements of graphs $H_k$ or $C_n^s$. 
In this paper we prove the following. 

\begin{flushleft}
\textbf{Main result.} \textit{Let $\mathbb{A}$ be a first-order expansion of a countably infinite homogeneous graph $\homograph$ such that $\mathbb{A}$ has bounded strict width. Then $\mathbb{A}$ has relational width $(2, \maxbound_{\homograph})$.}
\end{flushleft}

In fact, we obtain a more general result. Some sufficient conditions implying that a first-order expansion of a homogeneous graph $\homograph$ has relational width $(2, 
\maxbound_{\homograph})$ are given in Section~\ref{sect:suffconditions}. In particular, 
the conditions cover all languages under consideration preserved by binary canonical operations considered in~\cite{BodirskyP15,equiv-csps,homographs-arxiv} where an analysis of algebras corresponding to reducts of homogeneous graphs and the computational dichotomy is provided. Our result: relational width $(2, \maxbound_{\homograph})$ beats the upper bound $(2m, 3m)$,
where $m = \max(\textrm{arity}(\mathbb{A})+1, \maxbound, 3)$ and
$\textrm{arity}(\mathbb{A})$ is the largest arity of a relation in $\mathbb{A}$,
 that can be easily obtained from the proof of Theorem~4.10~in~\cite{UnaryDichotomy}. 

We believe that measuring relational width of structures with bounded width is interesting in its own rights. Nevertheless, our research has complexity consequences. As in the finite case, it was proved in~\cite{Datalog-omegacat} that $\csp(\mathbb{A})$ for an $\omega$-categorical $\mathbb{A}$ with strict width $k$  may be solved by establishing $(k,k+1)$-consistency and hence  in time $O(n^{k+1})$ where $n$ is the number of variables in an instance. Our main result implies that such $\csp(\mathbb{A})$ for a first-order expansion $\mathbb{A}$
of a homogeneous graph $\homograph$ may be solved by establishing $(2,  \maxbound_{\homograph})$-minimality,  and hence in time  $O(n^{m})$ 
where $m = \max(\maxbound_{\homograph}, \textrm{arity}(\mathbb{A}))$.

\subsection{Outline of the paper}

We start with general preliminaries in Section~\ref{sect:preliminaries}. Then we review canonical operations providing tractability for reducts of homogeneous graphs, Section~\ref{sect:canonicalop}. Bounded (relational) width, strict width
and other notions related to local consistency  are provided in Section~\ref{sect:consmin}. There we also give a number of examples 
explaining the applicability of our main result.
The proof of the main result is divided into Section~\ref{sect:suffconditions} and Section~\ref{sect:exactcharacterization}. In the former one,  we give a number of sufficient conditions implying relational width $(2, \maxbound_{\homograph})$, while in the latter one we show that the sufficient conditions are satisfied whenever a first-order expansion of $\homograph$ has bounded strict width. In Section~\ref{sect:suffconditions} we additionally show that the sufficient condition are also satisfied by first-order expansions of homogeneous graphs preserved by the studied binary canonical operations. As a consequence, we obtain that all tractable (whose CSP is solvable in polynomial time) reducts of $\homograph$ where $\homograph$ is $C^{\omega}_1, C^1_{\omega}, C^{\omega}_{\omega}$ or $H_k$ with $k \geq 3$ have bounded relational width $(2, \maxbound_{\homograph})$ and hence can be solved by establishing  $(2, \maxbound_{\homograph})$-minimality.

\section{Preliminaries}
\label{sect:preliminaries}

We write $t =(t[1],...,t[n])$ for a tuple of elements and $[n]$ to denote the set $\{ 1,\ldots,n \}$. 

\subsection{Relations, languages and formulas}

In this paper we consider  first-order expansions 
$\mathbb{A} := (A; E,  R_1, \ldots, R_k)$ over a finite signature $\tau$
of  homogeneous graphs, called also (constraint) languages or templates, 
where all $R_1, \ldots, R_k$ have a first-order definition in $(A;E)$.
We assume that $\mathbb{A}$ constains $=$ and $N$ whenever $N$ is pp-definable in $\mathbb{A}$.
Relations $E$ and $N$ refer always to a homogeneous graph $\homograph$ known from the context. 
For the sake of presentation we usually do not distinguish between a relation symbol $R$ in the signature of $\mathbb{A}$ and 
the relation $R^{\mathbb{A}}$ and use the former symbol for both. 
We often write $O, O_1, O_2, \ldots$ for elements of $\{ E, N, = \}$ and $\uuE, \uuN, \uuO, \uuOone, \uuOtwo$  to denote relations $(E \cup =), (N \cup =), (O \cup =), (O_1 \cup =), (O_2 \cup =)$, respectively. 

For a structure $\mathbb{A}$ over domain $A$ and a tuple $t \in A^k$, the orbit of $t$ in $\mathbb{A}$ is the relation
$\{ (\alpha(t[1]), \ldots, \alpha(t[k])) \mid \alpha \in \Aut(\mathbb{A}) \}$ where $\Aut(\mathbb{A})$
is the set of automorphisms of $\mathbb{A}$.  In particular, $E, N$ and $=$
are orbits of pairs, called also orbitals.
We would like to note that all structures considered in this paper are $\omega$-categorical.  By a theorem proved independently by Ryll-Nardzewski, Engeler and Svenonius, 
a structure $\mathbb{A}$ is $\omega$-categorical if and only if its automorphism group is oligomorphic, i.e., for every $n$ the number of orbits of $n$-tuples is finite.
See~\cite{Hodges} for a textbook on model theory.

A primitive-positive (pp-)formula is a first-order formula built exclusively out of existential quantifiers $\exists$, conjunction $\wedge$ and atomic formulas 
$R(x_1, \ldots , x_k)$ where  $R$ is a $k$-ary relation symbol and $x_1, \ldots, x_k$ are variables, not necessarily pairwise different.

\subsection{The universal-algebraic approach}

We say that an operation $f :A^n \rightarrow A$ is a polymorphism of an $m$-ary relation $R$ iff for any $m$-tuples $t_1,\ldots,t_n \in  R$, it holds that the tuple $(f(t_1[1],\ldots,t_n[1]),\ldots,f(t_1[m],\ldots,t_n[m]))$ is also in $R$. We  write $f(t_1,\ldots,t_n)$ as a shorthand for $(f(t_1[1],\ldots,t_n[1]),\ldots,f(t_1[m],\ldots,t_n[m]))$. An operation $f$ is a polymorphism of $\mathbb{A}$ if it is a polymorphism of every relation in $\mathbb{A}$. If $f : A^n \rightarrow A$ is a polymorphism of $\mathbb{A}$, $R$, we say that $f$ preserves $\mathbb{A}, R$, otherwise that $f$ violates $\mathbb{A}, R$.
A set of polymorphisms of an $\omega$-categorical structure $\mathbb{A}$ forms an algebraic object called an oligomorphic locally closed clone~\cite{OligoClone},
which in particular contains an oligomorphic permutation group~\cite{oligo}.

\begin{theorem}
(\cite{BodirskyN06}) 
\label{thm:Galoisconn}
Let $\mathbb{A}$ be a countable $\omega$-categorical structure. Then $R$ is preserved by the polymorphisms of $\mathbb{A}$ if and only if it has a  primitive-positive definition in $\mathbb{A}$, i.e., a definition via a primitive-positive formula.
\end{theorem}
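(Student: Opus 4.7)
The plan is to prove the two implications separately. For the forward direction, the fact that pp-definability implies preservation by every polymorphism is established by a standard structural induction on the pp-formula: atomic relations of $\mathbb{A}$ are preserved by definition, conjunctions of preserved relations are again preserved (intersection of invariant sets is invariant), and existential projection preserves invariance since witnesses can be pushed through componentwise. Neither $\omega$-categoricity nor countability enters at this stage, so this part works for every structure.

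For the backward direction, let $R \subseteq A^k$ be preserved by every polymorphism of $\mathbb{A}$. Since $\Aut(\mathbb{A})$ sits inside the polymorphism clone, $R$ is a union of $\Aut(\mathbb{A})$-orbits of $k$-tuples. By Ryll-Nardzewski, only finitely many such orbits exist, so $R = O_1 \cup \cdots \cup O_m$ with representatives $r^1, \ldots, r^m$. Let $c_j := (r^1[j], \ldots, r^m[j]) \in A^m$ denote the $j$-th column of the $m \times k$ matrix formed by the representatives. A short argument identifies $R$ with $\{(f(c_1), \ldots, f(c_k)) : f \text{ an } m\text{-ary polymorphism of } \mathbb{A}\}$: the inclusion $\subseteq$ follows by writing any $r \in R$ as $\alpha(r^i)$ for some $\alpha \in \Aut(\mathbb{A})$ and using the polymorphism $(x_1, \ldots, x_m) \mapsto \alpha(x_i)$, while $\supseteq$ follows from preservation applied to the tuple $(r^1, \ldots, r^m)$.

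I would then take $\phi_R(x_1, \ldots, x_k)$ to be the canonical quantifier-free pp-formula whose conjuncts are the atomic formulas $P(x_{i_1}, \ldots, x_{i_s})$ such that $(c_{i_1}, \ldots, c_{i_s}) \in P^{\mathbb{A}^m}$, equivalently $(r^j[i_1], \ldots, r^j[i_s]) \in P^{\mathbb{A}}$ for every $j \in [m]$. Finiteness of the signature of $\mathbb{A}$ keeps the conjunction finite, and the inclusion $R \subseteq \phi_R^{\mathbb{A}}$ is immediate. For the converse, any $a \in \phi_R^{\mathbb{A}}$ induces a partial homomorphism $c_j \mapsto a_j$ from the finite substructure $\{c_1, \ldots, c_k\} \subseteq \mathbb{A}^m$ into $\mathbb{A}$; if this partial map extends to a total homomorphism $f : A^m \to A$, which is by definition an $m$-ary polymorphism of $\mathbb{A}$, then $a = f(r^1, \ldots, r^m) \in R$ by preservation, finishing the argument.

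The main obstacle, and the point at which $\omega$-categoricity is essential, is this extension step: every partial homomorphism from a finite subset of $\mathbb{A}^m$ into $\mathbb{A}$ extends to a total one. I would handle it by a back-and-forth construction leveraging the $\omega$-saturation of $\mathbb{A}$ (which follows from countability together with $\omega$-categoricity). Adjoining a new element $d \in A^m$ to the domain reduces to realizing, in $\mathbb{A}$, a finitely consistent partial atomic type over the previously assigned images $a_1, \ldots, a_k$; checking finite consistency is the technical crux and requires a careful translation of the coordinate-wise atomic witnesses available in $\mathbb{A}^m$ into witnesses in $\mathbb{A}$ compatible with the fixed values. Without $\omega$-categoricity, this saturation-based extension lemma generally fails.
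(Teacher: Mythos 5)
The paper does not prove this statement at all; it is imported wholesale from \cite{BodirskyN06}, so your attempt can only be measured against the known proof of that theorem. Your easy direction, the reduction to finitely many orbits via Ryll-Nardzewski, and the identity $R = \{(f(c_1),\ldots,f(c_k)) : f \in \mathrm{Pol}^{(m)}(\mathbb{A})\}$ are all correct and are indeed how the standard argument begins. The gap is in the final step: you take $\phi_R$ to be the \emph{quantifier-free} canonical conjunction recording only the atomic facts holding among the $k$ columns $c_1,\ldots,c_k$ themselves, and then try to extend a partial homomorphism defined on $\{c_1,\ldots,c_k\}$ to a total homomorphism $\mathbb{A}^m \to \mathbb{A}$. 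That extension claim is false, not merely technically delicate. If it were true, every relation preserved by all polymorphisms would be definable by a conjunction of atomic formulas in the free variables alone, with no existential quantifiers, and that already fails for structures appearing in this very paper. Take $\mathbb{A} = H_3$ (the triangle-free Henson graph) and $R = \uuN$, which is pp-definable by $\exists z\,(E(x_1,z)\wedge E(x_2,z))$ and hence preserved by all polymorphisms. Its two orbits have representatives $r^1 \in N$ and $r^2 \in\; =$, and the columns $c_1, c_2 \in A^2$ satisfy no nontrivial atomic formula in $\mathbb{A}^2$ (they are neither equal nor $E$-related, since $r^1 \in N$). So your $\phi_R$ is the empty conjunction and defines $A^2 \supsetneq R$; equivalently, the map $c_1 \mapsto a_1$, $c_2 \mapsto a_2$ with $(a_1,a_2) \in E$ is a partial homomorphism on $\{c_1,c_2\}$ that provably extends to no polymorphism, because any such $f$ would put $f(r^1,r^2)=(a_1,a_2) \in E$ into $R$.

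What the actual proof does instead is take the canonical conjunctive query of the whole structure $\mathbb{A}^m$: one variable for \emph{every} element of $A^m$, one conjunct for every atomic fact of $\mathbb{A}^m$, with all variables other than $x_{c_1},\ldots,x_{c_k}$ existentially quantified. Satisfying assignments of that (infinitary) formula are exactly the $m$-ary polymorphisms, so it defines $R$ on the nose; the auxiliary quantified variables are precisely what encode the obstructions your quantifier-free formula discards. One then cuts down to a finite subformula using a compactness/K\"onig argument together with the fact (again Ryll-Nardzewski) that each finite subformula defines a union of the finitely many orbits of $k$-tuples, so the decreasing chain of defined relations stabilizes at $R$. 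This is where countability and $\omega$-categoricity are genuinely used, rather than in an extension lemma for partial homomorphisms, which does not hold.
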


We say that a set of operations $F$ generates a set of operations $G$ if every $g \in G$ is in the smallest locally-closed clone containing $F$.
We wite $\overline{\textrm{Aut}(\mathbb{A})}$ to denote the clone generated by the automorphisms of the structure $\mathbb{A}$.
An operation $f$ of an oligomorphic clone $F$ is called oligopotent if $\{ g \}$ where $g(x) := f(x, \ldots, x)$ 
is generated by the permutations in $F$.
We say that a $k$-ary operation $f$ is a weak near-unanimity operation if 
$f(y, x, \ldots, x) = f(x,y,x, \ldots, x) = \cdots = f(x, \ldots, x, y)$ for all
$x,y \in A$ and that $f$ is a quasi near-unanimity operation (short, qnu-operation) if it is a weak near-unanimity and it additionally satisfies 
$f(x, \ldots, x) = f(x, \ldots, x,y)$ for all $x,y \in A$.
We say that a $k$-ary operation $f$ is a weak near-unanimity operation modulo $\overline{\textrm{Aut}(\mathbb{A})}$ if there exist $e_1, \ldots, e_k \in \overline{\textrm{Aut}(\mathbb{A})}$ such that:
$e_1(f(y, x, \ldots, x)) = e_2(f(x,y,x, \ldots, x)) = \cdots = e_k(f(x, \ldots, x, y))$ for all
$x,y \in A$.

\subsection{The constraint satisfaction problem}
\label{sect:CSP}

 We define the CSP to be a computational problem 
 whose instance $\mathcal{I}$ is a triple $(\V, \mathcal{C}, A)$ where $\V = \{ v_1, \ldots, v_n \}$ is a set of variables, $\mathcal{C}$ is a set of constraints each of which is of the form $((v_{i_1}, \ldots, v_{i_k}), R)$ where $\{v_{i_1}, \ldots, v_{i_k} \} \subseteq \V$ is \emph{the scope} of the constraint and $R \subseteq A^k$.
 The question is whether there is a solution $\mathbf{s}: \mathcal{V} \to A$ to $\mathcal{I}$ satisfying $(\mathbf{s}(v_{i_1}), \ldots, \mathbf{s}(v_{i_k})) \in R$ for all $((v_{i_1}, \ldots, v_{i_k}), R ) \subseteq \mathcal{C}$. Further, we define 
 $\csp(\mathbb{A})$ for a constraint language $\mathbb{A}$ to be the CSP restricted to instances where all
relations come from $\mathbb{A}$.~\footnote{Equivalently, one defines an instance of $\csp(\mathbb{A})$ as a
conjunction $\varphi$ of atomic formulae over the signature of $\mathbb{A}$. Then the question is whether $\varphi$ is satisfiable in $\mathbb{A}$.}

We define  the projection of 
$((v_{i_1}, \ldots, v_{i_k}), R)$ to the set  $\{ w_1, \ldots, w_l \} \subseteq \{ v_{i_1}, \ldots, v_{i_k} \}$
to be the constraint 
$(\{ w_{1}, \ldots,  w_{l} \}, R')$ where the relation $R'$ is given by 
$(R'(w_1, \ldots, w_l) \equiv \exists x_{1} \ldots \exists x_m~R(v_{i_1}, \ldots, v_{i_k}))$ and $\{ x_1, \ldots, x_m \} = \{ v_{i_1}, \ldots, v_{i_k} \} \setminus \{ w_1, \ldots, w_m \}$.
Let $W \subseteq \V$. An assignment $\textbf{a}: W \rightarrow A$ is a partial solution to $\mathcal{I}$ if $\textbf{a}$ satisfies all projections of constraints in $\mathcal{I}$ to variables in $W$.

It is very well known that adding pp-definable relations to the template does not change the complexity of the problem. 

\begin{proposition}
\label{prop:GaloisComp}
Let $\mathbb{A} = (A; R_1, \ldots,R_l)$ be a relational structure, and let $R$ be a relation that has a primitive-positive definition in $\structA$. Then $\csp(\mathbb{A})$ and $\csp(A, R,R_1,\ldots,R_l)$ are log-space equivalent.
\end{proposition}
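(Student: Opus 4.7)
The plan is to prove the two directions of log-space equivalence separately. The forward direction, reducing $\csp(\mathbb{A})$ to $\csp(A, R, R_1, \ldots, R_l)$, is essentially trivial: any instance of the former is already an instance of the latter (one simply does not use constraints involving $R$), so the identity is a log-space reduction. The substantive direction is the reverse one.

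For the reduction from $\csp(A, R, R_1, \ldots, R_l)$ to $\csp(\mathbb{A})$, I would exploit the pp-definition of $R$ directly. By hypothesis, there exist a quantifier-free conjunction $\varphi(x_1, \ldots, x_k, y_1, \ldots, y_m)$ of atomic formulas over $R_1, \ldots, R_l$ and a formula
\[
R(x_1, \ldots, x_k) \equiv \exists y_1 \cdots \exists y_m \, \varphi(x_1, \ldots, x_k, y_1, \ldots, y_m).
\]
Given an instance $\mathcal{I} = (\V, \mathcal{C}, A)$, I construct $\mathcal{I}' = (\V', \mathcal{C}', A)$ as follows: keep all constraints in $\mathcal{C}$ whose relation is some $R_i$; for each constraint $((v_{i_1}, \ldots, v_{i_k}), R) \in \mathcal{C}$, introduce a fresh tuple of variables $u_1^{(j)}, \ldots, u_m^{(j)}$ (indexed by the occurrence $j$ of $R$) and replace the constraint by the collection of atomic constraints obtained from $\varphi(v_{i_1}, \ldots, v_{i_k}, u_1^{(j)}, \ldots, u_m^{(j)})$. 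Since the pp-definition $\varphi$ is fixed (independent of $\mathcal{I}$), the number of new variables and constraints introduced is bounded by a constant times $|\mathcal{C}|$, and the transformation is easily seen to be computable in logarithmic space: it suffices to iterate through $\mathcal{C}$, maintaining counters for the variable and constraint indices used for the fresh copies of $\varphi$.

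Correctness follows from the semantics of pp-formulas. If $\solution : \V \to A$ solves $\mathcal{I}$, then for each constraint $((v_{i_1}, \ldots, v_{i_k}), R)$ we have $(\solution(v_{i_1}), \ldots, \solution(v_{i_k})) \in R$, so by the pp-definition there exist witnesses $a_1^{(j)}, \ldots, a_m^{(j)} \in A$ satisfying $\varphi$; extending $\solution$ by setting $u_i^{(j)} \mapsto a_i^{(j)}$ yields a solution to $\mathcal{I}'$. Conversely, any solution to $\mathcal{I}'$ restricted to $\V$ satisfies all original constraints (the $R_i$-constraints directly, and each $R$-constraint because its defining formula $\varphi$ is satisfied by the witnesses assigned to the fresh variables).

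I do not expect a serious obstacle here; the conceptual content is simply that existential quantifiers in pp-definitions translate to fresh variables, and conjunctions translate to multiple constraints. The only delicate point is the log-space bound, which holds because the pp-definition of $R$ has constant size, so the reduction can be performed by a streaming pass over $\mathcal{C}$ using only counters of bit-length $O(\log |\mathcal{I}|)$ to generate fresh variable names.
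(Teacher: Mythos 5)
Your proof is correct and is exactly the standard folklore argument that the paper itself invokes without proof (it states the proposition as ``very well known'' and gives no argument): unfold the pp-definition, turn existential quantifiers into fresh variables and conjunctions into separate constraints, and observe that the fixed size of the defining formula keeps the reduction in logarithmic space. The only point worth flagging is that pp-formulas may contain equality atoms, which would either need $=$ to be available as a constraint relation or be handled by identifying variables; this is harmless here since the paper explicitly assumes $\mathbb{A}$ contains $=$.
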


\subsection{Efficient entailment}
\label{sect:effent}

We say that a formula $\varphi_1$ entails a formula $\varphi_2$ both over free variables $x_1, \ldots, x_n$
if $(\forall x_1 \cdots \forall x_n~(\varphi_1(x_1,\ldots, x_n) \implies \varphi_2(x_1, \ldots, x_n)))$ is a valid 
sentence. Furthermore, we say that an $n$-ary relation $R$ entails $\varphi$ over free variables $x_1, \ldots, x_n$
if $R(x_1, \ldots, x_n)$ entails $\varphi$. These definitions are quite standard but for the purposes of this paper we need a stronger 
notion of entailment.

\begin{definition}
\label{def:effentailment}
We say that a quaternary relation $R$ efficiently entails
$\psi := (S_1(x_1, x_2) \implies S_2(x_3, x_4))$ where $S_1, S_2$ are binary relations 
if $R$ entails $\psi$ and 
$R$  contains  
\begin{enumerate}
\item \label{effentailment:forwardtuple} 
a tuple $t_1$ 
such that $(t_1[1],t_1[2]) \in S_1$ and $(t_1[3], t_1[4]) \in S_2$, and 
\item \label{effentailment:backwardtuple} 
a tuple $t_2$ 
such that $(t_2[1],t_2[2]) \notin S_1$ and $(t_2[3], t_2[4]) \notin S_2$.
\end{enumerate} 
\end{definition}

We say that a quaternary relation $R$ is a 
$[(S_1(x_1, x_2) \implies S_2(x_3, x_4)), (\varphi)]$-relation
if $R$ efficiently entails $(S_1(x_1, x_2) \implies S_2(x_3, x_4))$ and entails $\varphi$ or that 
a quaternary relation $R$ is a 
$[(S_1(x_1, x_2) \implies S_2(x_3, x_4))]$-relation
if $R$ efficiently entails $(S_1(x_1, x_2) \implies S_2(x_3, x_4))$.


\section{Canonical Operations over Reducts of Homogeneous Graphs}
\label{sect:canonicalop}

The polymorphisms that appear in the complexity classifications of CSPs of reducts of homogeneous graphs display some regularities in the sense defined below.  

Let $f : A^k \to A$, and let $G$ be a permutation group on $A$.
We say that $f$ is
canonical with respect to $G$  if for all $m \in N, \alpha_1, \ldots, \alpha_k \in G$ and m-tuples 
$a_1,\ldots,a_k$, there exists $\beta \in G$ 
such that $\beta f(\alpha_1 (a_1), \ldots , \alpha_k (a_k)) = f(a_1, \ldots , a_k)$.
 Equivalently, this means that $f$ induces an
operation $\xi^\textrm{typ}(f)$, called a $k$-ary behaviour, on orbits of m-tuples under $G$, by defining $\xi^{\textrm{typ}}(f)(O_1,\ldots,O_k )$ as the orbit of $f(a_1, \ldots, a_k)$ where $a_i$ is any m-tuple in $O_i$. 
In what follows we are mainly interested in operations that are canonical with respect to $\Aut(\homograph)$ where $\homograph$ is a homogeneous graph. Therefore we usually say simply canonical.
See~\cite{BP-reductsRamsey} for a survey on canonical operations.
Three simple binary behaviors  are presented in Figure~\ref{fig:binarybehaviors}.
According to Definition~\ref{def:minimality}, a binary injection $f$ 
such that $\xi^\textrm{typ}(f)$ is 
\begin{itemize}
\item $B_1$ is said to be of behavior max and balanced,
\item $B_2$ is said to be $E$-constant,
\item $B_3$ is said to be of type min and $N$-dominated. 
\end{itemize}

\begin{figure}
\begin{center}
\begin{tabular}{l||l|l|l|}
$B_1$ & = & E & N \\
\hline \hline
= & = & E & N \\
\hline
E & E & E & E \\
\hline
N & N & E & N \\
\hline
\end{tabular}
\quad
\begin{tabular}{l||l|l|l|}
$B_2$ & = & E & N \\
\hline \hline
= & = & E & E \\
\hline
E & E & E & E \\
\hline
N & E & E & E \\
\hline
\end{tabular}
\quad 
\begin{tabular}{l||l|l|l|}
$B_3$ &  = & E & N \\
\hline \hline
= & = & N & N \\
\hline
E & N & E & N \\
\hline
N & N & N & N \\
\hline
\end{tabular}
\end{center}
\caption{Three examplary binary behaviours: $B_1$, $B_2$, and $B_3$.}
\label{fig:binarybehaviors}
\end{figure}

We introduce the following notation. Let $R_1, \ldots , R_k \subseteq A^2$ be binary relations. 
We write $R_1 \cdots R_k$ for the binary relation on $A^k$ defined so that: 
$R_1 \cdots R_k(a_1, a_2)$ holds for $k$-tuples $a_1,a_2 \in A^k$ if and only if 
$R_i(a_1[i] , a_2[i])$ holds for all $i \in [k]$. 
Here, we can find the list of all binary behaviours of interest.
\begin{definition}
Let $(A,E)$ be a countably infinite homogeneous graph. We say that a binary injective operation $f : A^2 \rightarrow A$ is
\begin{itemize}
\item \emph{balanced} if for all $a,b \in A^2$ we have that $\EEq(a,b)$ and $\EqE(a,b)$ implies $E(f(a),f(b))$ 
as well as $\NEq(a,b)$ and $\EqN(a,b)$ implies $N(f(a),f(b))$,
\item $E$-dominated ($N$-dominated) if for all $a,b \in A^2$ with $\NeqEq(a,b)$ or $\EqNeq(a,b)$
we have that $E(f(a),f(b))$ ($N(f(a),f(b))$);
\item of behaviour \emph{min} if for all $a,b \in A^2$ with $\NeqNeq(a,b)$ we have $E(f (a), f (b))$ iff
 $\EE(a,b)$;
\item of behaviour \emph{max} if for all $a,b \in A^2$ with $\NeqNeq(a,b)$ we have $N(f (a), f (b))$ iff $\NN(a,b)$;
\item of behaviour \emph{projection} if there exists $i \in [2]$ such that for all $a,b \in A^2$ 
with $\NeqNeq(a,b)$ we have $E(f (a), f (b))$ iff $E(a[i], b[i])$,
\item of behaviour \emph{xor} if for all  $a,b \in A^2$ with $\NeqNeq(a,b)$ the relation $E(f(a),f(b))$ holds iff $\EN(a, b)$ or $\NE(a, b)$ holds; 
\item of behaviour \emph{xnor} if for all  $a,b \in A^2$ with $\NeqNeq(a,b)$ the relation $E(f(a),f(b))$ holds iff 
$\EE(a, b)$ or $\NN(a, b)$ holds;
\item \emph{E-constant} if the image of $f$ is a clique,
\item \emph{N-constant} if the image of $f$ is an independent set.
\end{itemize}
\end{definition}

\noindent
We now turn to ternary behaviours of interest.
\begin{definition}
Let $(A; E)$ be a countably infinite homogeneous graph. We say that a ternary injective operation $f : A^3 \rightarrow A$ is of behaviour
\begin{itemize}
\item \emph{majority} if for all $a,b \in D^3$ satisfying $\neq\neq\neq(a,b)$ we have that $E(f(a), f(b))$
if and only if $EEE(a,b), EEN(a,b), ENE(a,b)$, or $NEE(a,b)$,
\item \emph{minority}  if for all $a,b \in D^3$ satisfying $\neq\neq\neq(a,b)$ we have that $N(f(a), f(b))$
if and only if $NNN(a,b), EEN(a,b), ENE(a,b)$, or $NEE(a,b)$.
\end{itemize}
Furthermore, let $B$ be a binary behavior. A ternary function is hyperplanely of behaviour $B$
if the binary functions $(x,y) \rightarrow f(x,y,c), (x,z) \rightarrow f(x,c,z)$, and $(y,z) \rightarrow f(c,y,z)$
have behavior $B$ for all $c \in D$.
\end{definition}

\section{Consistency and Minimality}
\label{sect:consmin}

This section is devoted to the formal introduction of consistency and width notions. The main algorithm we are interested in is based on establishing minimality.

\begin{definition}
\label{def:minimality}
Let $l \geq k > 0$ be natural numbers. An instance $\mathcal{I} = (\V, \mathcal{C}, A)$ of the CSP is 
$(k,l)$-minimal if:
  \begin{enumerate}
  \item \label{M1} Every at most $l$-element set of variables is contained in the scope of some constraint in $\mathcal{I}$. 
  \item \label{M2} For every set $W$ with $\left| W \right| \leq k$ and every pair of constraints $C_1$ and $C_2$ in $\mathcal{C}$ whose scopes contain $W$, the projections
  of the constraints $C_1$ and $C_2$ to $W$ are the same.  
 \end{enumerate} 
We say that $\mathcal{I}$ is trivial if it contains a constraint with an empty relation. Otherwise, we say that $\mathcal{I}$ is non-trivial.
\end{definition}

As in the finite case, one may transform 
an instance $\mathcal{I}$ into an equivalent instance, i.e. with the same set of solutions by simply introducing at most $O(\left| \V \right|^l)$ new constraints so that the first condition in Definition~\ref{def:minimality} was satisfied and then 
by repeatedly removing orbits of tuples from constraints until the second condition is satisfied. Similarly to the finite CSP we have the following.

\begin{proposition}
\label{prop:minimality}
Let $\mathbb{A}$ be an $\omega$-categorical relational structure. 
Then for every instance $\mathcal{I}$ of $\csp(\mathbb{A})$ and $l \geq k > 0$ there exists an instance $\mathcal{I}'$ of the CSP  with the same sets of solutions as $\mathcal{I}$ 
such that $\mathcal{I}'$ is $(k,l)$-minimal. 

For fixed $(k,l)$ and $\mathbb{A}$, the process of establishing $(k,l)$-minimality, i.e., transfoming $\mathcal{I}$ into $\mathcal{I}'$ takes time $O(\left| \V \right|^m)$ where $m = \max(l, \textrm{arity}(\structA))$ is the maximum of $l$ and the greatest arity of a relation in $\structA$. If $\mathcal{I}'$ is trivial, then both $\mathcal{I}$ and $\mathcal{I}'$ 
have no solutions.
\end{proposition}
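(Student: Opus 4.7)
The plan is a standard two-phase procedure that preserves the solution set at every step. Phase~1 enforces condition~\ref{M1} of Definition~\ref{def:minimality} by adding constraints, and Phase~2 iteratively tightens the relations in existing constraints to enforce condition~\ref{M2}.

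For Phase~1, for every $W \subseteq \V$ with $|W| \le l$ not already contained in the scope of some constraint of $\mathcal{I}$, add the constraint $(W, A^{|W|})$. Since these added relations are full cartesian products, no solutions are lost. The number of such subsets is at most $\binom{|\V|}{l} = O(|\V|^l)$, and together with the at most $O(|\V|^{\textrm{arity}(\mathbb{A})})$ original constraints this yields $O(|\V|^m)$ constraints in total, each with scope of size at most $m = \max(l,\textrm{arity}(\mathbb{A}))$.

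For Phase~2, iterate the following step: while there exist two constraints $C_1, C_2$ whose scopes share a set $W$ of size at most $k$ on which their projections disagree, replace the relation of each of $C_1$ and $C_2$ by the set of its tuples whose restriction to $W$ lies in the intersection of the two projections to $W$. Any assignment satisfying both $C_1$ and $C_2$ projects to $W$ inside each of the two projections, and hence inside their intersection, so solutions are preserved. When the loop halts, every two constraints agree on every shared $k$-subset of variables, which is condition~\ref{M2}; condition~\ref{M1} established in Phase~1 is preserved throughout because constraints are never removed, only their relations are shrunk.

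Termination and the complexity bound rest on $\omega$-categoricity. Every relation arising during the procedure is pp-definable in $\mathbb{A}$, since it is obtained by intersections and projections from relations of $\mathbb{A}$; in particular it is preserved by $\Aut(\mathbb{A})$ and is therefore a union of orbits of tuples under $\Aut(\mathbb{A})$. By the Ryll-Nardzewski theorem, the number of orbits of $r$-tuples under $\Aut(\mathbb{A})$ for $r \le m$ is bounded by a constant $N$ depending only on $\mathbb{A}$ and $m$, not on $|\V|$. Each tightening step strictly removes at least one orbit from some relation, so the total number of tightenings is $O(N \cdot |\V|^m) = O(|\V|^m)$; with a worklist that records the constraint pairs remaining to be rechecked, the amortised cost per tightening of finding the next offending triple $(C_1,C_2,W)$ is constant in $|\V|$, giving overall running time $O(|\V|^m)$. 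The main obstacle is precisely this complexity analysis: without the $\omega$-categorical orbit bound there is no $|\V|$-independent cap on how many times a single constraint can be tightened. Finally, since solution sets are preserved at every step, if $\mathcal{I}'$ contains a constraint with the empty relation then no assignment satisfies it, and consequently neither $\mathcal{I}$ nor $\mathcal{I}'$ has a solution.
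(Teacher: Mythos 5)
Your proposal is correct and follows essentially the same route as the paper: add full-product constraints to secure condition~\ref{M1}, then repeatedly intersect projections of overlapping constraints, with termination and the $O(|\V|^m)$ bound coming from the fact that each tightening removes at least one $\Aut(\mathbb{A})$-orbit and, by oligomorphicity (Ryll-Nardzewski), the number of orbits of tuples of bounded length is a constant independent of $|\V|$. Your write-up is in fact slightly more explicit than the paper's about why the intermediate relations are unions of orbits and about how the iteration is scheduled, but the underlying argument is identical.
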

\noindent
 We are now ready to define the relational width.
\begin{definition}
We say that $\mathbb{A}$ has relational width $(k,l)$ if and only if 
$\mathcal{I}$ has a solution provided any $(k,l)$-minimal instance of the CSP
equivalent to $I$ is non-trivial. 
We say that $\mathbb{A}$ has (relational) 
bounded width~\footnote{We note that the definition of bounded width provided in~\cite{Datalog-omegacat} is equivalent to ours.}  if there exist $(k,l)$ such that $\mathbb{A}$ has relational width $(k,l)$.
\end{definition}



   Finite structures with bounded relational width admit an algebraic characterization~\cite{BoundedWidth}. 
It is known that a finite structure $\structA$ 
has bounded (relational) width if and only if it has a four-ary polymorphism $f$ and a ternary polymorphism $g$ that are weak near-unanimity operations and such that $f(y,x,x,x) = g(y,x,x)$ for all $x, y \in A$.
We have a similar sufficient condition for reducts of finitely bounded homogeneous structures.

\begin{theorem}
\label{thm:BodMot}
(\cite{UnaryDichotomy})
Let $\mathbb{A}$ be a finite-signature reduct of a finitely bounded homogeneous structure $\mathbb{B}$. Suppose that $\mathbb{A}$ has a four-ary polymorphism $f$ and a ternary polymorphism $g$ that are canonical with respect to $\Aut(\mathbb{B})$ and are weak near-unanimity operations modulo $\overline{\Aut(\mathbb{B})}$, and such that there are operations $e_1,e_2 \in \overline{\Aut(\mathbb{B})}$ with $e_1(f(y,x,x,x)) = e_2(g(y,x,x))$ for all $x, y \in A$. Then $\csp(\mathbb{A})$ has bounded relational width.
\end{theorem}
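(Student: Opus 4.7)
The plan is to reduce Theorem~\ref{thm:BodMot} to the finite Barto--Kozik characterization of bounded width by quotienting by orbits of tuples under $\Aut(\structB)$. Set $m := \max(\textrm{arity}(\structA)+1, \maxbound_{\structB}, 3)$. By $\omega$-categoricity the set $O_m$ of orbits of $m$-tuples of $\structB$ is finite, and canonicity of $f$ and $g$ ensures that they act componentwise on these orbits, inducing operations $f^{\textrm{typ}}$ and $g^{\textrm{typ}}$ on $O_m$. I would assemble a finite auxiliary template $\structA^{\textrm{typ}}$ over $O_m$ whose relations encode, for each scope of width $m$, which orbits of $m$-tuples are realized in the appropriate pp-definable relation of $\structA$, together with projection-compatibility relations among overlapping $m$-tuples of variables.

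The main steps are as follows. First, I would verify that $f^{\textrm{typ}}$ and $g^{\textrm{typ}}$ are weak near-unanimity operations on $O_m$ satisfying the linking identity $f^{\textrm{typ}}(y,x,x,x) = g^{\textrm{typ}}(y,x,x)$. Here the modulo-$\overline{\Aut(\structB)}$ qualification needs to be dissolved: every $e \in \overline{\Aut(\structB)}$ is a pointwise limit of automorphisms of $\structB$ and hence, by $\omega$-categoricity, an elementary self-embedding; elementary embeddings preserve complete types and therefore preserve orbits, so post-composing by $e_1$ or $e_2$ has no effect at the level of $O_m$, and the modular identities descend to genuine equalities on $O_m$. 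Hence, by the Barto--Kozik theorem for finite structures~\cite{BoundedWidth}, the template $\structA^{\textrm{typ}}$ has relational width $(2,3)$. Second, given an instance $\instance$ of $\csp(\structA)$, I would establish $(2m,3m)$-minimality on $\instance$, and observe that its restrictions to pairs and triples of $m$-subsets of $\V$ form a $(2,3)$-minimal non-trivial instance of $\csp(\structA^{\textrm{typ}})$, which therefore admits a solution assigning each $m$-subset of $\V$ an orbit in $O_m$. Third, I would lift this orbit-valued solution to a map $\V \to B$ by coherently picking a witness tuple for every assigned orbit, using homogeneity of $\structB$ to realize the pieces and finite-boundedness to guarantee that they fit together globally.

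The principal technical difficulty I anticipate is this third step: building a single global map into $\structB$ that simultaneously realizes all the orbit assignments with compatible overlaps. The threshold $m \ge \maxbound_{\structB}$ is tuned precisely for this lifting, because an assignment $\V \to B$ embeds into $\structB$ iff no subset of size at most $\maxbound_{\structB} \le m$ induces a forbidden substructure in $\bounds_{\structB}$, and the absence of such a configuration is recorded exactly by the orbit-valued solution on $m$-subsets together with the projection-compatibility relations of $\structA^{\textrm{typ}}$. A secondary subtle point, which must be tracked throughout the first step, is that one really has to analyze the locally closed clone generated by $f$, $g$, and $\Aut(\structB)$ rather than by $f$ and $g$ alone, so that every relation of $\structA$ is still preserved when the identities ensuring the hypothesis of the finite bounded width theorem are applied.
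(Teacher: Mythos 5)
Your proposal is correct and follows essentially the same route as the proof of this cited theorem in~\cite{UnaryDichotomy}, whose skeleton the paper itself reproduces in the appendix when deriving Corollary~\ref{cor:relwidth}: pass to the finite structure of $m$-types (orbits of $m$-tuples) with $m \geq \max(\textrm{arity}(\mathbb{A})+1, \maxbound_{\mathbb{B}}, 3)$, note that the canonical weak near-unanimity operations modulo $\overline{\Aut(\mathbb{B})}$ descend to genuine weak near-unanimity operations on types, invoke the finite bounded-width characterization, and lift a type-valued solution back via finite boundedness. The auxiliary template, the projection-compatibility relations, and the lifting argument all match the paper's construction of $T_{\mathbb{B},m}(\mathbb{A})$ and Fact~1 of the appendix, so this is the intended proof.
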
 

A slight change in the proof of the above theorem gives us the upper bound  for relational width of infinite structures under consideration.

\begin{corollary}
\label{cor:relwidth}
Let $\mathbb{A}$ be a finite-signature reduct of a finitely bounded homogeneous structure $\mathbb{B}$. Suppose that $\mathbb{A}$ has a four-ary polymorphism $f$ and a ternary polymorphism $g$ that are canonical with respect to $\Aut(\mathbb{B})$, that are weak near-unanimity operations modulo $\overline{\Aut(\mathbb{A})}$, and such that there are operations $e_1,e_2 \in \overline{\Aut(\mathbb{B})}$ with $e_1(f(y,x,x,x)) = e_2(g(y,x,x))$ for all $x, y \in A$. Then $\mathbb{A}$ has 
relational width $(2m, 3m)$ where $m = \max(\textrm{arity}(\mathbb{A})+1,
\textrm{arity}(\mathbb{B})+1, \mathbb{L}_{\mathbb{B}}, 3)$.
\end{corollary}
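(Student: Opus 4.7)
The plan is to trace through the proof of Theorem~\ref{thm:BodMot} and extract explicit minimality parameters from it, since the corollary only records quantitative information that is already implicit in that argument. The proof of Theorem~\ref{thm:BodMot} establishes solvability of $\csp(\mathbb{A})$ by replaying a finite-style $(2,3)$-decomposability argument on a pp-enrichment $\mathbb{A}^{*}$ of $\mathbb{A}$ by all pp-definable relations of bounded arity. My proof proposal replays those same steps while keeping every auxiliary scope within a uniform arity bound $m$.

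First I would fix $m = \max(\textrm{arity}(\mathbb{A})+1,\ \textrm{arity}(\mathbb{B})+1,\ \mathbb{L}_{\mathbb{B}},\ 3)$. Each summand corresponds to a specific place where the proof needs ``room'' inside a scope. The term $\textrm{arity}(\mathbb{A})+1$ ensures that, after closing scopes under condition~\ref{M1} of Definition~\ref{def:minimality}, every original constraint appears as a projection of some scope. The term $\textrm{arity}(\mathbb{B})+1$ ensures that the atomic relations of the base structure $\mathbb{B}$ can be freely used inside a scope. The term $\mathbb{L}_{\mathbb{B}}$ guarantees that every forbidden substructure of $\mathbb{B}$ fits inside a single scope, so that finite boundedness of $\mathbb{B}$ can be enforced locally. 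Finally, $m \geq 3$ accommodates the arities of the near-unanimity operations $f$ and $g$ appearing in the hypothesis.

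Second I would replay the combinatorial core of the proof of Theorem~\ref{thm:BodMot}: using canonicity of $f$ and $g$ modulo $\overline{\Aut(\mathbb{B})}$ together with the weak-near-unanimity identities, one shows that the pp-enrichment $\mathbb{A}^{*}$ of $\mathbb{A}$ by all pp-definable relations of arity at most $m$ has $(2,3)$-minimality as a relational-width bound: every assignment to three scopes that is pairwise consistent extends to a consistent assignment on their union. Translating $(2,3)$-minimality on $\mathbb{A}^{*}$ back to minimality on the original signature costs a factor of $m$ in each coordinate, since each scope of $\mathbb{A}^{*}$ covers at most $m$ variables of the original instance. This yields $(2m, 3m)$-minimality, as claimed.

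The main obstacle is bookkeeping: one must verify that no intermediate construction in the proof of Theorem~\ref{thm:BodMot} silently requires scopes larger than $m$. In particular, the Ramsey-theoretic averaging used to reduce arbitrary polymorphisms to canonical ones, and the inductive invocations of pp-definability, must both be carried out uniformly so that every auxiliary predicate stays within arity $m$. Once this bookkeeping is checked, the bound $(2m, 3m)$ drops out of the original argument without any new content, and the corollary is established.
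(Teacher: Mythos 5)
Your high-level plan --- re-run the machinery behind Theorem~\ref{thm:BodMot} (i.e.\ Theorem~3.1/4.10 of~\cite{UnaryDichotomy}) while tracking the minimality parameters, and pay a factor of $m$ because each auxiliary scope covers $m$ original variables --- is indeed the route the paper takes. But the object to which you apply the $(2,3)$-width collapse is misidentified, and this is not mere bookkeeping. You claim that the pp-enrichment $\mathbb{A}^{*}$ of $\mathbb{A}$ by all pp-definable relations of arity at most $m$ has relational width $(2,3)$. That structure has the same infinite domain $A$ as $\mathbb{A}$, and there is no theorem granting relational width $(2,3)$ to infinite-domain structures with WNU polymorphisms --- the collapse of the relational width hierarchy to $(2,3)$ is a finite-domain result, and the whole point of this paper is that it \emph{fails} over infinite domains (width $(2,\maxbound_{\homograph})$ with $\maxbound_{\homograph}$ arbitrarily large). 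Asserting width $(2,3)$ for $\mathbb{A}^{*}$ therefore begs the question. The correct move, and the one the paper makes, is to pass to the \emph{finite} type structure $T_{\mathbb{B},m}(\mathbb{A})$ whose domain is the set of $m$-types of $\mathbb{B}$, whose variables are ($m$-element images of) increasing maps $[m]\to\V$, and whose relations are the unary type constraints and the binary compatibility relations $\mathrm{Comp}_{i,j}$. Lemma~4.7 of~\cite{UnaryDichotomy} gives this finite structure the four-ary and ternary WNU polymorphisms, so the finite-domain theory yields relational width $(2,3)$ \emph{there}.

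With that fixed, the remaining content is exactly the step you wave away as bookkeeping: given a non-trivial $(2m,3m)$-minimal instance $\mathcal{I}'_A$ equivalent to $\mathcal{I}_A$, one must explicitly build an instance $\mathcal{I}'_T$ over $T_{\mathbb{B},m}(\mathbb{A})$ (using the sets $\mathrm{Types}(C,u_1,\ldots,u_k)$ of tuples of $m$-types realized in the constraints of $\mathcal{I}'_A$ with scopes of sizes up to $3m$) and verify three things: that $\mathcal{I}'_T$ is equivalent to the standard translation $\mathcal{I}_T$, that it is non-trivial, and that it is $(2,3)$-minimal. Each verification leans on a specific feature of the construction (e.g.\ non-triviality of the $\mathrm{Comp}$-constraints follows because the $\mathrm{Types}$ relations are contained in them; $(2,3)$-minimality of $\mathcal{I}'_T$ is inherited from $(2m,3m)$-minimality of $\mathcal{I}'_A$ precisely because a triple of type-variables spans at most $3m$ original variables). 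Two smaller points: the Ramsey-theoretic canonization you mention is not needed, since canonicity of $f$ and $g$ is part of the hypothesis; and your gloss of relational width $(2,3)$ as ``every pairwise consistent assignment to three scopes extends'' describes strict width/decomposability, not relational width, which only asserts that non-trivial $(2,3)$-minimal instances have \emph{some} solution.
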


We now use Corollary~\ref{cor:relwidth}  to provide the upper bound of the relational width for reducts of homogeneous graphs preserved by binary canonical operations considered in~\cite{BodirskyP15, equiv-csps,homographs-arxiv}.

\begin{proposition}
\label{prop:binRW}
Let  $\mathbb{A}$ be a reduct of a countably infinite homogeneous graph $\homograph$ preserved 
by a binary injection:
\begin{enumerate}
\item  of behaviour max which is either balanced or $E$-dominated, or 
\item   of  behaviour min which is either balanced or $N$-dominated, or 
\item   which is  $E$-constant, or
\item   which is  $N$-constant.
\end{enumerate}
Then it has relational width $(2m, 3m)$ where $m = \max(\textrm{arity}(\mathbb{A})+1, \maxbound_{\homograph}, 3)$.
\end{proposition}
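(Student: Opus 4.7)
The plan is to apply Corollary~\ref{cor:relwidth} with $\mathbb{B} := \homograph$. Note that $\textrm{arity}(\homograph)+1 = 3$ and $\maxbound_{\mathbb{B}} = \maxbound_{\homograph}$, so the value of $m$ produced by that corollary reduces to $\max(\textrm{arity}(\mathbb{A})+1, \maxbound_{\homograph}, 3)$, which matches the statement. Moreover, since $\mathbb{A}$ is a first-order expansion of $\homograph$ we have $\Aut(\mathbb{A}) = \Aut(\homograph)$, and hence $\overline{\Aut(\mathbb{A})} = \overline{\Aut(\homograph)}$ and the orbits of tuples under the two groups coincide. For each of the four cases I will construct, out of the given binary injection $f$, a $4$-ary polymorphism $F$ and a $3$-ary polymorphism $G$ of $\mathbb{A}$ that are canonical with respect to $\Aut(\homograph)$, are weak near-unanimity operations modulo $\overline{\Aut(\mathbb{A})}$, and satisfy $e_1(F(y,x,x,x)) = e_2(G(y,x,x))$ for suitable $e_1,e_2 \in \overline{\Aut(\mathbb{A})}$.

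Uniformly across the four cases, I will take $G(x_1,x_2,x_3) := f(f(x_1,x_2),x_3)$ and $F(x_1,x_2,x_3,x_4) := f(f(x_1,x_2), f(x_3,x_4))$. Compositions of polymorphisms are polymorphisms, so $F$ and $G$ preserve $\mathbb{A}$; compositions of canonical operations are canonical with respect to the same group, so $F$ and $G$ are canonical with respect to $\Aut(\homograph)$. A direct inspection of the behaviour tables then shows that $F$ and $G$ inherit the relevant behaviour from $f$: in Cases~1 and~2 both $F$ and $G$ still have behaviour max or behaviour min on pairwise distinct tuples (together with an analogous inherited action on diagonal tuples, coming from the balanced or $E$-/$N$-dominated clauses of $f$), while in Cases~3 and~4 their images remain contained in the same clique or independent set in which the image of $f$ already sits.

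For the weak near-unanimity property of $G$ modulo $\overline{\Aut(\mathbb{A})}$, I need single functions $e_1,e_2,e_3 \in \overline{\Aut(\mathbb{A})}$ with $e_1(G(y,x,x)) = e_2(G(x,y,x)) = e_3(G(x,x,y))$ for all $x,y \in A$. For each fixed $(x,y)$ the three tuples $(y,x,x), (x,y,x), (x,x,y)$ lie in a common orbit of $\Aut(\homograph)$ on $A^3$; by canonicity of $G$ their images lie in a common orbit of $\Aut(\homograph)$ on $A$, which in fact is the unique unary orbit since $\homograph$ is vertex-transitive. The three maps $(x,y)\mapsto G(y,x,x)$, $(x,y)\mapsto G(x,y,x)$, $(x,y)\mapsto G(x,x,y)$ are themselves canonical and take the same orbit-type value on each orbit of pairs; a standard back-and-forth argument using the oligomorphism of $\Aut(\homograph)$ then yields functions $e_i \in \overline{\Aut(\mathbb{A})}$ realising the three equalities uniformly in $(x,y)$. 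The same argument applied to $F$ supplies its weak near-unanimity, and a further comparison between $F(y,x,x,x)$ and $G(y,x,x)$ (which lie in the same unary orbit by vertex-transitivity) produces the $e_1,e_2$ needed for the linking identity.

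The main obstacle will be precisely this uniformity step: constructing single functions $e_i \in \overline{\Aut(\mathbb{A})}$ that witness the orbit equalities simultaneously for every $(x,y) \in A^2$, not only pointwise. The underlying back-and-forth construction is standard in the $\omega$-categorical canonical-operation toolkit, but it must be verified for the specific operations at hand, and the behaviour-table bookkeeping is different in each case. Cases~3 and~4 will be the easiest, since the values of $F$ and $G$ already collapse into the single orbit of a transitive action on the clique or independent set in the image. Cases~1 and~2 will require a finer analysis to confirm that the coordinate-equality clauses of the behaviour of $f$ (balanced versus $E$-dominated in Case~1, balanced versus $N$-dominated in Case~2) preserve the orbit coincidences needed for $F$ and $G$ to form a weak near-unanimity pair with the required linking identity.
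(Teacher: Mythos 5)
Your proposal is correct and follows essentially the same route as the paper: invoke Corollary~\ref{cor:relwidth} with $\mathbb{B}=\homograph$ and build the $4$-ary and $3$-ary weak near-unanimity operations by composing the given binary injection, checking that all the relevant identifications share the same behaviour (semilattice on $\{E,N,=\}$ in the max/min cases, constant image in the other two). The only differences are cosmetic --- the paper uses the left-associated composition $h(h(h(x_1,x_2),x_3),x_4)$ where you use $f(f(x_1,x_2),f(x_3,x_4))$, which works equally well by idempotence and commutativity of the behaviour --- and note that the statement assumes $\mathbb{A}$ is a reduct (not a first-order expansion) of $\homograph$, which is harmless since $\Aut(\homograph)\subseteq\Aut(\mathbb{A})$ still gives the required identities modulo $\overline{\Aut(\mathbb{A})}$.
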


In Section~\ref{sect:binaryinjections},  we use our approach to show that the exact relational width of
structures  under consideration in Proposition~\ref{prop:binRW}
 is $(2, \maxbound_{\homograph})$. The same is proved 
 for first-order expansions of homogeneous graphs with bounded strict width.

Strict width is defined as follows. A $(k,l)$-minimal instance $\mathcal{I}$ of the CSP 
is called globally consistent, if every partial solution of $\mathcal{I}$  can be extended to 
a total solution of~$\mathcal{I}$. 

\begin{definition}
We say that $\mathbb{A}$ has strict width $k$ if
 for some $l \geq k \geq 2$ all instances of $\csp(\mathbb{A})$ that are $(k,l)$-minimal are globally consistent. We say that $\mathbb{A}$
has bounded strict width if it has strict width $k$ for some $k$.~\footnote{Our definition of strict width slightly varies from a definition in~\cite{Datalog-omegacat}  but again both definitions are equivalent.} 
\end{definition}
We have the following algebraic  characterization of $\omega$-categorical structures with bounded strict width.

\begin{theorem}
\cite{Datalog-omegacat,OligoClone}
\label{thm:strictwidth}
Let $\mathbb{A}$ be an $\omega$-categorical language. Then  the following are equivalent.
\begin{enumerate}
 \item $\mathbb{A}$ has strict width $k$.
 \item $\mathbb{A}$ has an oligopotent $(k+1)$-ary quasi near-unanimity operation as a polymorphism.
\end{enumerate}
\end{theorem}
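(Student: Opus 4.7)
The plan is to prove the equivalence in two directions, following the pattern of the classical Feder--Vardi argument but adapted to the $\omega$-categorical setting where the finitary combinatorics must be replaced by a compactness/Ryll-Nardzewski style argument.

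For the direction $(2) \Rightarrow (1)$, suppose $\mathbb{A}$ is preserved by an oligopotent $(k+1)$-ary quasi near-unanimity operation $f$. I would first choose $l$ large enough that $(k,l)$-minimality captures the orbits of $(k+1)$-tuples under $\Aut(\mathbb{A})$; by Ryll-Nardzewski this number is finite, so $l$ depends only on $k$ and $\mathbb{A}$, not on the instance. Given a $(k,l)$-minimal instance $\mathcal{I}$ of $\csp(\mathbb{A})$, I would show by induction on $|W|$ that any partial solution $\mathbf{a}\colon W \to A$ extends to any larger set $W \cup \{v\}$. The inductive step at $|W|=k$ is the crux: for each $i \in [k+1]$, pick a "local" witness value $c_i$ for $v$ that is consistent with $\mathbf{a}$ on all but the $i$-th coordinate of $W \cup \{v\}$; such witnesses exist by $(k,l)$-minimality because the relevant $k$-subset sits inside some constraint's scope. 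Now apply $f$ componentwise to the $k+1$ tuples formed from these witnesses to produce a single value $f(c_1,\dots,c_{k+1})$ for $v$. The quasi near-unanimity identities guarantee that on each of the $k$ variables of $W$ we recover $f(\mathbf{a}(w),\dots,\mathbf{a}(w)) = g(\mathbf{a}(w))$ where $g(x):=f(x,\dots,x)$; oligopotence lets us compose with an automorphism of $\mathbb{A}$ to bring $g(\mathbf{a}(w))$ back to $\mathbf{a}(w)$, yielding an honest extension of $\mathbf{a}$. Iterating over the (countably many) remaining variables via a standard back-and-forth using $\omega$-categoricity produces a total solution.

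For the direction $(1) \Rightarrow (2)$, I would build the $(k+1)$-ary polymorphism by running the strict-width algorithm on a specific ``identity-enforcing'' instance. Fix any two-element subset $\{x,y\} \subseteq A$ (and then take direct products over such pairs). Form the instance $\mathcal{I}_{x,y}$ on the variable set $A^{k+1}$ whose constraints are: for every relation $R$ in $\mathbb{A}$ of arity $r$ and every $r$-tuple of elements $(b_1,\dots,b_r) \in R$, impose $R$ on the corresponding variables of $\mathcal{I}_{x,y}$. This instance has $\pi_i$ (the $i$-th projection $A^{k+1} \to A$) as a solution, and by $\omega$-categoricity its $(k,l)$-minimal version is equivalent to it. The qnu identities $f(y,x,\dots,x) = \cdots = f(x,\dots,x,y) = f(x,\dots,x)$ amount to a partial assignment on the $k+2$ ``identity witness'' tuples of $A^{k+1}$ that agrees on the distinguished variable; $(k,l)$-minimality shows this partial assignment is a consistent partial solution, and strict width $k$ extends it to a global solution, which is precisely an operation $f$ satisfying the qnu identities on $\{x,y\}$. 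Taking the direct product over all pairs $(x,y)$ and invoking a standard compactness argument (König's lemma, enabled by the $\omega$-categoricity and finite orbit counts) produces a single $(k+1)$-ary polymorphism $f$ satisfying the qnu identities on all of $A$. Oligopotence of $f$ is ensured because the diagonal $x \mapsto f(x,\dots,x)$ is forced by the construction to preserve every orbit of $\Aut(\mathbb{A})$, hence lies in $\overline{\Aut(\mathbb{A})}$.

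The main obstacle is the direction $(2) \Rightarrow (1)$: one has to pin down the right value of $l$, and in particular verify that a single finite $l$ (depending only on $k$ and the orbit structure) suffices, so that partial solutions on \emph{every} $k$-subset extend \emph{in the same minimal instance}. The delicate point is the interplay between oligopotence and the choice of witnesses: one must ensure that the automorphism correcting $g(\mathbf{a}(w))$ back to $\mathbf{a}(w)$ can be chosen \emph{uniformly} across all $w \in W$, which is where $\omega$-categoricity and the orbit-counting enter in an essential way. In the infinite setting one also needs a compactness argument to perform the iterated extension process, whereas in the finite case a naive loop suffices.
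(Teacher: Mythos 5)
First, a point of comparison: the paper does not prove this theorem at all --- it is imported verbatim from \cite{Datalog-omegacat,OligoClone} and used as a black box (e.g.\ in Propositions~\ref{prop:ClausesQNU} and~\ref{prop:qnuExTwoEquivClass} and in the proof of Theorem~\ref{thm:mainresult}), so there is no in-paper argument to measure yours against. Your overall architecture --- the Feder--Vardi-style local-extension argument for $(2)\Rightarrow(1)$ and a canonical ``indicator'' instance plus compactness for $(1)\Rightarrow(2)$ --- is indeed the route taken in the cited sources.

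Two steps, however, have genuine gaps as written. In $(2)\Rightarrow(1)$, applying $f$ componentwise to the witnesses does not obviously yield a partial solution: each witness $\mathbf{b}_i$ is undefined (or arbitrarily extended) at $w_i$, so it need not satisfy constraints whose scope contains $w_i$, and hence the tuples you feed into $f$ need not all lie in the constraint relations. The classical repair is the Baker--Pixley fact that every relation preserved by a $(k+1)$-ary (quasi) near-unanimity operation is $k$-decomposable, after which one only has to verify the $k$-ary projections of the combined assignment; your sketch omits this, and without it the step ``yielding an honest extension of $\mathbf{a}$'' fails (your oligopotence correction, by contrast, is fine: $g(x)=f(x,\dots,x)$ agrees with a single automorphism on any finite set, so the correction is uniform on $W$). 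In $(1)\Rightarrow(2)$, the closing claim that the diagonal of the constructed $f$ ``is forced by the construction to preserve every orbit'' does not follow: a solution of the canonical instance is merely a polymorphism, so its diagonal preserves the relations of $\mathbb{A}$ and everything pp-definable from them, but orbits of tuples are first-order definable and in general not pp-definable. Obtaining oligopotence needs extra work, e.g.\ running the construction over the expansion of $\mathbb{A}$ by all orbits of tuples (and checking that this expansion still has strict width $k$), or passing to a model-complete core. Relatedly, you should justify that the qnu-identity pattern admits a common value $c$ making it a partial solution of the minimized canonical instance; in the finite idempotent setting one takes $c=x$, but that is not available for free here.
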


For a $(2,k)$-minimal instance over variables $\V = \{ v_1,\ldots, v_n \}$ we write $\instance_{i,j}$ with $i,j \in [n]$ to denote a subset of $\{ E, N, = \}$ such that the projection of all constraints having $v_i, v_j$ in its scope to $\{ v_i, v_j \}$
equals $\bigcup_{O \in \instance_{i,j}} O$. We will say that an instance is \emph{simple}
if $\left| \instance_{i,j} \right| = 1$ for all $i,j \in [n]$.

We will now show that a simple non-trivial $(2, \maxbound_{\homograph})$-instance of $\csp(\structA)$ for a first-order  expansion $\structA$ of a homogeneous graph $\homograph$ always has a solution and that this amount of consistency is necessary.


\begin{observation}
\label{obs:ENEqualityWidth}
Let $\mathcal{I}$ be a simple non-trivial $(2, \maxbound_{\homograph})$-minimal instance of the CSP equivalent to an instance 
of $\csp(\homograph')$ where $\homograph'$ is the expansion of $\homograph$ containing all orbitals pp-definable in $\homograph$.
Then $\mathcal{I}$ has a solution.

On the other hand, for every  homogeneous graph $\homograph$ there exists a simple non-trivial $(1, \maxbound_{\homograph})$-minimal instance $\mathcal{I}_1$ equivalent to an instance of $\csp(\homograph')$
and a simple non-trivial $(2, \maxbound_{\homograph} - 1)$-minimal instance $\mathcal{I}_2$ 
equivalent to an instance of  $\csp(\homograph')$  that have no solutions.
\end{observation}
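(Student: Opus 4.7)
The two halves of the observation are handled independently.

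For the first (upper-bound) half, given a simple non-trivial $(2,\maxbound_{\homograph})$-minimal instance $\instance$ of $\csp(\homograph')$, the plan is to extract from $\instance$ a graph on a quotient of $\V$ and embed it into $\homograph$. Simplicity assigns to each pair $(v_i,v_j)$ a unique orbital $O_{i,j} \in \{E,N,=\}$ with $\instance_{i,j}=\{O_{i,j}\}$. I would first verify that the set $\{(v_i,v_j) \mid O_{i,j}=\;=\}$ is an equivalence relation on $\V$: reflexivity and symmetry are immediate, and transitivity follows from $(2,3)$-minimality (available as $\maxbound_{\homograph}\ge 3$) applied to any three variables $v_i,v_j,v_k$ with $O_{i,j}=O_{j,k}=\;=$, because a tuple from a constraint whose scope contains them (non-triviality guarantees existence) must identify the three corresponding coordinates. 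An analogous three-variable argument shows $O_{i,j}$ depends only on the classes $[v_i],[v_j]$, so the quotient produces a loopless undirected graph $G$ on $\V/{\sim}$ with an edge between $[v_i]$ and $[v_j]$ iff $O_{i,j}=E$.

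Next I would show $G$ embeds into $\homograph$ via finite boundedness: it suffices to rule out, for each bound $H \in \bounds_{\homograph}$, an induced copy of $H$ in $G$. Any such copy uses at most $\maxbound_{\homograph}$ equivalence classes; picking representatives $v_{i_1},\ldots,v_{i_m}$ with $m\le\maxbound_{\homograph}$, the first condition of $(2,\maxbound_{\homograph})$-minimality supplies a constraint $C \in \mathcal{C}$ whose scope contains them, and non-triviality provides a tuple $\vec a \in C$. Simplicity forces the pairwise orbitals of the projection of $\vec a$ to these coordinates to be exactly $O_{i_p,i_q}$, so this projection embeds $H$ into $\homograph$, contradicting $H \in \bounds_{\homograph}$. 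Hence $G$ embeds into $\homograph$ via some map $h$, and setting $\solution(v_i):=h([v_i])$ solves $\instance$: the binary orbital constraints hold by construction, and the higher-arity constraints produced during the $(2,\maxbound_{\homograph})$-minimality step are conjunctions of their binary projections and are therefore satisfied too.

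For the second (lower-bound) half, I would exhibit explicit witnesses built directly from the bounds of $\homograph$. For $\instance_2$: when a minimal bound $H$ of size $\maxbound_{\homograph}\ge 4$ exists (e.g.\ the $k$-clique in $H_k$ for $k\ge 4$, or the large independent set/clique in $C_n^s$), use one variable per vertex of $H$ with binary constraints matching its orbitals and, over each $(\maxbound_{\homograph}-1)$-element subset, the conjunctive $(\maxbound_{\homograph}-1)$-ary constraint, which is non-empty by minimality of $H$; when $\maxbound_{\homograph}=3$ and all bounds are smaller (the random graph and its complement), use three variables with $v_1{=}v_2$, $v_2{=}v_3$, $v_1 N v_3$. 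In each case the instance is simple, non-trivial, $(2,\maxbound_{\homograph}-1)$-minimal, and unsolvable because $H$ (or the equality pattern) does not embed into $\homograph$. For $\instance_1$, I would keep the same binary skeleton and add a single $\maxbound_{\homograph}$-ary constraint whose non-empty relation realises a pattern embedding into $\homograph$ but whose binary projections conflict with the binary constraints; the outcome is $(1,\maxbound_{\homograph})$-minimal, each binary constraint remains a single orbital, non-triviality holds, and no global assignment can reconcile a binary orbital with the conflicting projection of the added constraint. The main obstacle throughout is the iterated use of $(2,3)$-minimality on triples---needed both for the quotient graph to be well-defined and for the tuple $\vec a$ to faithfully realise the induced substructure of $G$ on the chosen representatives---which is precisely why the bound $\maxbound_{\homograph}\ge 3$ cannot be lowered.
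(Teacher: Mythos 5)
Your argument is correct and matches the paper's proof essentially step for step: the quotient by the equality classes (the paper's Observation on $\sim$), the appeal to finite boundedness combined with $(2,\maxbound_{\homograph})$-minimality and non-triviality to rule out forbidden subgraphs in the quotient, and the same witnesses for $\mathcal{I}_2$ (a maximal forbidden subgraph when $\maxbound_{\homograph}>3$, and the pattern $v_1=v_2$, $v_2=v_3$ with an $E$-constraint on $(v_1,v_3)$ when $\maxbound_{\homograph}=3$). The only divergence is $\mathcal{I}_1$, where the paper simply takes the two contradictory constraints $((v_1,v_2),E)$ and $((v_1,v_2),N)$ on a single pair of variables instead of your heavier construction with a conflicting $\maxbound_{\homograph}$-ary constraint; both work.
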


\begin{proof}
We start from proving the first part of the observation.
Define $\Delta$ to be a finite structure over the domain consisting of variables $\{ v_1, \ldots, v_n \}$  in $\instance$ and the signature $\tau  \subseteq \{ E, N, = \}$ such that $(v_i, v_j) \in R^{\Delta}$ for $i,j \in [n]$ and $R \in \tau$ if $\varphi_I$ contains a constraint $((v_i, v_j), R)$.
Since $\instance$ is $(2,3)$-minimal, we have the following.
\begin{observation}
\label{obs:minequality}
The binary relation $\sim := \{ (v_i, v_j) \mid \instance_{i,j} \subseteq \{ = \} \}\cup \bigcup_{i \in [n]} \{ (v_i v_i) \}$
is an equivalence relation.
\end{observation}

We claim that there is an embedding from $\Delta/\sim$ to $\homograph'$. Assume the contrary. Since $\homograph$ is finitely bounded, there exists 
$G$ over variables $\{ w_1, \ldots, w_l \}$ in $\bounds_{\homograph'}$ such that $G$ embeds into $\Delta/\sim$ and $l \leq \maxbound_{\homograph}$. Since $\mathcal{I}$ is $(2, \maxbound_{\homograph})$-minimal, there is a constraint $C$ in $\mathcal{I}$  whose scope contains $\{ w_1, \ldots, w_l \}$ and the corresponding relation is empty. It contradicts with the assumption that $\instance$ is non-trivial. Thus, $\Delta/\sim$ embeds into $\homograph$, and in consequence $\instance$ has a solution. It completes the proof of the first part of the observation.

For the second part of the observation, we select  $\mathcal{I}_1$ to be  $\{ ((v_1,v_2),E), ((v_1, v_2), N) \}$.
Indeed, every subset of variables of $\mathcal{I}_1$ is in the scope of some constraint. The projection of each constraint to $\{ v _1 \}$ or $\{ v_2 \}$ is the set of all vertices in $\homograph$.
It follows that $\mathcal{I}_1$ is $(1, \maxbound_{\homograph})$-consistent. Clearly $\mathcal{I}_1$ has no solutions.

We now turn to $\mathcal{I}_2$. If $\maxbound_{\homograph} > 3$ and $G = ([n], E)$ is a forbidden subgraph of size $n = \maxbound_{\homograph}$ consider
an instance $\mathcal{I}'_2$ over variables $\{ v_1, \ldots, v_n \}$ 
containing a  constraint $((v_i, v_j), E)$ if $(i,j) \in E^G$ and   a constraint 
$((v_i, v_j), N)$ if $(i,j) \notin E^G$.
Let $\mathcal{I}_2$ be a $(2, \maxbound_{\homograph}- 1)$-minimal instance of the CSP
equivalent to $\mathcal{I}'_2$. By the minimality of $\bounds_{\homograph}$, we have that no induced subgraph of $G$ is in $\bounds_{\homograph}$. It follows that $\mathcal{I}_2$ is non-trivial but, clearly, 
$\mathcal{I}_2$  has no solutions. If $\maxbound_{\homograph} = 3$, then we select $\mathcal{I}_2$ to be an instance such that $\mathcal{C} = \{ ((v_1,v_2), =),
((v_2, v_3), =),  ((v_1, v_3), E)$. It is again straightforward to check that 
$\instance_2$ is $(2,2)$-minimal. Yet, it has no solutions. It completes the proof of the observation.
\end{proof}

We complete this section by giving  some examples of first-order expansions of homogeneous graphs with bounded strict width.

\begin{proposition}
\label{prop:ClausesQNU}
Let $\mathbb{A}$ be a first-order expansion of the random graph $\homograph = (A; E)$ 
such that every relation in $\mathbb{A}$
is pp-definable as a conjunction of clauses of the form:
\begin{align}
(x_1 \neq y_1 \vee \cdots \vee x_k \neq y_1 \vee R(y_1, y_2) \vee y_2 \neq z_1 \vee \cdots \vee y_2 \neq z_l), \nonumber
\end{align}
where $R \in \{ E, N \}$.
Then $\mathbb{A}$ has bounded strict-width.  
\end{proposition}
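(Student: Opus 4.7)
The plan is to apply Theorem~\ref{thm:strictwidth}: it suffices to exhibit an oligopotent $n$-ary quasi near-unanimity polymorphism of $\mathbb{A}$. I will build $f$ as a canonical (with respect to $\Aut(\homograph)$) $n$-ary operation on $A$ by prescribing its behavior $\xi^{\textrm{typ}}(f)$ on pair orbits and realizing it with a standard back-and-forth argument. Let $K$ be the maximum of $k+l$ over all clauses occurring in the pp-definitions of the relations of $\mathbb{A}$, which is finite since $\mathbb{A}$ has finite signature. Set $n := \max(2K+1, 3)$ and define $\xi : \{=, E, N\}^n \to \{=, E, N\}$ by $\xi(O_1, \ldots, O_n) = \; =$ if at most one of $O_1, \ldots, O_n$ differs from $=$; otherwise $\xi = E$ if $|\{r : O_r = E\}| \geq |\{r : O_r = N\}|$, and $\xi = N$ otherwise. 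Using the free amalgamation property of the random graph, a canonical $f \colon A^n \to A$ realizing $\xi$ and satisfying $f(x, \ldots, x) = x$ can be constructed by back-and-forth over $A^n$.

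The first clause of $\xi$ directly yields the qnu identities $f(x, \ldots, x) = f(y, x, \ldots, x) = \cdots = f(x, \ldots, x, y)$; since the diagonal of $f$ is the identity, $f$ is automatically oligopotent. To show that $f$ is a polymorphism of $\mathbb{A}$ it suffices to preserve each individual clause
\begin{equation*}
C \equiv \bigvee_{i=1}^{k} (x_i \neq y_1) \vee R(y_1, y_2) \vee \bigvee_{j=1}^{l} (y_2 \neq z_j).
\end{equation*}
Given $n$ tuples $\mathbf{t}^1, \ldots, \mathbf{t}^n$ satisfying $C$, suppose for contradiction that the output violates $C$: every disequality fails and the middle pair orbit is not $R$. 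Since $\xi$ applied to the pair orbits $\bigl((x_i^r, y_1^r)\bigr)_{r}$ must return $=$, at most one $r$ can witness $x_i^r \neq y_1^r$; analogously for each $y_2 \neq z_j$. Hence at most $k + l \leq K$ of the tuples satisfy $C$ through a disequality, and therefore at least $n - K \geq K + 1$ of them satisfy $C$ via the middle atom $R(y_1^r, y_2^r)$. Consequently at least $K+1$ of the middle pair orbits equal $R$ while at most $K$ differ from $R$, so by the definition of $\xi$ the output middle pair orbit equals $R$ --- a contradiction.

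The hard part of this plan will be the back-and-forth construction of a canonical $f$ realizing the prescribed behavior $\xi$; although this is a standard technique for the random graph thanks to its free amalgamation property, one must take care to verify that each finite partial specification of $f$ admits a valid extension at every new $n$-tuple, i.e.\ that the pair-orbit constraints imposed on the new value by $\xi$ are jointly consistent with the values already chosen.
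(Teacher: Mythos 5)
Your overall strategy --- invoke Theorem~\ref{thm:strictwidth}, take arity roughly $2K+1$, and preserve each clause by a majority-style count showing that at least $K+1$ of the $2K+1$ argument tuples must satisfy the middle atom $R(y_1,y_2)$ --- is the same as the paper's. However, the object you defer as ``the hard part'' does not exist: no canonical operation realizes your behaviour $\xi$, and more strongly, no operation whose binary behaviour is a function of the coordinatewise pair orbits can satisfy the qnu identities while preserving $E$ and $N$. The obstruction is transitivity of equality. Take $n$-tuples $u=(a,c,e,\ldots,e)$, $v=(b,c,e,\ldots,e)$, $w=(b,d,e,\ldots,e)$ with $a\neq b$ and $c\neq d$. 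The pairs $(u,v)$ and $(v,w)$ each have exactly one non-$=$ coordinate orbit, so your $\xi$ forces $f(u)=f(v)$ and $f(v)=f(w)$; but $(u,w)$ has exactly two non-$=$ coordinate orbits, so $\xi$ outputs $E$ or $N$ and forces $f(u)\neq f(w)$. The pair-orbit constraints are therefore jointly inconsistent, and the back-and-forth construction must fail at some finite stage. (Pushing the same argument further: any canonical $f$ satisfying the qnu identities would have to send every orbit sequence with a single non-$=$ entry to $=$, and then a chain of one-coordinate changes shows $f$ is constant, contradicting preservation of $E$.)

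The paper escapes this by not asking for canonicity. It replaces ``differ in at most one coordinate'' (which is not transitive) by a genuine equivalence relation $\sim_p$ on $A^p$: two tuples are equivalent iff they have the same \emph{main value}, i.e., all but at most one entry equals that value. The operation $f$ is then built by embedding a suitable graph on the $\sim_p$-classes into the random graph (Observation~\ref{obs:appropQNU}); it collapses exactly the $\sim_p$-classes, preserves $E$ and $N$ between tuples with main values according to the main values, and has a controlled behaviour when one argument tuple has no main value. The price is that the clause-preservation argument becomes a case analysis (whether the blocks $\{x_1,\ldots,x_k\}$, $\{z_1,\ldots,z_l\}$ are empty, and whether the relevant tuples have main values), and the threshold must be taken as $m \geq r+2$ with arity $2m+1$ rather than your cleaner but unattainable one-deviation bound. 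To repair your proof you would need to abandon the prescribed canonical behaviour and work with some such equivalence-relation-based construction instead.
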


\noindent
And here comes another example.

\begin{proposition}
\label{prop:qnuExTwoEquivClass}
The constraint language $\mathbb{A} = (A; E, N, R)$ where 
$(A; E)$ is $C^{\omega}_2$ and $R(x_1, x_2, x_3) \equiv ((E(x_1, x_2) \wedge N(x_2, x_3)) \vee (N(x_1, x_2) \wedge E(x_2, x_3)))$ has bounded strict width. 
\end{proposition}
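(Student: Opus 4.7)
The plan is to apply Theorem~\ref{thm:strictwidth} by constructing an oligopotent $k$-ary quasi near-unanimity polymorphism of $\mathbb{A}$ for some suitable $k$, which will show that $\mathbb{A}$ has strict width $k-1$, and hence bounded strict width.

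I first identify $A$ with $\omega \times \{ 0, 1 \}$ so that $E((i, s), (i, 1-s))$ holds for every $i \in \omega$ and $s \in \{ 0, 1 \}$, and $N((i, s), (j, t))$ holds precisely when $i \neq j$. Under this identification, writing $c(x)$ and $b(x)$ for the two coordinates of $x$, the relation $R(x_1, x_2, x_3)$ becomes the assertion that $c(x_2)$ equals exactly one of $c(x_1), c(x_3)$, with $b(x_j) \neq b(x_2)$ at the matching position $j$. I then construct the sought polymorphism $f : A^k \to A$ by specifying its $\omega$- and $\{ 0, 1 \}$-coordinates separately. For the $\{ 0, 1 \}$-coordinate I take a Boolean-majority-type operation (for odd $k$), which is odd in the sense of flipping under bit-complementation. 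For the $\omega$-coordinate I take a near-unanimity-compatible selector: when at least $k-1$ of the inputs share a common $\omega$-coordinate, $f$ returns that coordinate; otherwise $f$ returns an $\omega$-coordinate chosen via an injective rule that uses both the $\omega$-coordinates and the $\{ 0, 1 \}$-bits of the inputs to disambiguate and achieve separation. The qnu property and oligopotence are built into the construction, with $f(x, \ldots, x) = x$.

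The verifications are: (i) $f$ is qnu and oligopotent by construction; (ii) $f$ preserves $E$ by the bit-flipping equivariance of the $\{ 0, 1 \}$-rule together with the fact that twin inputs share their $\omega$-coordinate so the $\omega$-rule produces the same output; (iii) $f$ preserves $N$ because the $\omega$-selector yields distinct outputs on inputs with componentwise distinct $\omega$-coordinates; and (iv) $f$ preserves $R$ by a case analysis on the two disjuncts of $R$ applied coordinatewise across the three positions of the input triples. The main obstacle is (iv), particularly in the ``mixed'' case in which some of the input triples from $R$ satisfy the first disjunct (twin pair at positions $1,2$) while others satisfy the second disjunct (twin pair at positions $2,3$): here the $\omega$-selector and the $\{ 0, 1 \}$-bit-majority must be coordinated so that the output triple $(f(t_1[1],\ldots), f(t_1[2],\ldots), f(t_1[3],\ldots))$ still has the property that the middle $\omega$-coordinate matches exactly one of the outer ones. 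This coordination is the delicate part of the construction, and I expect it to be handled by letting the $\omega$-selector on non-near-unanimity inputs depend on the minority $\{ 0, 1 \}$-bit in a way that tracks which disjunct of $R$ is used at each index $i$.
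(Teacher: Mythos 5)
There is a genuine gap, and it starts at the very first step: you have misidentified the structure. In this paper $C^s_n$ denotes the disjoint union of $n$ cliques of size $s$, so $C^{\omega}_2$ is the union of \emph{two infinite cliques}. Your identification of $A$ with $\omega \times \{0,1\}$, where $E((i,s),(j,t))$ holds iff $i=j$ and $s \neq t$, describes a perfect matching ($\omega$ disjoint edges), which is the \emph{other} graph $C^2_{\omega}$. This is not a cosmetic slip: it changes $E$, $N$ and hence $R$, and it inverts which coordinate carries the ``which disjunct of $R$ holds'' information. In the correct structure ($E$ iff equal $\{0,1\}$-coordinate and distinct), the two disjuncts of $R$ force $d_1 \neq d_3$ on the Boolean coordinates, so Boolean majority on that coordinate automatically produces an output whose middle Boolean coordinate matches exactly one of the outer ones, and the $\mathbb{N}$-coordinate only has to preserve disequality, which any injection does. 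This is exactly the paper's (short) proof: $f((d_1,e_1),(d_2,e_2),(d_3,e_3)) = (m(d_1,d_2,d_3), i(e_1,e_2,e_3))$ with $m$ the Boolean majority and $i$ an injection. In your setup the disjunct selection is instead governed by the $\omega$-coordinates, where no majority-type operation is available, which is why your ``mixed'' case becomes genuinely hard.

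The second problem is that even within your own setup the proof is not actually given at the one point where it matters. You correctly identify the mixed case (some input triples satisfying the first disjunct, others the second) as the crux of verifying that $f$ preserves $R$, but you resolve it only with ``I expect it to be handled by letting the $\omega$-selector \ldots depend on the minority bit in a way that tracks which disjunct is used.'' No such rule is exhibited, and it is far from clear that one exists that is simultaneously injective enough to preserve $N$, near-unanimous enough to be a qnu operation, and coordinated with the bit-majority so that the output lands in $R$. To repair the argument, first fix the structure to two infinite cliques and then observe that the delicate coordination you are trying to engineer disappears: the Boolean majority alone decides the disjunct of the output, and injectivity on the $\mathbb{N}$-coordinate supplies the required disequality at the matching position.
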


\section{Conditions Sufficient for Low Relational Width}
\label{sect:suffconditions}

In order to show  that a non-trivial $(2, \maxbound_{\homograph})$-minimal instance $\instance$ of $\csp(\structA)$ for first-order expansions $\structA$ of a homogeneous graph $\homograph$ has a solution, we always use one scheme. We take advantage of the fact that certain  quaternary and ternary relations are not pp-definable in $\structA$ and we carefully 
narrow down $\instance_{i,j}$ for $i,j \in [n]$ 
so that we end up with a  simple non-trivial instance $\instance'$ which is a 'subinstance' of $\instance$ in the following sense: for every $C = ((x_1, \ldots, x_r), R)$ in $\instance$ we have $C' = ((x_1, \ldots, x_r), R') \in \instance'$ such that $R' \subseteq R$. Since $\instance'$ is simple, by Observation~\ref{obs:ENEqualityWidth}, it has a solution. This solution is clearly a solution to the orginal instance $\instance$.

We shrink an instance of $\csp(\structA)$ using one of three different sets of relations presented in the three lemmas below.

\begin{lemma}
\label{lem:injO1dominating}
Let $\{ O_1, O_2 \}$ be $\{ E, N \}$ and $\mathbb{A}$ be a first-order expansion of a  homogeneous graph $\homograph$  such that none of the following types of relations is pp-definable in $\mathbb{A}$:
\begin{enumerate}
\item \label{injO1dom:O1uuO2} $[(O_1(x_1, x_2) \implies \uuOtwo(x_3, x_4))]$-relations,
\item \label{injO1dom:O1Eq} $[(O_1(x_1, x_2) \implies x_3 = x_4)]$-relations, and
\item \label{injO1dom:O2Eq} $[(O_2(x_1, x_2) \implies x_3 = x_4), (\uuOtwo(x_1, x_2) \wedge \uuOtwo(x_3, x_4))]$-relations.
\end{enumerate}
Then  $\mathbb{A}$ has relational width $(2, \maxbound_{\homograph})$.
\end{lemma}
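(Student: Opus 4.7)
The plan is: take an arbitrary non-trivial $(2, \maxbound_{\homograph})$-minimal instance $\instance$ of $\csp(\mathbb{A})$ and shrink it, pair by pair, to a simple non-trivial subinstance $\instance'$ in the sense spelled out at the start of this section. Once $\instance'$ is simple, Observation~\ref{obs:ENEqualityWidth} hands us a solution, which is automatically a solution of $\instance$.

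For the shrinking step, I loop over pairs $(v_i,v_j)$ with $\left|\instance_{i,j}\right|\geq 2$, with the following preference order: if $O_1 \in \instance_{i,j}$, first try to restrict $\instance_{i,j}$ to $\{O_1\}$; if that fails, fall back to $\instance_{i,j} \setminus \{O_1\} \subseteq \uuOtwo$; if $O_1 \notin \instance_{i,j}$ to begin with, try to restrict to $\{O_2\}$, falling back to $\{=\}$. Concretely, each attempted restriction adds a constraint $((v_i,v_j), O)$ for the preferred $O$, re-establishes $(2, \maxbound_{\homograph})$-minimality via Proposition~\ref{prop:minimality}, and keeps the result if it remains non-trivial. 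Since restrictions only intersect constraints with pp-definable orbitals, every intermediate stage is a subinstance of $\instance$.

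The key step, and the main obstacle, is to show that in each of the above cases at least one attempted restriction succeeds. Suppose instead that forcing some orbital at $(v_i,v_j)$ trivialises the instance. Then the propagation through the constraints empties the projection at some pair $(v_k,v_l)$; gathering the constraints used in the propagation into a pp-formula and existentially projecting to the four variables $v_i, v_j, v_k, v_l$ yields a quaternary relation $R$ that is pp-definable in $\mathbb{A}$. Case analysis on which orbital was forced and which was eliminated places $R$ in exactly one of the forbidden families: forcing $O_1$ at $(v_i,v_j)$ and thereby ruling out $O_1$ at $(v_k,v_l)$ gives a $[(O_1(x_1,x_2)\implies\uuOtwo(x_3,x_4))]$-relation (item~\ref{injO1dom:O1uuO2}); forcing $O_1$ and ruling out both non-equality orbitals at $(v_k,v_l)$ gives a $[(O_1(x_1,x_2)\implies x_3 = x_4)]$-relation (item~\ref{injO1dom:O1Eq}); and, when $O_1$ is already absent from both $\instance_{i,j}$ and $\instance_{k,l}$, forcing $O_2$ and propagating to equality at $(v_k,v_l)$ gives a $[(O_2(x_1,x_2)\implies x_3 = x_4),(\uuOtwo(x_1,x_2)\wedge\uuOtwo(x_3,x_4))]$-relation (item~\ref{injO1dom:O2Eq}), the supplementary entailment $\uuOtwo(x_1,x_2)\wedge\uuOtwo(x_3,x_4)$ coming for free from the already pinned state of the two pairs.

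The bookkeeping needed to verify that the constructed $R$ \emph{efficiently} entails its implication in the sense of Definition~\ref{def:effentailment} --- that is, exhibiting both a forward witness $t_1$ inside the relevant orbitals and a backward witness $t_2$ outside them --- is where the proof concentrates its work; both witnesses are supplied by the non-triviality and $(2,\maxbound_{\homograph})$-minimality of the current instance (the forward witness from tuples realising the residual orbitals still in play before the restriction, the backward witness from tuples realising a consistent alternative outside the forced orbital). Once the witnesses are in place, the assumed non-pp-definability of the three families contradicts the failure, so at each pair at least one restriction succeeds. The procedure therefore terminates at a simple non-trivial $\instance'$, and Observation~\ref{obs:ENEqualityWidth} closes the argument.
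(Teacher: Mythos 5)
Your overall architecture is the paper's: shrink a non-trivial $(2,\maxbound_{\homograph})$-minimal instance to a simple subinstance, invoke Observation~\ref{obs:ENEqualityWidth}, and justify each shrinking step by showing that a failure would let you existentially project a conjunction of constraints and forced orbitals onto four variables and land in one of the three forbidden families, with the minimality of the ``propagation set'' supplying the two witnesses required for \emph{efficient} entailment. That is exactly the engine of the paper's proof.

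There are, however, two places where your bookkeeping as written does not close. First, your preference order is only \emph{within} a pair, not \emph{across} pairs, whereas the paper's argument is genuinely two-phase: it first conjoins \emph{every} constraint with $O_1(v_i,v_j)$ for \emph{all} pairs with $O_1\in\instance_{i,j}$ (using item~\ref{injO1dom:O1uuO2} for non-emptiness and item~\ref{injO1dom:O1Eq} to keep $O_2$ alive at the $\uuOtwo$-pairs), and only then forces $O_2$. This order is what makes the item~\ref{injO1dom:O2Eq} case exhaustive: once phase one is complete, every pair touched by an $O_2$-forcing failure is already pinned inside $\uuOtwo$, which is precisely where the supplementary entailment $\uuOtwo(x_1,x_2)\wedge\uuOtwo(x_3,x_4)$ comes from. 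If your loop forces $O_2$ at some pair while another pair still carries $O_1$, the failure can produce a relation efficiently entailing $O_2(x_1,x_2)\implies\uuOtwo(x_3,x_4)$ or $O_2(x_1,x_2)\implies\uuOone(x_3,x_4)$ with no $\uuOtwo$ side conditions --- and none of these is excluded by the hypotheses, so your case analysis does not cover it. The fix is simply to pin all $O_1$-carrying pairs before touching any other pair. Second, you attribute both entailment witnesses to ``non-triviality and $(2,\maxbound_{\homograph})$-minimality of the current instance,'' but pairwise minimality never yields a single tuple realising prescribed orbitals at two \emph{different} pairs simultaneously; in the paper both $t_1$ and $t_2$ come from choosing the set $\mathbb{I}$ of forced conjuncts \emph{minimal} for the offending entailment (dropping one conjunct gives the backward witness, keeping all of them gives the forward one). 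Your phrase ``gathering the constraints used in the propagation'' gestures at this, but the minimality of that set is the load-bearing step and should be stated as such.
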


Before we discuss the 'shrinking' strategy that stands behind Lemma~\ref{lem:injO1dominating}, consider a non-trivial  instance $\instance$ of some $\structA$ under consideration in the lemma and a constraint $((x_1, \ldots, x_r), R)$
for which there are $i_1, j_1, i_2, j_2$ such that $v_{i_1}, v_{j_1}, v_{i_2}, v_{j_2} \in \{ x_1, \ldots, x_r \}$ and $O_{1} \in \instance_{i_1, j_1}$, $O_1 \in \instance_{i_2, j_2}$. Since $\instance$ is non-trivial and $(2,3)$-minimal the relation $(R'(x_1, \ldots, x_r) \equiv (R(x_1, \ldots, x_r) \wedge O_1(v_{i_1}, v_{j_1})))$ is non-empty.
But also  
$(R''(x_1, \ldots, x_r) \equiv (R(x_1, \ldots, x_r) \wedge O_1(v_{i_1}, v_{j_1}) \wedge O_1(v_{i_2}, v_{j_2})))$ is non-empty. Indeed,  otherwise since $O_1 \cup \uuOtwo = A^2$, the structure $\structA$ would define a relation from Item~\ref{injO1dom:O1uuO2} or Item~\ref{injO1dom:O1Eq}. Generalizing 
the argument, one can easily transform $\instance$ 
to a non-trivial $\instance'$ where
every $\instance'_{i,j} = \{ O_1 \}$ whenever $\instance_{i,j}$ contains $O_1$. Using a similar reasoning and Item~\ref{injO1dom:O2Eq}, and taking care of some details, we have to skip here, one can then transform $\instance'$ to $\instance''$ so that $\instance''_{i,j} = O_2$ whenever $\instance'_{i,j}$ contains 
$O_2$. Since $\instance''$ is simple and non-trivial, we can use Observation~\ref{obs:ENEqualityWidth} to argue that both $\instance''$ and $\instance$ has a solution.

The next lemma considers a specific situation where $\homograph$ is a disjoint sum of $\omega$ edges and  languages under consideration are preserved by oligopotent qnu-operations.

\begin{lemma}
\label{lem:hnear23min}
Let $\mathbb{A}$ be a first-order expansion of $C^2_{\omega}$ preserved by an oligopotent qnu-operation and such that none of the following types of relations is pp-definable in $\mathbb{A}$:
\begin{enumerate}
\item \label{hnear23min:NuuE} $[(N(x_1, x_2) \implies \uuE(x_3, x_4))]$-relations,
\item \label{hnear23min:NE} $[(N(x_1, x_2) \implies E(x_3, x_4)), (\uuE(x_3, x_4)) ]$-relations,
\item \label{hnear23min:NEq} $[(N(x_1, x_2) \implies x_3 = x_4)]$-relations,
\item \label{hnear23min:O1NO2} $[(O_1(x_ 1, x_2) \implies O_2(x_3, x_4)), (\uuE(x_1, x_2) \wedge N(x_2, x_3) \wedge \uuE(x_3, x_4))]$-relations 
where the set $\{ O_1, O_2 \}$ equals $\{ E, = \}$.
\end{enumerate}
Then  $\mathbb{A}$ has relational width $(2, 3)$.
\end{lemma}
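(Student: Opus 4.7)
The plan is to follow the shrinking scheme laid out after Lemma~\ref{lem:injO1dominating}: I take a non-trivial $(2,3)$-minimal instance $\instance$ of $\csp(\mathbb{A})$, produce a simple non-trivial subinstance $\instance''$, and then appeal to Observation~\ref{obs:ENEqualityWidth} (note $\maxbound_{C^2_{\omega}}=3$). The shrinking runs in two phases: first I reduce every pair $(v_i,v_j)$ that still allows $N$ to $\{N\}$; second, inside each $\uuE$-equivalence class of variables, I resolve the remaining $\{E,=\}$ ambiguities. Any solution of $\instance''$ then pulls back to $\instance$.

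For Phase~1, I iterate over pairs with $N\in\instance_{i,j}$ and $\instance_{i,j}\neq\{N\}$, and at each step intersect every constraint whose scope contains both variables with $N(v_i,v_j)$, then re-establish $(2,3)$-minimality. Suppose some constraint $C=((x_1,\dots,x_r),R)$ in whose scope lies a pair $(v_k,v_l)$ becomes empty on some orbital of $(v_k,v_l)$ that previously appeared. Then $R$, projected onto $(v_i,v_j,v_k,v_l)$, pp-defines a quaternary relation that efficiently entails $N(v_i,v_j)\implies O(v_k,v_l)$ for some $O\in\{\uuE,E,=\}$, hence a relation of type \ref{hnear23min:NuuE}, \ref{hnear23min:NE}, or \ref{hnear23min:NEq}. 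Efficient entailment in the sense of Definition~\ref{def:effentailment} is witnessed by the tuple in $R$ realising the forbidden combination $N\wedge\lnot O$ together with a tuple of the form $\uuE(v_i,v_j)\wedge O(v_k,v_l)$ that must also lie in $R$ by the $(2,3)$-minimality and pre-intersection non-triviality of $\instance$. The hypotheses of the lemma rule out such relations, so Phase~1 terminates with a non-trivial $\instance'$ in which $\instance'_{i,j}\in\{\{N\},\{E\},\{=\},\{E,=\}\}$ for every pair.

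For Phase~2, the relation $v_i\approx v_j \iff \instance'_{i,j}\subseteq\{E,=\}$ is an equivalence relation on variables, and within each $\approx$-class the remaining subproblem is a $2$-coloring task since every $\uuE$-class of $C^2_{\omega}$ has exactly two elements: $\{E\}$ forces distinct colours, $\{=\}$ forces equal ones. I process one $\approx$-class at a time, pick an undetermined $\instance'_{i,j}=\{E,=\}$, fix it to $\{E\}$ or to $\{=\}$, and propagate inside the class. The oligopotent qnu-polymorphism restricts on each $\approx$-class to a majority-like operation on two elements (modulo a permutation supplied by oligopotency), and $(2,3)$-minimality supplies triangle consistency, which together solve the in-class $2$-colouring. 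The crucial point is that choices in different classes do not interact: a cross-class obstruction would pp-define, after projection onto four variables $(x_1,x_2,x_3,x_4)$ forming an $\uuE(x_1,x_2)\wedge N(x_2,x_3)\wedge\uuE(x_3,x_4)$ pattern, an efficiently entailed implication $O_1(x_1,x_2)\implies O_2(x_3,x_4)$ with $\{O_1,O_2\}=\{E,=\}$, which is forbidden by item~\ref{hnear23min:O1NO2}. Thus the resolution terminates with a simple non-trivial $\instance''$, and Observation~\ref{obs:ENEqualityWidth} concludes.

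The main obstacle, as I see it, is Phase~2. First, I have to check that every cross-class obstruction encountered during propagation really has the minimal witness shape $\uuE\wedge N\wedge\uuE$ on four variables, so that item~\ref{hnear23min:O1NO2} applies verbatim rather than being evaded by pp-formulas of slightly different shape. Second, I need the oligopotent qnu-polymorphism to act as a genuine majority on the two-element quotient of each $\approx$-class; oligopotency is what guarantees that its unary composition is a permutation, making the induced quotient action a true quasi near-unanimity on a two-element set, which is exactly what lets $(2,3)$-minimality force solvability of the in-class $2$-colouring. Once both points are settled, the shrinking goes through and the conclusion follows as in Lemma~\ref{lem:injO1dominating}.
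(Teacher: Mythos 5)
Your proposal follows the same two-phase shrinking scheme as the paper's proof: first force $N$ on every pair that admits it (ruling out obstructions via items 1--3 through a minimal-unsatisfiable-core argument on the set of added $N$-constraints), then partition the variables into $\uuE$-classes, solve each class over its two-element core using the oligopotent qnu-operation composed with the retraction to that core (which yields a genuine near-unanimity operation on two elements), and rule out cross-class interference via item 4, whose $\uuE \wedge N \wedge \uuE$ witness pattern arises automatically because the two offending pairs lie in distinct $\uuE$-classes separated by $N$. The two ``obstacles'' you flag at the end are resolved in exactly this way in the paper, so your argument is correct and essentially identical to the published one.
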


The shrinking strategy for Lemma~\ref{lem:hnear23min} is as follows. We start with a non-trivial $(2,3)$-minimal instance $\instance$ and use Items~\ref{hnear23min:NuuE}--\ref{hnear23min:NEq} to transform it into 
a non-trivial $(2,3)$-minimal $\instance'$ such that  $\instance'_{i,j} = \{N \}$ whenever $\instance_{i,j}$ contains $N$.
Since $\uuE$ fo-definable in $C^2_{\omega}$ is transitive and $\instance'$ is $(2,3)$-minimal, it is easy to show that the graph over variables $\{ v_1, \ldots, v_n\}$ and edges $\instance'_{i,j}$ with $i,j \in [n]$ is a disjoint union of components $K_1, \ldots, K_{\kappa}$ such that for all $k \in [\kappa]$ and all $v_i, v_j \in K_k$ it holds that $\instance'_{i,j} \subseteq \{ E, = \}$ and whenever $v_i, v_j$ are in different components, then $\instance'_{i,j} = \{ N \}$. Now, any $\instance'_{K_i}$ --- the instance $\instance'$ restricted to variables in $K_i$, which is in fact an instance of $\csp(\Delta)$ for $\Delta$ over two-elements (some edge in $C^2_{\omega}$, different for every $i \in [\kappa]$) preserved by a near-unanimity operation, is shown to have a solution $\solution_{i}$. It follows by the characterization of relational width for finite structure.
In order to prove that solution $\solution:= \bigcup_{i \in [\kappa]} \solution_i$ is the solution to $\instance'$ and hence to $\instance$ we use the fact that relations from Item~\ref{hnear23min:O1NO2} are not pp-definable in $\structA$.

Finally, we turn to the case where $\homograph$ is a disjoint sum of two infinite cliques and the structures $\mathbb{A}$ have  oligopotent qnu-operations as polymorphisms.

\begin{lemma}
\label{lem:Comega23excludes}
Let $\mathbb{A}$ be a first-order expansion of $C^\omega_2$ preserved by an oligopotent qnu-operation and such that $\mathbb{A}$  pp-defines neither $\uuN$ nor $[(O(x_1, x_2) \rightarrow  x_3 = x_4)]$ for any $O \in \{ E, N \}$.
Then $\mathbb{A}$ has relational width $(2,3)$.
\end{lemma}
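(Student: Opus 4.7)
The plan is to deduce Lemma~\ref{lem:Comega23excludes} by applying Lemma~\ref{lem:injO1dominating} with the labelling $(O_1, O_2) := (E, N)$; since $\maxbound_{C^\omega_2} = 3$, this will immediately yield relational width $(2,3)$. Throughout, I shall use two structural facts about $\mathbb{A}$: every pp-definable binary relation is a union of the orbitals $\{E, N, =\}$, and $\uuE$ is pp-definable in $\mathbb{A}$ because it is pp-definable already in $C^\omega_2$ via $\uuE(x, y) \equiv \exists z\, (N(x, z) \wedge N(y, z))$.

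Conditions (2) and (3) of Lemma~\ref{lem:injO1dominating} are short to check. Condition (2) is exactly the hypothesis on $[(E(x_1, x_2) \to x_3 = x_4)]$-relations. For condition (3), suppose $R$ is a $[(N(x_1, x_2) \to x_3 = x_4),\ (\uuN(x_1, x_2) \wedge \uuN(x_3, x_4))]$-relation. Then its projection to $(x_1, x_2)$ is pp-definable and contained in $\uuN$; the forward tuple of the efficient entailment contributes an $N$-pair, while the backward tuple, whose first two coordinates lie in $\uuN \setminus N = \{=\}$, contributes an equality pair. By orbit-closedness, the projection equals $\uuN$, contradicting the hypothesis.

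The crux is condition (1), that no $[(E(x_1, x_2) \to \uuN(x_3, x_4))]$-relation $R$ is pp-definable. For a putative $R$, consider $S(x_3, x_4) := \exists x_1 x_2\, (R(x_1, x_2, x_3, x_4) \wedge E(x_1, x_2))$, a pp-definable nonempty subset of $\uuN$; by orbit-closedness and the non-pp-definability of $\uuN$, $S \in \{\{=\}, \{N\}\}$. If $S = \{=\}$, then $R$ itself is an efficient $[(E \to =)]$-relation: the original forward tuple now witnesses $E(x_1, x_2)$ with $x_3 = x_4$, and the original backward tuple (with $\uuN$ on $(x_1, x_2)$ and $E$ on $(x_3, x_4)$) simultaneously violates $E(x_1, x_2)$ and $x_3 = x_4$; this contradicts condition (2).

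The main obstacle is the case $S = \{N\}$, in which $R$ becomes a $[(E \to N)]$-relation not explicitly forbidden by our hypotheses. To rule it out, I plan to combine the pp-definability of $\uuE$ with the oligopotent qnu-polymorphism $f$. Introduce the symmetric pp-definable relation $V(x_1, x_2) := \exists x_3 x_4\, (R(x_1, x_2, x_3, x_4) \wedge \uuE(x_3, x_4))$; since $\uuE(x_3, x_4)$ conflicts with $N(x_3, x_4)$, the $E \to N$ entailment forces $\uuN(x_1, x_2)$ on $V$. A case analysis on $V \in \{\{=\}, \{N\}\}$ together with the orbit restrictions on $R$ thereby imposed (only certain of the nine orbit combinations for pairs $((x_1, x_2), (x_3, x_4))$ over $\{E, N, =\}^2$ can occur in $R$), and with absorption identities of $f$ applied to the forward tuple $t_1$ and backward tuple $t_2$, will yield a pp-definable subset of $\uuN$ containing both an $N$- and an equality pair, hence equal to $\uuN$, contradicting the hypothesis. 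This orbit-level reasoning is the technical heart of the argument.
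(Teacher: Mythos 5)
There is a genuine gap, and it is fatal to the chosen route: condition~(1) of Lemma~\ref{lem:injO1dominating} with $(O_1,O_2)=(E,N)$ is simply not satisfiable for the structures at hand, so the lemma you want to invoke does not apply. Concretely, consider
$R_0(x_1,x_2,x_3,x_4) \equiv \uuE(x_1,x_3) \wedge N(x_2,x_4)$,
which is pp-definable from $N$ alone (recall $\uuE(x,y)\equiv\exists z\,(N(x,z)\wedge N(y,z))$ in $C^\omega_2$), hence pp-definable in every $\mathbb{A}$ under consideration and preserved by \emph{every} polymorphism of $\mathbb{A}$, qnu or not. Because $C^\omega_2$ has only two cliques, $E(x_1,x_2)\wedge\uuE(x_1,x_3)\wedge N(x_2,x_4)$ forces $N(x_3,x_4)$, so $R_0$ entails $(E(x_1,x_2)\implies\uuN(x_3,x_4))$; the tuple with $x_1,x_2,x_3$ in one clique and $x_4$ in the other is a forward witness, and the tuple with $x_1,x_3,x_4$ in one clique ($x_3\neq x_4$) and $x_2$ in the other is a backward witness. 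Thus $R_0$ is a $[(E(x_1,x_2)\implies\uuN(x_3,x_4))]$-relation, and it lands exactly in your unresolved case $S=\{N\}$ (indeed its orbit patterns are $(E,N),(N,E),(N,=),(=,N)$ and its Boolean image is the bijunctive relation $a=c\wedge b\neq d$, preserved by majority). So the ``technical heart'' you defer — deriving a contradiction from the qnu-operation in the $S=\{N\}$ case — cannot be carried out: there is nothing to contradict. The same relation, read with the roles of the two pairs exchanged, is a $[(N(x_1,x_2)\implies\uuE(x_3,x_4))]$-relation, so swapping $(O_1,O_2)$ does not help either. This is precisely why the paper proves a separate lemma for $C^\omega_2$ rather than reusing Lemma~\ref{lem:injO1dominating}: in a two-clique graph these implications come ``for free'' from transitivity of $\uuE$. (Your verifications of conditions~(2) and~(3) and of the subcase $S=\{=\}$ are fine, but they do not salvage the argument.)

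For contrast, the paper's proof takes a different mechanism that tolerates these unavoidable entailments: it collapses $C^\omega_2$ onto the two-element domain via the clique map $g$, observes that the oligopotent qnu-operation induces a near-unanimity polymorphism of the finite structure $g(\mathbb{A})$, solves the resulting $(2,3)$-minimal Boolean instance by the finite bounded-width theory, and then uses that Boolean solution to decide, pair by pair, whether to impose $N$ or $E$ when shrinking the original instance to a simple one; the hypotheses on $\uuN$ and on $[(O(x_1,x_2)\implies x_3=x_4)]$-relations are used only to keep the constraints non-empty during this guided refinement. If you want to rescue your write-up, you would need to replace the appeal to Lemma~\ref{lem:injO1dominating} by an argument of this kind.
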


Clearly any tuple over  $C^{\omega}_2$ takes some of its values from one equivalence class in $C^{\omega}_2$ and the remaining values from the other class. In order to prove Lemma~\ref{lem:Comega23excludes}, we  consider a non-trivial $(2,3)$-minimal instance $\instance$ of $\csp(\structA)$ but this time we also assume without loss of generality that there are no $i,j$ with $\instance_{i,j} = \{ = \}$. Since $\structA$ does not define $\uuN$
we have that for all $i,j \in [n]$ the set $\instance_{i,j}$ contains $E$.
Then we transform $\instance$ to
$\instance_B$ of $\csp(\Delta)$ where $\Delta$ is over the   domain $\{0,1 \}$ by replacing any tuple $t$ in any relation in any constraint in $\instance$  by a tuple over $\{0,1 \}$ so that all 
values in one equivalence class are replaced by $0$ and all values in the other equivalence class are replaced by $1$. Since $\Delta$ is preserved by a near-unanimity operation, and hence has bounded relational width 
we have that the $(2,3)$-minimal $\instance_B$ 
has a solution $\solution_B: \{v_1, \ldots, v_m \} \to \{ 0,1 \}$.
We use $\solution_B$ to transform $\instance$ to $\instance'$ so that we set $\instance'_{i,j}$ to $\{ N \}$ whenever $\solution_B(v_i) \neq \solution_B(v_j)$.  No $[(N(x_1, x_2) \rightarrow x_3 = x_4)]$-relations are pp-definable in $\structA$, and hence we have that $\instance'_{i,j}$ for any $i,j \in [n]$ contains $E$. Since $\structB$ pp-defines no $[(E(x_1, x_2) \rightarrow x_3 = x_4)]$-relations we may transform $\instance'$ into $\instance''$ so that $\instance''_{i,j}$ is $E$ 
whenever $\instance'_{i,j} \neq \{ N \}$. Thus, $\instance''$ 
is a simple non-trivial $(2,3)$-minimal instance. It follows by Observation~\ref{obs:ENEqualityWidth} that both $\instance''$ and $\instance$ have a solution. Again, we skipped many details but our goal was rather to convey some intuitions that stand behind the proofs of the lemmas in this section. 

\section{Constraint Languages with Low Relational Width}
\label{sect:exactcharacterization}

In this section we employ lemmas from Section~\ref{sect:suffconditions} to provide the exact characterization of relational width of first-order expansions of homogeneous graphs with bounded strict width and first-order expansions of homogeneous graphs preserved by binary canonical operations from Proposition ~\ref{prop:binRW}. In order to prove the former, we also 
show which quaternary relations of interest are violated by ternary injections used in the complexity classification (see~Subsection~\ref{sect:ternaryinjections}). To rule out some other relations,
we have to use oligopotent qnu-operations directly (see~Subsection~\ref{sect:violatedbyQNU}). 


\subsection{Binary Injections and Low Relational Width}
\label{sect:binaryinjections}

We start with first-order expansions $\mathbb{A}$ of homogeneous  graphs $\homograph$
whose tractability has been shown in Proposition~8.22 in~\cite{BodirskyP15}, Proposition 6.2~in~\cite{equiv-csps} as well as 
Proposition~37 and Theorem~62 in~\cite{homographs-arxiv}.

\begin{lemma}	
\label{lem:graphreductbinary}
Let  $\mathbb{A}$ be a first-order expansion of a countably infinite homogeneous  graph $\homograph$ preserved 
by a binary injection:
\begin{enumerate}
\item  \label{graphreductbinary:max} of behaviour max which is either balanced or $E$-dominated, or 
\item  \label{graphreductbinary:min} of behaviour min which is either balanced or $N$-dominated, or 
\item  \label{graphreductbinary:Econstant} which is  $E$-constant, or
\item  \label{graphreductbinary:Nconstant} which is  $N$-constant.
\end{enumerate}
Then $\mathbb{A}$ has relational width $(2,\maxbound_{\homograph})$.
\end{lemma}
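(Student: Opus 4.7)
The plan is to verify the hypotheses of the sufficient-condition lemmas established in Section~\ref{sect:suffconditions}: Lemma~\ref{lem:injO1dominating} applies to arbitrary homogeneous graphs, while Lemmas~\ref{lem:hnear23min} and \ref{lem:Comega23excludes} cover the specific cases $\homograph = C^2_{\omega}$ and $\homograph = C^{\omega}_2$ respectively. Each of these lemmas forbids the pp-definability in $\mathbb{A}$ of an explicit list of quaternary relation types, and by Theorem~\ref{thm:Galoisconn} it suffices to exhibit a polymorphism of $\mathbb{A}$ that violates each of them --- which is precisely the hypothesized binary injection $f$.

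I would therefore proceed case by case, for each of the four classes of injections (max balanced/$E$-dominated, min balanced/$N$-dominated, $E$-constant, $N$-constant) and for each forbidden relation $R$ from the appropriate lemma. Given efficient entailment of the defining implication, Definition~\ref{def:effentailment} supplies two witness tuples $t_1, t_2 \in R$; the goal is to compute $f(t_1, t_2)$ coordinatewise from the behaviour table of $f$ on the orbitals $\{=, E, N\}$ and show that the resulting tuple falls outside $R$. For the max/min cases one uses that on the $\NeqNeq$ patterns $f$ acts as set-theoretic max (respectively min) on $\{E, N\}$, while the balancedness or domination assumptions pin down how $f$ treats pairs of patterns involving $=$. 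The constant cases are immediate: every pair $(f(a), f(b))$ lies in $E$ (respectively $N$), which at once falsifies conclusions of implications of the form ``$\cdots \implies x_3 = x_4$''. The $E \leftrightarrow N$ duality halves the calculation. For Lemmas~\ref{lem:hnear23min} and \ref{lem:Comega23excludes} one also requires an oligopotent qnu polymorphism; I would invoke the complexity classifications of~\cite{equiv-csps,homographs-arxiv}, which in the specific cases of $C^2_{\omega}$ and $C^{\omega}_2$ imply that every reduct preserved by one of the listed binary injections is also preserved by a canonical qnu.

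The main obstacle is that efficient entailment leaves the precise orbit of the witness tuples partially unspecified --- for instance, a coordinate pair in $\uuE$ may be realised in $E$ or in $=$ --- and the correct refinement is what makes the violation actually go through. To handle this, I would use that $R$ is a union of $\Aut(\homograph)$-orbits of $4$-tuples together with the canonicity of $f$: by composing $f$ with elements of $\Aut(\homograph)$ one can refine the orbits of $t_1, t_2$ on the relevant coordinate pairs so that the desired orbital pattern of $f(t_1, t_2)$ is forced. This witness-selection step is parallel to the one used in the proofs of Lemmas~\ref{lem:injO1dominating}--\ref{lem:Comega23excludes} and is the only non-mechanical part of the verification.
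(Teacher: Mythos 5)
Your core computation is exactly the paper's proof: for each of the four behaviours one takes the two witness tuples $t_1,t_2$ supplied by efficient entailment, computes $f(t_1,t_2)$ coordinatewise from the behaviour of $f$ on $\{=,E,N\}$ (with the $O_1=E,O_2=N$ versus $O_1=N,O_2=E$ duality for max/$E$-constant versus min/$N$-constant), and observes that the resulting tuple falsifies the entailed implication, so that none of the three relation types of Lemma~\ref{lem:injO1dominating} is pp-definable; Lemma~\ref{lem:injO1dominating} then gives relational width $(2,\maxbound_{\homograph})$. One remark: your worry about refining the orbit of a pair lying in $\uuOtwo$ is already absorbed by the behaviour definitions themselves (e.g.\ ``balanced'' and ``$E$-dominated'' prescribe the output orbital for every input pattern involving $=$, for \emph{all} pairs $a,b\in A^2$), so no composition with automorphisms or appeal to canonicity is needed; and in the one subcase where the computation lands outside $\uuOtwo$ (dominated or constant $f$ applied to the third relation type), this is itself the contradiction, since that relation entails $\uuOtwo(x_1,x_2)\wedge\uuOtwo(x_3,x_4)$.

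The genuine problem is your treatment of $C^2_{\omega}$ and $C^{\omega}_2$. Lemma~\ref{lem:injO1dominating} is stated for arbitrary homogeneous graphs and its hypotheses are verified by the same orbital computation in those two cases as well, so the detour through Lemmas~\ref{lem:hnear23min} and~\ref{lem:Comega23excludes} is not needed --- and, as proposed, it does not go through. Those two lemmas require $\mathbb{A}$ to be preserved by an oligopotent qnu-operation, which is not among the hypotheses of the present lemma, and your claim that the classifications of~\cite{equiv-csps,homographs-arxiv} yield such a qnu for every reduct preserved by one of the listed binary injections is unsupported: those theorems characterize which structures \emph{already preserved by} an oligopotent qnu admit certain canonical polymorphisms, not the converse. (Indeed, in the paper Lemmas~\ref{lem:hnear23min} and~\ref{lem:Comega23excludes} are reserved for the bounded-strict-width cases of Theorem~\ref{thm:mainresult}, where the available polymorphisms are the ternary operation $h$ and the minority/xnor injection rather than the binary injections of this lemma.) Dropping that branch and running your Lemma~\ref{lem:injO1dominating} argument uniformly over all homogeneous graphs yields a correct proof identical to the paper's.
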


The above lemma gives an opportunity to reformulate the dichotomy results for reducts $\structA$ of $C^{1}_{\omega},C^{\omega}_1, C^{\omega}_{\omega}$ and $H_k$ for any $k \geq 3$.

\begin{corollary}
\label{cor:newDich}
Let $\structA$ be 
a reduct of a homomorphism graph $\homograph$ which is $C^{1}_{\omega},C^{\omega}_1, C^{\omega}_{\omega}$ or $H_k$ for any $k \geq 3$. Then either $\csp(\structA)$ is NP-complete or $\structA$ has relational width $(2,\maxbound_{\homograph})$.
\end{corollary}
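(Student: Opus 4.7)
The plan is to combine Lemma~\ref{lem:graphreductbinary} of the present paper with the complexity dichotomies for reducts of each of the homogeneous graphs listed in the statement, which have already been established in the literature. The key input from the dichotomy results is that, whenever the reduct $\structA$ is tractable, it admits a canonical binary injective polymorphism whose behaviour appears on the short list in Lemma~\ref{lem:graphreductbinary}; the output of that lemma is then exactly the desired upper bound on relational width.

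First I would invoke the relevant dichotomies case by case. For $\homograph = H_k$ with $k \geq 3$, the classification in~\cite{BodirskyP15}, as refined in~\cite{homographs-arxiv}, asserts that every finite-signature reduct $\structA$ whose CSP is not NP-complete is preserved by a canonical binary injection of one of the four behaviours: max balanced or $E$-dominated, min balanced or $N$-dominated, $E$-constant, or $N$-constant. For $\homograph = C^\omega_\omega$ the analogous statement is given in~\cite{equiv-csps}. The two remaining cases $C^1_\omega$ and $C^\omega_1$ are degenerate: one of the orbitals $E, N$ is empty, reducts are essentially reducts of $(\N,=)$, and the classical equality-language dichotomy of~\cite{ecsps} again supplies a binary canonical injective polymorphism on the tractable side whose behaviour, read off from its action on the non-empty orbital, matches one of the four cases.

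Second, since $\structA$ is a reduct of $\homograph$ over a finite signature, Theorem~\ref{thm:Galoisconn} together with Proposition~\ref{prop:GaloisComp} allows me to expand $\structA$ by $E$ and by $N$ (the latter whenever it is pp-definable in $\structA$) without altering its complexity or its relational width. This presents $\structA$ as a first-order expansion of $\homograph$ in the sense required by the hypothesis of Lemma~\ref{lem:graphreductbinary}. Applying that lemma immediately yields relational width $(2, \maxbound_{\homograph})$, which finishes the argument.

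The only point that demands care is the alignment between the tractability certificates produced by the cited dichotomies and the four behaviours enumerated in Lemma~\ref{lem:graphreductbinary}; this alignment has essentially been built into Proposition~\ref{prop:binRW} and the lemma itself, whose list of behaviours was designed to mirror the polymorphisms appearing in the tractable cases of the prior classifications. I therefore do not expect any genuinely new algebraic or combinatorial obstacle here: the hard work is encapsulated in Lemma~\ref{lem:graphreductbinary}, whose proof in Section~\ref{sect:suffconditions} already handles the delicate shrinking arguments, and in the cited dichotomy theorems, which already do the algebraic analysis of binary canonical behaviours for these specific graphs.
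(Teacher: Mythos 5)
Your overall strategy is the same as the paper's: in each of the four cases, quote the known complexity classification to extract a canonical binary injective polymorphism on the tractable side and then invoke Lemma~\ref{lem:graphreductbinary}. However, there is a concrete gap in the way you state the input from the literature. It is not true that every reduct of $C^{\omega}_{\omega}$ or of $H_k$ whose CSP is not NP-complete is preserved by a binary canonical injection of one of the four listed behaviours. The cited dichotomies have an additional degenerate branch: the reduct may be homomorphically equivalent to a reduct of $(\N;=)$ (for $C^{\omega}_1$ and $C^1_{\omega}$, the analogous branch is preservation by a constant operation). A structure in that branch need not be preserved by any binary injection at all --- for instance, the ternary relation defined by $(x=y \vee y=z)$ is tractable, being preserved by a constant operation, but is violated by every binary injection --- and it also need not pp-define both $E$ and $N$, so it is not a first-order expansion of $\homograph$ and Lemma~\ref{lem:graphreductbinary} does not apply to it. The paper's proof explicitly splits off this branch in each case (``either homomorphically equivalent to a reduct of $(\N;=)$ \ldots in the former case we are done'') before applying the lemma to the remaining branch, where the classification does supply a binary injection of behaviour max and $E$-dominated ($C^{\omega}_1$), min and $N$-dominated ($C^1_{\omega}$, $H_k$), or min and balanced ($C^{\omega}_{\omega}$). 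Your proposal, by asserting that the binary injection always exists on the tractable side, silently skips this case and so does not yield the corollary as written.

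A secondary, more minor point: your appeal to Theorem~\ref{thm:Galoisconn} and Proposition~\ref{prop:GaloisComp} to pass from a reduct to a first-order expansion only gives log-space equivalence of the CSPs; it does not by itself show that relational width is unchanged when $E$ and $N$ are added to the signature. In the non-degenerate branch this is harmless because there both $E$ and $N$ are pp-definable and the paper simply treats $\structA$ as a first-order expansion, but the step deserves a sentence of justification rather than being folded into a complexity-theoretic reduction.
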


\begin{proof}
We have that any tractable first-order expansion of $(\N; =, \neq)$ is preserved by a binary injection~\cite{ecsps}. It follows that every tractable reduct of $C^{\omega}_1$ is either  preserved by a constant operation or is a first-order expansion of $C^{\omega}_1$ and preserved by a binary injection which is of behaviour max and $E$-dominated. A similar reasoning holds for reducts of $C^1_{\omega}$ with a difference that we replace $E$ with $N$.
Further, every reduct of $C^{\omega}_{\omega}$ is either homomorphically equivalent to a reduct of $(\N; =)$ or pp-defines both $E$ and $N$, see Theorem~4.5 \cite{equiv-csps}. In the former case we are done, while in the latter a tractable $\structA$ is preserved by a binary injection of behaviour min and balanced,~Corollary~7.5 in~\cite{equiv-csps}. The corollary follows by 
 Lemma~\ref{lem:graphreductbinary}. By Proposition~15 and Lemma~17 in~\cite{homographs-arxiv}, a tractable reduct of $H_k$ with $k \geq 3$ is either homomorphically equivalent to a reduct of $(\N; =)$  or pp-defines both $E$ and $N$. In the former case we are done while in the latter, we have that 
$\csp(\structA)$ is in P when it is preserved by a binary injection of behaviour min and $N$-dominated (see Theorem~38 in~\cite{homographs-arxiv}).
Again, the corollary follows by Lemma~\ref{lem:graphreductbinary}.
\end{proof}

\subsection{Types of Relations violated by Ternary Canonical Operations}
\label{sect:ternaryinjections}

Here we look at quaternary relations of interest violated by canonical ternary operations. We start with ternary injections of behaviour majority.

\begin{lemma}
\label{lem:graphmajorityequalities}
Let $\mathbb{A}$ be a reduct  of a countably infinite homogeneous  graph
preserved by a ternary injection of behaviour majority which additionally is:
\begin{itemize}
\item hyperplanely balanced and of behaviour projection, or
\item hyperplanely E-constant, or hyperplanely N-constant, or
\item hyperplanely of behaviour max and E-dominated, or
\item hyperplanely of behaviour min and N-dominated.
\end{itemize}
Then $\mathbb{A}$
pp-defines no $[(O(x_1, x_2) \implies x_3 = x_4)]$-relations with $O \in \{ E, N \}$.
\end{lemma}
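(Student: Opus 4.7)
My plan is to use Theorem~\ref{thm:Galoisconn}: to rule out a pp-definition of a relation $R$ in $\mathbb{A}$, it suffices to exhibit a polymorphism of $\mathbb{A}$ violating $R$. The polymorphism will be the ternary majority injection $f$ from the hypothesis. Fix a forward witness $t_1 \in R$ with $(t_1[1], t_1[2]) \in O$ and $t_1[3] = t_1[4]$, and a backward witness $t_2 \in R$ with $(t_2[1], t_2[2]) \notin O$ and $t_2[3] \neq t_2[4]$. The aim is to construct $t^* \in R$ satisfying $(t^*[1], t^*[2]) \in O$ and $t^*[3] \neq t^*[4]$, contradicting the entailment.

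The first attempt is $t^* := f(t_1, t_1, t_2)$. Its pair-orbit at $(3,4)$ is $\neq$: two of the three inputs carry the $=$ pair-orbit here and one carries a $\neq$ pair-orbit, so by the hyperplane assumption (fixing one $t_1$) the binary reduct acts at pair-orbits $(=, \neq)$, and injectivity of $f$ forces the two outputs to differ. The pair-orbit at $(1,2)$ splits on $(t_2[1], t_2[2])$. If $(t_2[1], t_2[2])$ is in the other element of $\{E, N\}$, the three inputs have pair-orbits $O, O, \text{not-}O$ — all of $\neq$ type — and majority behaviour returns $O$. If instead $t_2[1] = t_2[2]$, fix $t_2$ via hyperplane and read off the binary behaviour at pair-orbits $(O, O)$: projection to either coordinate returns $O$ in case~(a); max ($NN \to N$) or min ($EE \to E$) returns $O$ in cases~(c) and (d); and the image-is-a-clique (E-constant) or image-is-an-independent-set (N-constant), together with injectivity, returns $O$ in case~(b) provided $O$ matches the ``E'' or ``N'' of the constant behaviour. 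Thus $f(t_1, t_1, t_2)$ works directly in every case except the ``misaligned'' subcase of~(b): E-constant with $O = N$, or N-constant with $O = E$, together with $t_2[1] = t_2[2]$.

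For this remaining subcase I will use a two-step construction. First form $t' := f(t_1, t_2, t_2)$. Analogous hyperplane bookkeeping plus injectivity show $t' \in R$ has pair-orbit in $\{E, N\} \setminus \{O\}$ at $(1,2)$ (so it is now of $\neq$ type, no longer $=$) and pair-orbit $\neq$ at $(3,4)$. Now $t^* := f(t_1, t_1, t')$ falls into the ``good'' $\neq$-subcase of the first attempt: the inputs at $(1,2)$ carry pair-orbits $O, O, \text{not-}O$, all of $\neq$ type, so majority returns $O$; and injectivity again forces $\neq$ at $(3,4)$.

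The main technical obstacle is the explicit bookkeeping of orbits through hyperplane reducts, particularly in degenerate situations where several inputs share the $=$ pair-orbit at the relevant coordinate pair: each such situation requires combining the appropriate entry of the behaviour table — balanced, projection, max, min, E/N-dominated, or E/N-constant — with the injectivity of $f$ to pin down the output orbit. But this is a finite, fully explicit case check against the behaviour definitions in Section~\ref{sect:canonicalop}.
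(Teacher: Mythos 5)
Your proof is correct and takes essentially the same approach as the paper's: both apply the majority injection as $f(t_1,t_1,t_2)$, split on whether $t_2[1]=t_2[2]$, check the hyperplane behaviours case by case, and resolve the misaligned $E$-constant/$N$-constant cases by a second application of $f$. The only cosmetic difference is your intermediate tuple $f(t_1,t_2,t_2)$, where the paper simply feeds the already-computed $f(t_1,t_1,t_2)$ back in as the third argument.
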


\noindent
We continue with ternary injections of behaviour minority.

\begin{lemma}
\label{lem:graphminorityequalities}
Let $\mathbb{A}$ be a reduct  of a countably infinite homogeneous graph
preserved by a ternary injection of behaviour minority which additionally is: 
\begin{itemize}
\item hyperplanely balanced and of behaviour projection,
\item hyperplanely of behaviour projection and E-dominated, or
\item  hyperplanely of behaviour projection and N-dominated, or
\item hyperplanely balanced of behaviour xnor,  or
\item hyperplanely balanced of behaviour xor.
\end{itemize}
Then $\mathbb{A}$ does not
pp-define $[(O(x_1, x_2) \implies x_3 = x_4)]$-relations with $O \in \{ E, N \}$.
\end{lemma}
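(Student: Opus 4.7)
My plan is to invoke Theorem~\ref{thm:Galoisconn}: every pp-definable relation must be preserved by every polymorphism, so it suffices to show that the ternary minority injection $f$, carrying any one of the five specified hyperplane behaviours, violates every $[(O(x_1,x_2) \implies x_3 = x_4)]$-relation $R$ for $O \in \{E, N\}$. Fix such $R$ and its two witnesses from Definition~\ref{def:effentailment}: $t_1 \in R$ with $(t_1[1],t_1[2]) \in O$ and $t_1[3] = t_1[4]$, and $t_2 \in R$ with $(t_2[1], t_2[2]) \notin O$ and $t_2[3] \neq t_2[4]$. Since $\Aut(\homograph) \subseteq \Aut(\mathbb{A})$, all automorphic images of $t_1, t_2$ lie in $R$, and by the extension property of $\homograph$ I may choose copies realising any prescribed pair-orbital configuration consistent with the internal structure of $t_1$ and $t_2$. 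The aim is to produce $s_1, s_2, s_3 \in R$ whose coordinatewise image $w := f(s_1, s_2, s_3)$ satisfies $(w[1], w[2]) \in O$ and $w[3] \neq w[4]$, contradicting $R$'s entailment.

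In the generic sub-case $(t_2[1], t_2[2]) \in \{E, N\}$, I take $s_1, s_2$ to be automorphic copies of $t_2$ and $s_3$ a copy of $t_1$, with all corresponding entries pairwise distinct. At positions $1, 2$ the pair-orbital triple is $(\alpha, \alpha, O)$, where $\alpha \in \{E, N\} \setminus \{O\}$ is the orbital of $(t_2[1], t_2[2])$; the minority rule (output $N$ iff an even number of the three entries is $E$) gives orbital $O$ in both values of $O$. At positions $3, 4$ the pair-orbital triple is $(\gamma, \gamma, =)$ with $\gamma \in \{E, N\}$ the orbital of $(t_2[3], t_2[4])$; the single equality activates the binary hyperplane function $(x, y) \mapsto f(x, y, s_3[3])$ applied to two $\gamma$-pairs, and a direct inspection of the five cases confirms the output is never $=$---namely $\gamma$ for the projection-based cases (1, 2, 3), $E$ for balanced xnor (case 4), and $N$ for balanced xor (case 5). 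Hence $w \notin R$.

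The main obstacle is the degenerate sub-case $t_2[1] = t_2[2]$: every automorphic copy of $t_2$ has its first two entries equal, and the generic arrangement yields a two-equality pattern at positions $1, 2$, outside the minority/hyperplane regime. I handle it with one of two arrangements. Arrangement A---$s_1, s_2$ copies of $t_2$ and $s_3$ a copy of $t_1$---gives pair-orbital triple $(=, =, O)$ at positions $1, 2$, reducing via the hyperplane $(y, z) \mapsto f(s_1[1], y, z)$ to the binary input pair $(=, O)$. Balanced hyperplanes (cases 1, 4, 5) send $(=, O)$ to $O$, while E-dominated (case 2) always outputs $E$ and N-dominated (case 3) always outputs $N$, so arrangement A suffices unless the dominated direction opposes $O$, i.e., except for case 2 with $O = N$ and case 3 with $O = E$. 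For those residuals I use arrangement B---$s_1, s_2$ copies of $t_1$ and $s_3$ a copy of $t_2$---producing pair-orbital triple $(O, O, =)$ at positions $1, 2$, handled by the hyperplane $(x, y) \mapsto f(x, y, s_3[1])$ on the $\neq\neq$ input $(O, O)$ where the projection component of the behaviour delivers $O$. In both arrangements, positions $3, 4$ carry pattern $(\gamma, \gamma, =)$ or $(=, =, \gamma)$, treated by the hyperplane with the opposite argument fixed and always yielding $w[3] \neq w[4]$. The technical bulk is the bookkeeping across the five hyperplane behaviours and two values of $O$; the key unifying observation is the dichotomy between balanced hyperplanes (sending $(=, X)$ to $X$) and dominated hyperplanes (sending $(=, X)$ to a fixed orbital), which dictates the choice between arrangements A and B.
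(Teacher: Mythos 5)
Your proof is correct and follows essentially the same route as the paper's: both derive a contradiction by applying the ternary minority injection coordinatewise to the two witness tuples of the efficient entailment and tracking the resulting orbitals through the minority behaviour (when $t_2[1]\neq t_2[2]$) and the hyperplane behaviours (when $t_2[1]=t_2[2]$). The only difference is cosmetic: where you switch to the arrangement $f(t_1,t_1,t_2)$ and use the projection component of the hyperplane behaviour to dispatch the residual $E$-dominated/$N$-dominated cases, the paper instead applies $f$ a second time to the intermediate tuple and falls back on the minority behaviour.
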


We are already in the position to prove that another large family of $\csp(\mathbb{A})$ under consideration may be solved 
by establishing minimality.

\begin{corollary}
\label{cor:Comega23min}
Let $\mathbb{A}$ be a first-order expansion of $C^{\omega}_2$ preserved by a canonical ternary injection of behaviour minority which is hyperplanely balanced of behaviour
xnor and an oligopotent qnu-operation. Then $\mathbb{A}$ has relational width $(2,3)$. 
\end{corollary}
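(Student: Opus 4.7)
The plan is to apply Lemma~\ref{lem:Comega23excludes}. First observe that for $\homograph = C^{\omega}_2$ we have $\maxbound_{\homograph} = 3$ (the forbidden substructures are the path on three vertices and the triangle, both of size~$3$), so the conclusion ``relational width $(2,3)$'' of Lemma~\ref{lem:Comega23excludes} coincides with the desired $(2, \maxbound_{\homograph})$. The oligopotent qnu-polymorphism is part of the hypothesis, so it only remains to verify the two non-pp-definability conditions required by that lemma: that $\mathbb{A}$ pp-defines neither $\uuN$ nor any $[(O(x_1, x_2) \implies x_3 = x_4)]$-relation with $O \in \{E, N\}$.

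The second condition is delivered off the shelf by Lemma~\ref{lem:graphminorityequalities}: its hypotheses include precisely the case ``ternary injection of behaviour minority which is hyperplanely balanced of behaviour xnor'', which matches our setting, and its conclusion rules out pp-definability of $[(O(x_1, x_2) \implies x_3 = x_4)]$-relations for $O \in \{E, N\}$.

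For the first condition, I would show that the canonical ternary injection $f$ itself violates $\uuN$; by Theorem~\ref{thm:Galoisconn} this is enough. Choose any $c \in A$ and elements $a_2, b_2, a_3, b_3 \in A$, pairwise distinct and distinct from $c$, with $N(a_2, b_2)$ and $N(a_3, b_3)$ (such elements exist in $C^{\omega}_2$). The three pairs $(c, c), (a_2, b_2), (a_3, b_3)$ all lie in $\uuN$. Fixing the first coordinate of $f$ to $c$ yields a binary operation $g(x, y) := f(c, x, y)$ which, by the hyperplane xnor hypothesis, is xnor. The pair of tuples $((a_2, a_3), (b_2, b_3))$ has type $\NeqNeq$ together with $\NN$, so the xnor clause forces $E(g(a_2, a_3), g(b_2, b_3))$, i.e., $E(f(c, a_2, a_3), f(c, b_2, b_3))$. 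Thus $f$ sends three $\uuN$-pairs to an $E$-related pair, so $f$ does not preserve $\uuN$.

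I do not expect a genuine obstacle: the corollary is a direct reduction to Lemma~\ref{lem:Comega23excludes}, and the only content beyond invoking that lemma and Lemma~\ref{lem:graphminorityequalities} is the short xnor computation witnessing that $f$ fails to preserve $\uuN$. The one point to be slightly careful about is choosing the witnesses in sufficiently generic position (pairwise distinct) so that the hyperplane binary behaviour of $f$ applies unambiguously, which is trivially possible in $C^{\omega}_2$.
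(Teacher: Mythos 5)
Your proposal is correct and follows exactly the paper's route: invoke Lemma~\ref{lem:graphminorityequalities} to exclude the $[(O(x_1,x_2)\implies x_3=x_4)]$-relations, observe that the xnor minority violates $\uuN$, and conclude via Lemma~\ref{lem:Comega23excludes}; your explicit xnor computation just fills in a step the paper asserts without detail. One inessential slip: the forbidden subgraphs of $C^{\omega}_2$ are the path on three vertices and the null graph on three vertices (not the triangle, which embeds into an infinite clique), though this does not affect the argument since both have size $3$.
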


\begin{proof}
By Lemma~\ref{lem:graphminorityequalities}, the structure $\mathbb{A}$ does not pp-define $[(O(x_1, x_2) \implies x_3 = x_4)]$ with $O \in \{ E, N \}$.  Since a canonical ternary injection of behaviour minority which is hyperplanely balanced of behaviour
xnor does not preserve $\uuN$ , the result follows by appealing to 
Lemma~\ref{lem:Comega23excludes}. 
\end{proof}

The third lemma of this subsection takes care of the third kind of ternary operations that occurrs in
complexity classifications of CSPs for reducts of  homogeneous graphs.

\begin{lemma}
\label{lem:implexcludbyh}
Let $\mathbb{A}$ be a reduct of a countably infinite homogeneous graph preserved by a ternary 
canonical operation  $h$ with $h(N, \cdot, \cdot) = h(\cdot,N, \cdot) = h(\cdot, \cdot,N) = N$ and which behaves like a
minority on $\{E,=\}$, i.e., $h$ satisfies the behaviour $B$ such that $B(E,E,E) = B(E,=,=) = B(=,E,=) = B(=,=, E) = E$
and $B(=,=,=) = B(=,E,E) = B(E,=,E) = B(E,E,=)$ equals $=$. Then $\mathbb{A}$ pp-defines none of the following types of relations:
\begin{itemize}
\item $[(N(x_1,x_2) \implies \uuE(x_3, x_4))]$-relations,
\item $[(N(x_1,x_2) \implies x_3 = x_4)]$-relations, 
\item $[(N(x_1,x_2) \implies E(x_3, x_4)), (\uuE(x_3, x_4))]$-relations. 
\end{itemize}
\end{lemma}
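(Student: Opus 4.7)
The plan is to invoke Theorem~\ref{thm:Galoisconn} and establish directly that $h$ itself violates each of the three listed relation types; this suffices since any relation pp-definable in $\structA$ must be preserved by every polymorphism of $\structA$. Because $h$ is canonical with respect to $\Aut(\homograph)$, the orbital of $h(s_1, s_2, s_3)$ on any coordinate pair depends only on the corresponding triple of orbitals, so for each candidate relation $R$ it suffices to pick efficient-entailment witnesses $t_1, t_2 \in R$ and compute the orbitals of the componentwise image on the two pair-blocks $\{1,2\}$ and $\{3,4\}$.

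All three checks follow the same recipe. On block $\{1,2\}$ the chosen arrangement of arguments will feed exactly one $N$-orbital into $h$, and the absorption hypothesis $h(N,\cdot,\cdot)=h(\cdot,N,\cdot)=h(\cdot,\cdot,N)=N$ forces the output orbital there to be $N$, uniformly in whether the other $\uuE$-inputs happen to be $E$ or $=$. On block $\{3,4\}$ the minority behaviour on $\{E,=\}$ produces the specific orbital contradicting the defining entailment. Concretely, for $[(N(x_1,x_2) \implies \uuE(x_3,x_4))]$-relations take $h(t_1,t_2,t_2)$, so that the input triples on the two blocks are $(N,\uuE,\uuE)$ and $(\uuE,N,N)$, both absorbing to $N$, and the image lies in $N \wedge N$; for $[(N(x_1,x_2) \implies x_3 = x_4)]$-relations take $h(t_2,t_1,t_1)$, so that block $\{3,4\}$ has input triple $(O,=,=)$ with $O \in \{E,N\}$ and hence output orbital $O$, never $=$; and for $[(N(x_1,x_2) \implies E(x_3,x_4)),(\uuE(x_3,x_4))]$-relations the auxiliary $\uuE$-entailment combined with $\neg E(t_2[3],t_2[4])$ forces $t_2[3]=t_2[4]$, so $h(t_2,t_1,t_1)$ has input triple $(=,E,E)$ on block $\{3,4\}$, and the minority rule $B(=,E,E)={=}$ places the image in $N \wedge {=}$, contradicting the implication $N \implies E$.

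I do not foresee any substantive obstacle. The only care needed is that the $\uuE$-orbital of a witness may split into two sub-cases ($E$ or $=$), but $N$-absorption keeps block $\{1,2\}$ uniform across these, and on block $\{3,4\}$ the minority behaviour handles both sub-cases via $B(O,=,=)=O$ for $O \in \{E,N\}$ together with $B(=,E,E)={=}$. The argument therefore reduces to a direct inspection of a handful of orbital triples in $\{E,N,=\}^3$ in each of the three cases.
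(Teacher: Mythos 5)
Your proposal is correct and follows essentially the same route as the paper's proof: in each case one applies $h$ to the two efficient-entailment witnesses (the paper uses $h(t_1,t_1,t_2)$ where you use the permuted variants $h(t_1,t_2,t_2)$ and $h(t_2,t_1,t_1)$, which is immaterial since both the $N$-absorption identities and the behaviour $B$ are symmetric), and the resulting orbital pair $N\wedge N$, $N\wedge O$ with $O\neq{=}$, or $N\wedge{=}$ contradicts the entailed implication in the respective case. The only detail worth making explicit in a final write-up is the observation you already note for the third case, namely that the auxiliary entailment $\uuE(x_3,x_4)$ together with $(t_2[3],t_2[4])\notin E$ forces $t_2[3]=t_2[4]$.
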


\subsection{Types of Relations violated by Oligopotent QNUs}
\label{sect:violatedbyQNU}

Here we provide a list of quaternary relations of interest violated by ternary canonical operations and oligopotent qnu-operations. We start with the case where the considered homogeneous graph is the random graph.

\begin{lemma}
\label{lem:atmostoneENorNE}
Let $\mathbb{A}$ be a first-order expansion of the random graph 
preserved by a ternary injection of behaviour majority which additionally satisfies one of the conditions in Lemma~\ref{lem:graphmajorityequalities}  or of behaviour minority which additionally satisfies one of the conditions in
Lemma~\ref{lem:graphminorityequalities}, and an oligopotent 
qnu-operation. Then $\mathbb{A}$ pp-defines at most one of the following:
\begin{enumerate}
\item \label{atmostone:ENequal} either a $[(E(x_1, x_2) \implies \uuN(x_3, x_4))]$-relation or
\item \label{atmostone:NEequal} a $[(N(x_1,x_2) \implies \uuE(x_3,x_4))]$-relation.
\end{enumerate}
\end{lemma}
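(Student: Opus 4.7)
The plan is to prove the lemma by contradiction. Assume that $\mathbb{A}$ pp-defines both a $[(E(x_1,x_2) \implies \uuN(x_3,x_4))]$-relation $R_1$ and a $[(N(x_1,x_2) \implies \uuE(x_3,x_4))]$-relation $R_2$. Under the hypothesis, the ternary canonical injection $f$ of $\mathbb{A}$ satisfies the conditions of Lemma~\ref{lem:graphmajorityequalities} or Lemma~\ref{lem:graphminorityequalities}; consequently $\mathbb{A}$ pp-defines no $[(O(x_1,x_2) \implies x_3=x_4)]$-relation for $O \in \{E,N\}$. The goal is thus to combine $R_1$, $R_2$, the base relations $E,N$, the operation $f$, and the oligopotent qnu-polymorphism $g$ to derive precisely such a forbidden relation, contradicting the assumption.

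The central construction is the pp-definable quaternary relation
$$S(x_1,x_2,x_3,x_4) \equiv R_1(x_1,x_2,x_3,x_4) \wedge R_2(x_1,x_2,x_3,x_4).$$
Combining the two defining entailments with their contrapositives, $S$ imposes simultaneously $E(x_1,x_2) \implies \uuN(x_3,x_4)$, $N(x_1,x_2) \implies \uuE(x_3,x_4)$, $E(x_3,x_4) \implies \uuN(x_1,x_2)$, and $N(x_3,x_4) \implies \uuE(x_1,x_2)$; in particular $S$ excludes the pair-orbits $(E,E)$ and $(N,N)$ at positions $(1,2)$ and $(3,4)$. I would then extract from (a suitable pp-refinement of) $S$ a forward witness with $E(x_1,x_2)$ and $x_3=x_4$, together with a backward witness with $\uuN(x_1,x_2)$ and $x_3 \neq x_4$, yielding a $[(E(x_1,x_2) \implies x_3=x_4)]$-relation. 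The backward witness is supplied directly by the backward tuple of $R_1$ (having $\uuN$ at $(1,2)$ and $E$ at $(3,4)$). The forward witness is obtained by applying the near-unanimity $g$ to a combination of the forward tuple of $R_1$ (with $E$ at $(1,2)$ and $\uuN$ at $(3,4)$) and the backward tuple of $R_2$ (with $\uuE$ at $(1,2)$ and $N$ at $(3,4)$), arranged so that two of the three arguments agree at positions $(3,4)$ on a common element: oligopotency of $g$ collapses those diagonal entries to an automorphic image of that single value, producing $x_3=x_4$, while the hyperplane behaviour of $f$ acting in parallel with $g$ keeps $E$ at positions $(1,2)$.

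\textbf{Main obstacle.} The bulk of the work is the orbit bookkeeping. The forward and backward tuples of each $R_i$ can each inhabit several orbits, depending on whether the $\uuN$/$\uuE$ slot is realised by $=$ or by $N$/$E$ respectively, and on the cross-pair $E$/$N$/$=$-types between positions $(1,2)$ and $(3,4)$. For each combination I must verify that the joint action of $g$ and $f$ genuinely places the target orbit $(E,=)$ inside the pp-closure of $\mathbb{A}$ without inadvertently landing in one of the orbits excluded by $S$. The hyperplane behaviour of $f$ (balanced and of behaviour projection, hyperplanely $E$- or $N$-constant, hyperplanely max and $E$-dominated, or hyperplanely min and $N$-dominated) is precisely what prevents edges from being recreated out of equalities during the diagonal collapse, and the oligopotency of $g$ keeps diagonal entries under control. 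The dual conclusion, namely the derivation of a $[(N(x_1,x_2) \implies x_3=x_4)]$-relation, is obtained by the symmetric argument that interchanges the roles of $R_1$ and $R_2$ (equivalently, of $E$ and $N$).
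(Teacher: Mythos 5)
Your overall strategy diverges from the paper's and, as it stands, has a genuine gap at its core. The relation $S := R_1 \wedge R_2$ only excludes the orbit-pairs $(E,E)$ and $(N,N)$ at positions $(1,2)$/$(3,4)$; it does \emph{not} entail $(E(x_1,x_2) \implies x_3 = x_4)$, because tuples with $E$ at $(1,2)$ and $N$ at $(3,4)$ are still permitted. In fact, the standing hypothesis that no $[(E(x_1,x_2)\implies x_3=x_4)]$-relation is pp-definable \emph{forces} every $[(E(x_1,x_2)\implies \uuN(x_3,x_4))]$-relation to contain an $\EN$-tuple (otherwise its forward and backward witnesses would already make it a $[(E(x_1,x_2)\implies x_3=x_4)]$-relation), so the ``suitable pp-refinement'' that would eliminate such tuples while retaining the two witnesses is exactly the hard step, and you give no construction for it. There is also the smaller issue that the individual witness tuples of $R_1$ and $R_2$ need not survive the intersection $R_1\wedge R_2$, so even the non-emptiness of the required orbits of $S$ is unaddressed.

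The second gap is the use of the qnu-operation. It is $k$-ary for an unknown $k$, not ternary, and, more importantly, it is only oligopotent, not canonical: the qnu identities control its values on argument tuples having a main value, but say nothing about the orbit of the pair $(g(\bar a), g(\bar b))$ when $\bar a,\bar b$ are non-diagonal. So while collapsing positions $3,4$ via the qnu identities is fine, the claim that the result ``keeps $E$ at positions $(1,2)$'' is unjustified; and since $f$ and $g$ are distinct polymorphisms, the hyperplane behaviour of $f$ cannot act ``in parallel'' with an application of $g$. This missing control over the qnu is precisely what the paper's proof is organized around: it first shows both relations contain $\EN$- and $\NE$-tuples and normalizes them (Observations~\ref{obs:oneequalityEN} and~\ref{obs:ENtuples}), then runs an inductive counting argument (Observation~\ref{obs:complementaryEN}) that pins down the behaviour of the $k$-ary qnu on constant-injective pairs step by step, using the extension property of the random graph to assemble $k$-column witnesses whose columns are the normalized tuples; the induction forces an $\EE$- or $\NN$-tuple into one of the two relations, contradicting that relation's own efficient entailment rather than manufacturing a new forbidden relation. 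Without a substitute for that analysis, your forward witness cannot be produced.
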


\noindent
Here comes the corollary.

\begin{corollary}
\label{cor:ENternaryQNU}
 Let $\mathbb{A}$ be a first-order expansion of the random graph 
preserved by a ternary injection of behaviour majority which additionally satisfies one of the conditions in Lemma~\ref{lem:graphmajorityequalities}  or of behaviour minority which additionally satisfies one of the conditions in
Lemma~\ref{lem:graphminorityequalities} and an oligopotent 
qnu-operation. Then $\mathbb{A}$ has relational width $(2,3)$.
\end{corollary}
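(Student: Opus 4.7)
The plan is to derive this corollary by a direct application of Lemma~\ref{lem:injO1dominating}. Since $\homograph$ is the random graph, its bounds are small (a loop and a directed edge) and $\maxbound_{\homograph}=3$, so Lemma~\ref{lem:injO1dominating} would give relational width $(2,3)$, which is exactly what the corollary asserts. It therefore suffices to choose a labeling $\{O_1,O_2\}=\{E,N\}$ for which none of the three types of relations listed in Lemma~\ref{lem:injO1dominating} is pp-definable in $\mathbb{A}$.

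First I would settle condition~\eqref{injO1dom:O1uuO2} of Lemma~\ref{lem:injO1dominating}. By Lemma~\ref{lem:atmostoneENorNE}, applied under the present hypotheses (a suitable ternary majority/minority canonical injection together with an oligopotent qnu-polymorphism), at most one of a $[(E(x_1,x_2)\implies \uuN(x_3,x_4))]$-relation or a $[(N(x_1,x_2)\implies \uuE(x_3,x_4))]$-relation is pp-definable in $\mathbb{A}$. Pick $O_1$ to be whichever of $E,N$ corresponds to the non-pp-definable direction, and let $O_2$ be the other element of $\{E,N\}$. With this choice, condition~\eqref{injO1dom:O1uuO2} of Lemma~\ref{lem:injO1dominating} is fulfilled.

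For conditions~\eqref{injO1dom:O1Eq} and~\eqref{injO1dom:O2Eq} I would invoke the ternary-polymorphism lemmas. Lemma~\ref{lem:graphmajorityequalities} (in the majority case) or Lemma~\ref{lem:graphminorityequalities} (in the minority case) asserts that $\mathbb{A}$ pp-defines no $[(O(x_1,x_2)\implies x_3=x_4)]$-relation for either choice of $O\in\{E,N\}$. This immediately discharges condition~\eqref{injO1dom:O1Eq} regardless of which of $E,N$ we have labelled $O_1$. For condition~\eqref{injO1dom:O2Eq}, observe that every $[(O_2(x_1,x_2)\implies x_3=x_4),(\uuOtwo(x_1,x_2)\wedge \uuOtwo(x_3,x_4))]$-relation is, by Definition~\ref{def:effentailment} and the notational conventions of Section~\ref{sect:effent}, in particular a $[(O_2(x_1,x_2)\implies x_3=x_4)]$-relation, so its pp-definability would again contradict the same lemma. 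Applying Lemma~\ref{lem:injO1dominating} then yields the claim.

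The main obstacle is not in this corollary itself, which is essentially a bookkeeping assembly of earlier lemmas, but in Lemma~\ref{lem:atmostoneENorNE}: without the ``at most one'' dichotomy between the two oriented entailment relations, there would be no freedom to orient $\{O_1,O_2\}$ favourably, and Lemma~\ref{lem:injO1dominating} could not be invoked in a single pass. The other ingredients, Lemmas~\ref{lem:graphmajorityequalities} and~\ref{lem:graphminorityequalities}, handle the symmetric equality-implication conditions uniformly in $O$, so no case split beyond the binary choice described above is required.
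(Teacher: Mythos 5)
Your proposal is correct and follows essentially the same route as the paper: invoke Lemma~\ref{lem:atmostoneENorNE} to choose the orientation $\{O_1,O_2\}=\{E,N\}$ for which no $[(O_1(x_1,x_2)\implies\uuOtwo(x_3,x_4))]$-relation is pp-definable, use Lemmas~\ref{lem:graphmajorityequalities} and~\ref{lem:graphminorityequalities} to rule out the equality-implication relations, and conclude via Lemma~\ref{lem:injO1dominating} with $\maxbound_{\homograph}=3$. Your explicit remark that a $[(O_2(x_1,x_2)\implies x_3=x_4),(\uuOtwo(x_1,x_2)\wedge\uuOtwo(x_3,x_4))]$-relation is in particular a $[(O_2(x_1,x_2)\implies x_3=x_4)]$-relation is a detail the paper leaves implicit, and it is correct.
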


\begin{proof}
By appeal to Lemma~\ref{lem:atmostoneENorNE}, it follows that there are $\{ O_1, O_2 \} = \{ E, N \}$ 
such that $\mathbb{A}$ does not pp-define a $[(O_1(x_1,x_2) \implies \underline{\underline{O_2}}(x_3, x_4))]$-relation. By Lemmas~\ref{lem:graphmajorityequalities} and~\ref{lem:graphminorityequalities}, $\mathbb{A}$
pp-defines neither $[(O_1(x_1, x_2) \implies x_3 = x_4)]$-relations nor $[(O_2(x_1, x_2) \implies x_3 = x_4)]$-relations.
Since $\maxbound_{\homograph}$ in the case where $\homograph$ is the random graph equals $3$, the result follows by Lemma~\ref{lem:injO1dominating}.
\end{proof}

We now turn to the case where the considered homogeneous graph is the disjoint union of $\omega$ edges.

\begin{lemma}
\label{lem:noEEqEqEImplications}
Let $\mathbb{A}$ be a first-order expansion of $C^2_{\omega}$ 
preserved by a ternary 
canonical operation  $h$ with $h(N, \cdot, \cdot) = h(\cdot,N, \cdot) = h(\cdot, \cdot,N) = N$ and which behaves like a
minority on $\{E,=\}$ and an oligopotent qnu-operation.
Then $\mathbb{A}$  pp-defines neither
\begin{itemize}
\item a $[(E(x_1, x_2) \implies (x_3 = x_4)),(\uuE(x_1, x_2) \wedge N(x_2, x_3) \wedge \uuE(x_3, x_4))]\textrm{-relation}$ nor
\item  a $[((x_1 = x_2) \implies E(x_3, x_4)),(\uuE(x_1, x_2) \wedge N(x_2, x_3) \wedge \uuE(x_3, x_4))]\textrm{-relation}$.
\end{itemize}
\end{lemma}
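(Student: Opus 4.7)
The plan is to prove the lemma by contradiction in both cases; the two statements are symmetric under swapping the roles of $=$ and $E$ between positions $1, 2$ and positions $3, 4$, so I describe only the first. Suppose $R$ is a pp-definable $[(E(x_1, x_2) \implies x_3 = x_4),(\uuE(x_1, x_2) \wedge N(x_2, x_3) \wedge \uuE(x_3, x_4))]$-relation in $\mathbb{A}$; then $R$ is preserved by both $h$ and the oligopotent qnu $f$. First I would carry out the orbit analysis: since every equivalence class of $\uuE$ in $C^2_\omega$ has exactly two elements and every tuple in $R$ satisfies the entailed conjunction, each tuple of $R$ lies in one of four $\Aut(C^2_\omega)$-orbits labelled by the pair $(\mathrm{type}(x_1, x_2), \mathrm{type}(x_3, x_4)) \in \{=, E\}^2$. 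The implication excludes orbit $(E, E)$, and efficient entailment produces a tuple $t_1 \in R$ of orbit $(E, =)$ and a tuple $t_2 \in R$ of orbit $(=, E)$.

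Next I would use $h$ to exclude orbit $(=, =)$ from $R$. If some $t_0 = (p, p, q, q) \in R$ with $N(p, q)$ existed, evaluating $h$ componentwise on $(t_0, t_1, t_2)$ gives, via $h$'s minority-on-$\{E, =\}$ and $N$-absorbing behavior, input pair types $(=, E, =) \to E$ at positions $(1, 2)$, $(=, =, E) \to E$ at positions $(3, 4)$, and three copies of $N$ yielding $N$ at positions $(2, 3)$. Hence $h(t_0, t_1, t_2) \in R$ lies in orbit $(E, E)$, contradicting the entailment. The dual computation handles the second-type case, ruling out orbit $(E, E)$ from $R$ when the forbidden orbit is $(=, =)$. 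After this step, $R$ is contained in $\mathrm{orbit}(E, =) \cup \mathrm{orbit}(=, E)$.

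Finally I would invoke the oligopotent qnu $f$. Using homogeneity of $C^2_\omega$, pick concrete witnesses $t_1 = (a, b, c, c)$ and $t_2 = (c, c, a, b)$ in $R$, obtained by letting $\{a, b\}$ and $\{c, \bar c\}$ be distinct edges of $C^2_\omega$. Apply $f$ to carefully designed $(k+1)$-tuples assembled from $t_1$, $t_2$, and auxiliary tuples generated from them by $h$ and by automorphisms. The qnu identities $f(x, \ldots, x, y) = f(x, \ldots, x) = g(x)$, with $g \in \overline{\Aut(\mathbb{A})}$ preserving orbits, force the output of $f$ at certain positions to be equal (as values), and the alignment of values arranged in $t_1, t_2$ then converts these value-level equalities into an output tuple whose $\Aut(C^2_\omega)$-orbit is $(=, =)$ or $(E, E)$. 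Either outcome contradicts the previous two steps.

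The main obstacle will be Step 3: orchestrating the qnu collapses against the size-$2$ equivalence-class structure of $C^2_\omega$ so as to produce a tuple of $R$ in a forbidden orbit, given only that $f$ is oligopotent qnu (not necessarily canonical). The key leverage is the combination of $h$'s minority behavior (which freely mixes tuples across the two allowed orbits) with oligopotency (which pins down $g = f(x, \ldots, x)$ as an orbit-respecting map), together with the concrete value-sharing provided by the choice $t_2 = (c, c, a, b)$. The remaining bookkeeping is routine once this orbit calculus is in place, and the second type of relation is handled by the same template with the $h$-computation in Step 2 replaced by its dual.
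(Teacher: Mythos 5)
Your Steps 1 and 2 are correct and coincide with the paper's own setup: efficient entailment together with the entailed conjunction $\uuE(x_1,x_2)\wedge N(x_2,x_3)\wedge\uuE(x_3,x_4)$ yields an $\EEq$-tuple and an $\EqE$-tuple of $R$ in either of the two cases, and the minority behaviour of $h$ on $\{E,=\}$ shows that a $==$-tuple is present if and only if an $\EE$-tuple is, so the whole lemma reduces to producing one tuple of $R$ in a forbidden orbit. The genuine gap is Step 3, which is not ``routine bookkeeping'' but the entire content of the proof, and the mechanism you sketch does not suffice. If you form a $4\times k$ matrix whose columns are copies of $t_1=(a,b,c,c)$ and $t_2=(c,c,a,b)$ and apply the $k$-ary qnu $f$ to its rows, the qnu identities determine the output only when at most one column deviates from the rest; in that case the rows are $(a,\dots,a,c)$, $(b,\dots,b,c)$, $(c,\dots,c,a)$, $(c,\dots,c,b)$ and the image is $(g(a),g(b),g(c),g(c))$ with $g=f(x,\dots,x)$ orbit-preserving by oligopotency, i.e.\ again an $\EEq$-tuple. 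As soon as two or more columns deviate, the qnu identities say nothing; and $h$ cannot rescue you, because after Step 2 it maps the two allowed orbits of $R$ back into themselves. So no direct combination of the qnu identities, oligopotency, and $h$-juggling produces a forbidden tuple.

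The missing idea is the paper's bootstrapping induction (its Observation on $\{E,=\}$-connected pairs). For a constant $u\in A^k$ and an $\{E,=\}$-connected $s$, one tracks the orbit of $(f(u),f(s))$ as a function of $m=n_{E(u,s)}$, the number of coordinates where $u$ and $s$ are $E$-related. Preservation of $E$ forces behaviour $E$ at $m=k$. Whenever behaviour $E$ holds at level $m$, either behaviour $=$ propagates to levels $\{k-m,k-m+1\}$ (and dually back with the roles of $E$ and $=$ swapped), or the propagation fails; in the failure case one places the violating pair inside one clique of $C^2_\omega$ and a witnessing pair with $n_E=m$ inside another clique so that every column of the resulting $4\times k$ matrix is an automorphic image of $t_{\EEq}$ or $t_{\EqE}$, whence $f$ applied to the rows yields an $\EE$- or $==$-tuple of $R$ and you are done. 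The $+1$ shift in $\{k-m,k-m+1\}$ makes the two fronts sweep across all of $[k]$ and overlap, forcing $(f(u),f(s))$ to lie simultaneously in $E$ and in $=$, the final contradiction. Without this argument, or an equivalent replacement for it, your Step 3 does not go through.
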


\noindent
Then we provide another similar lemma.

\begin{lemma}
\label{lem:noEEEqEqImplications}
Let $\mathbb{A}$ be a first-order expansion of $C^2_{\omega}$ 
preserved by a ternary 
canonical operation  $h$ with $h(N, \cdot, \cdot) = h(\cdot,N, \cdot) = h(\cdot, \cdot,N) = N$ and which behaves like a
minority on $\{E,=\}$ and an oligopotent qnu-operation.
Then $\mathbb{A}$  pp-defines neither
\begin{itemize}
\item a $[(E(x_1, x_2) \implies E(x_3, x_4)),(\uuE(x_1, x_2) \wedge N(x_2, x_3) \wedge \uuE(x_3, x_4))]$-relation nor
\item a  $[((x_1 = x_2) \implies (x_3 = x_4)),(\uuE(x_1, x_2) \wedge N(x_2, x_3) \wedge \uuE(x_3, x_4))]$-relation.\end{itemize}
\end{lemma}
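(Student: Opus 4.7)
The plan is a proof by contradiction. Assume $\mathbb{A}$ pp-defines a relation $R$ of one of the two listed forms, so that $R$ is preserved by both polymorphisms $h$ and $f$. I will then derive a tuple in $R$ whose orbit is forbidden by the entailment clause of $R$.

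The first stage uses only $h$ to restrict the orbit content of $R$. The entailment frame $\uuE(x_1, x_2) \wedge N(x_2, x_3) \wedge \uuE(x_3, x_4)$ leaves four orbits of $4$-tuples in $C^2_{\omega}$ under $\Aut(C^2_{\omega})$, indexed by the pair of behaviours $(O_{12}, O_{34}) \in \{E, =\}^2$ on the pairs $(x_1, x_2)$ and $(x_3, x_4)$. Efficient entailment places both ``diagonal'' orbits $(E, E)$ and $(=, =)$ into $R$, while the $[\cdot \Rightarrow \cdot]$-entailment forbids one of the two ``mixed'' orbits -- namely $(E, =)$ for Type~1 and $(=, E)$ for Type~2. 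If the remaining mixed orbit were also in $R$, applying $h$ to a triple of representatives of the three realized orbits would, by the minority behaviour of $\xi^{\textrm{typ}}(h)$ on $\{E, =\}$ together with its $N$-dominance on the remaining pairs, produce a tuple of the forbidden mixed orbit, contradicting the preservation of $R$ by $h$. Hence the orbit content of $R$ is exactly $\{(E, E), (=, =)\}$, with both realized.

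In the second stage I would use $f$ and $h$ together to construct a tuple of forbidden orbit in $R$. Fix witnesses $t_{EE} \in R$ of orbit $(E, E)$ and $t_{==} \in R$ of orbit $(=, =)$. Using the fact that the cliques of $C^2_{\omega}$ are infinite, choose a second $(E, E)$-tuple $t'_{EE} \in R$ sharing three of four coordinates with $t_{EE}$ (the fourth coordinate being a fresh clique-mate). Feed $t_{EE}, t'_{EE}, t_{==}$ (and further copies of $t_{==}$ if $f$ has arity greater than three) into $f$: the qnu identity $f(y, x, \ldots, x) = g(x)$ applies on three of the four columns, pinning those output values to $g$-images of specific coordinates of $t_{EE}$, where $g(x) := f(x,\ldots,x) \in \overline{\Aut(C^2_{\omega})}$. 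On the fourth column the three input values are pairwise distinct but remain within one $\uuE$-class, and $f$-preservation of the pp-definable relations $\uuE$, $E$, $N$ forces the output to lie in a particular clique. A follow-up application of $h$, exploiting the canonical identification $\xi^{\textrm{typ}}(h)(E, =, E) = \, =$ on the appropriate pair, then collapses a pair of coordinates of the resulting tuple and propagates through the biconditional forced in Stage~1, producing a tuple in $R$ of orbit $(=, E)$ or $(E, =)$ -- a contradiction to the first stage. The argument for the two types of $R$ is parallel, with the roles of the two ``mixed'' orbits interchanged.

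The main obstacle lies in this second stage. Since $f$ is not assumed canonical, its output orbit is not determined by the input orbits alone, and the argument must chase specific coordinate values produced by the interplay of the qnu identities of $f$ with the coordinate-level identifications forced by the canonical minority of $h$. The availability of infinite cliques in $C^2_{\omega}$ is essential: it is what makes possible the construction of two $(E, E)$-tuples sharing three coordinates, which in turn triggers the qnu identity on three columns of the $f$-output simultaneously and enables the final coordinate collapse.
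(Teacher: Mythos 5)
Your Stage~1 is sound and coincides with the paper's setup and endgame: efficient entailment puts an $\EE$-tuple and a $==$-tuple into $R$, the entailed implication excludes one mixed orbit, and a single application of $h$ (minority on $\{E,=\}$, $N$-absorbing elsewhere) shows that the presence of either mixed orbit forces the other; so the whole proof reduces to manufacturing one $\EEq$- or $\EqE$-tuple in $R$. The paper packages this the other way around (produce a mixed tuple, then derive both and contradict efficient entailment), but the content is identical.

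Stage~2, however, rests on a false premise and an impossible construction. In the paper's notation $C^2_{\omega}$ is the disjoint union of $\omega$ cliques of size~$2$, i.e.\ of $\omega$ edges (you appear to have it confused with $C^{\omega}_2$, the union of two infinite cliques, which is handled by a different lemma). Consequently every element has a \emph{unique} $E$-neighbour, so there is no ``fresh clique-mate,'' and no two distinct tuples in the orbit $(E,E)$ inside the frame $\uuE(x_1,x_2)\wedge N(x_2,x_3)\wedge \uuE(x_3,x_4)$ can share three coordinates: fixing three coordinates of such a tuple determines the fourth. The tuple $t'_{\EE}$ you need therefore does not exist, and the subsequent plan (triggering the qnu identity on three of the four coordinate positions and then collapsing with $h$) cannot even be set up. The idea that is missing is the paper's descent argument: one builds a $4\times k$ array whose $k$ columns are copies of $t_{\EE}$ or $t_{==}$ chosen over various pairs of edges, so that applying the $k$-ary qnu $f$ to the four rows encodes the behaviour of $f$ on $\{E,=\}$-connected pairs $(u,s)$ with $u$ constant, graded by the number $n_{E(u,s)}$ of $E$-coordinates. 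Since $f$ preserves $E$, the case $n_{E(u,s)}=k$ yields $E$; an induction (Observation~\ref{obs:compEEEqEq}) shows that either this propagates all the way down to $n_{E(u,s)}=1$ --- which contradicts the quasi near-unanimity identity $f(y,x,\ldots,x)=f(x,\ldots,x)$ --- or at some step the array produces an $\EEq$-tuple in $R$, which is exactly what is needed. Without some argument of this kind that genuinely exploits the qnu identities across all $k$ coordinates, the proof does not go through.
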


\noindent
Then comes the corollary.

\begin{corollary}
\label{cor:EEqternaryQNU}
 Let $\mathbb{A}$ be a first-order expansion of $C^2_{\omega}$ 
preserved by a ternary 
canonical operation  $h$ with $h(N, \cdot, \cdot) = h(\cdot,N, \cdot) = h(\cdot, \cdot,N) = N$ and which behaves like a
minority on $\{E,=\}$ and an oligopotent qnu-operation.
Then $\mathbb{A}$ has relational width $(2,3)$.
\end{corollary}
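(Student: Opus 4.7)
The plan is straightforward: verify the four non-pp-definability hypotheses of Lemma~\ref{lem:hnear23min} and then invoke it directly. The structure $\mathbb{A}$ is already a first-order expansion of $C^2_\omega$ preserved by an oligopotent qnu-operation, so every algebraic ingredient of Lemma~\ref{lem:hnear23min} is in place; only the four families of forbidden $[\cdots]$-relations remain to be excluded.

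For the first three families --- $[(N(x_1,x_2) \implies \uuE(x_3,x_4))]$-relations, $[(N(x_1,x_2) \implies E(x_3,x_4)), (\uuE(x_3,x_4))]$-relations, and $[(N(x_1,x_2) \implies x_3=x_4)]$-relations --- the hypotheses imposed on the ternary canonical operation $h$ in the corollary (namely $h(N,\cdot,\cdot) = h(\cdot,N,\cdot) = h(\cdot,\cdot,N) = N$ together with minority behaviour on $\{E, =\}$) coincide verbatim with the hypotheses of Lemma~\ref{lem:implexcludbyh}. That lemma asserts that $\mathbb{A}$ pp-defines no relation of any of these three shapes, so this case requires nothing beyond quoting it.

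For the fourth family --- relations of the form $[(O_1(x_1,x_2) \implies O_2(x_3,x_4)), (\uuE(x_1,x_2) \wedge N(x_2,x_3) \wedge \uuE(x_3,x_4))]$ with $O_1, O_2$ ranging in $\{E, =\}$ --- I case-split on the ordered pair $(O_1, O_2)$. The mixed pairs $(E, =)$ and $(=, E)$ are precisely the two items of Lemma~\ref{lem:noEEqEqEImplications}, while the uniform pairs $(E, E)$ and $(=, =)$ are precisely the two items of Lemma~\ref{lem:noEEEqEqImplications}. Both lemmas ask for exactly the combination of $h$ and an oligopotent qnu-operation that the corollary provides, so in every case the applicable exclusion lemma applies directly, and no relation of the fourth family can be pp-definable in $\mathbb{A}$.

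With all four conditions of Lemma~\ref{lem:hnear23min} verified, that lemma yields the desired relational width $(2,3)$. The substantive work has been done in Lemmas~\ref{lem:implexcludbyh}, \ref{lem:noEEqEqEImplications}, and~\ref{lem:noEEEqEqImplications}; this corollary is essentially a bookkeeping step, and the only mild care needed is in parsing the set notation ``$\{O_1, O_2\} = \{E, =\}$'' in Lemma~\ref{lem:hnear23min} broadly enough over $(O_1, O_2) \in \{E, =\}^2$ to see that the two exclusion lemmas jointly cover every ordered pair that could arise.
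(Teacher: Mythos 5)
Your proposal is correct and matches the paper's own proof: the paper likewise disposes of the first three items of Lemma~\ref{lem:hnear23min} via Lemma~\ref{lem:implexcludbyh} and of the fourth (over all four ordered pairs from $\{E,=\}$) via Lemmas~\ref{lem:noEEqEqEImplications} and~\ref{lem:noEEEqEqImplications}, then invokes Lemma~\ref{lem:hnear23min}. Your remark about reading the set notation in Item~4 broadly over ordered pairs is exactly the reading the paper intends.
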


\begin{proof}
By Lemmas~\ref{lem:implexcludbyh},~\ref{lem:noEEqEqEImplications} and~\ref{lem:noEEEqEqImplications} none of the types of relations mentioned in Lemma~\ref{lem:hnear23min} is pp-definable in $\mathbb{A}$. Appealing to
Lemma~\ref{lem:hnear23min} completes the proof of the corollary. 
\end{proof}

\subsection{The Main Result}
\label{sect:mainresult}

Here we prove our main result.

\begin{theorem}
\label{thm:mainresult}
Let $\mathbb{A}$ be a first-order expansion of a countably infinite homogeneous  graph $\homograph$ which has bounded strict width. Then $\mathbb{A}$ has relational width $(2,\maxbound_{\homograph})$.
\end{theorem}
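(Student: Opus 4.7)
The plan is to proceed by case analysis based on the Lachlan-Woodrow classification of countably infinite homogeneous graphs, of which there are five families: the random graph, the Henson graphs $H_k$ for $k \geq 3$, the disjoint sums $C^s_n$ (with $1 \leq n, s \leq \omega$ and $\max(n,s) = \omega$), and the complements of the $H_k$ and of the $C^s_n$. The common entry point in all cases is Theorem~\ref{thm:strictwidth}, which turns the hypothesis of bounded strict width into an oligopotent quasi near-unanimity polymorphism $q$ of $\mathbb{A}$. This $q$ plugs directly into the hypotheses of Lemmas~\ref{lem:hnear23min} and~\ref{lem:Comega23excludes} and of Corollaries~\ref{cor:Comega23min}, \ref{cor:ENternaryQNU}, and~\ref{cor:EEqternaryQNU}, and, together with the tractability it entails, it gives access to the canonical binary and ternary polymorphisms catalogued in the complexity dichotomies cited in the introduction.

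The cases $\homograph \in \{C^1_\omega, C^\omega_1, C^\omega_\omega\}$ and $\homograph = H_k$ for $k \geq 3$ are handled immediately by Corollary~\ref{cor:newDich}: since bounded strict width entails tractability of $\csp(\mathbb{A})$, $\mathbb{A}$ must be preserved by one of the binary injections of Lemma~\ref{lem:graphreductbinary}, and the conclusion follows. For $\homograph$ the random graph, I would combine $q$ with a ternary canonical majority or minority injection satisfying one of the hyperplane conditions of Lemma~\ref{lem:graphmajorityequalities} or~\ref{lem:graphminorityequalities} (which exists by the classification in~\cite{BodirskyP15} applied to the tractable case) and invoke Corollary~\ref{cor:ENternaryQNU} to obtain relational width $(2,3)$, matching $(2,\maxbound_{\homograph})$ since $\maxbound_{\homograph} = 3$ here. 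For $\homograph = C^2_\omega$ and $\homograph = C^\omega_2$ I would locate the appropriate ternary canonical polymorphism, namely an $N$-absorbing operation acting as minority on $\{E,=\}$ in the first case, and a hyperplanely balanced xnor minority injection in the second, and apply Corollaries~\ref{cor:EEqternaryQNU} and~\ref{cor:Comega23min}, respectively.

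What remains are the parametric families $C^\omega_s$ and $C^s_\omega$ with $3 \leq s < \omega$, together with the complementary graphs. For these, the guiding idea is to canonise $q$ relative to $\Aut(\homograph)$ via the Ramsey-theoretic techniques surveyed in~\cite{BP-reductsRamsey}, extract a canonical operation of sufficient arity, and then verify that its behaviour simultaneously excludes all three kinds of relations forbidden by Lemma~\ref{lem:injO1dominating}, with the appropriate choice of $\{O_1, O_2\} = \{E, N\}$ dictated by the shape of $\homograph$; the lemma then delivers $(2,\maxbound_{\homograph})$ directly. The complementary graphs require no separate argument because the map swapping $E$ and $N$ is a first-order bi-interpretation which preserves both bounded strict width and $\maxbound_{\homograph}$, and which exchanges the roles of $O_1$ and $O_2$ in Lemma~\ref{lem:injO1dominating}.

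The main obstacle will be the last paragraph: passing from the abstract oligopotent qnu-polymorphism supplied by Theorem~\ref{thm:strictwidth} to a canonical operation whose behaviour, together with the canonical witnesses of tractability, rules out simultaneously all three implication-relations listed in Lemma~\ref{lem:injO1dominating}. The difficulty lies not in any single exclusion but in ensuring that the three clauses hold at once for one canonical operation derived from $q$. I expect this step to require a finer behaviour-analysis on configurations of size up to $\maxbound_{\homograph}$ than appears in Section~\ref{sect:violatedbyQNU}, and to be the step most sensitive to the case distinction between the various $C^s_n$.
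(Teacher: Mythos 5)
Your case decomposition matches the paper's for the random graph, for $C^1_\omega, C^\omega_1, C^\omega_\omega$ and $H_k$ (via Corollary~\ref{cor:newDich}), and for $C^2_\omega$ and $C^\omega_2$ (via Corollaries~\ref{cor:EEqternaryQNU} and~\ref{cor:Comega23min}); there you are essentially reproducing the paper's argument, modulo the minor point that for the random graph and for $C^2_\omega$ the classification may hand you a binary canonical injection rather than a ternary one, in which case you must fall back on Lemma~\ref{lem:graphreductbinary} rather than the ternary corollaries. Your explicit treatment of the complements via the $E$--$N$ swap is reasonable and in fact more explicit than what the paper writes down.

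The genuine gap is in your last two paragraphs, on the families $C^s_n$ with $3 \le n < \omega$ or $3 \le s < \omega$. You propose to canonise the oligopotent qnu-operation $q$ with respect to $\Aut(\homograph)$ and then verify, by a ``finer behaviour-analysis'' that you acknowledge you have not carried out, that the resulting canonical operation simultaneously excludes all three types of relations in Lemma~\ref{lem:injO1dominating}. This is not a proof but a research programme, and moreover it is attacking a problem that does not need to be solved: the paper disposes of these cases by citing Theorem~60 of~\cite{homographs-arxiv}, which says that a first-order expansion of such a $C^s_n$ is either homomorphically equivalent to a reduct of $(\N;=)$ or is \emph{not} preserved by any oligopotent qnu-operation. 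Combined with Theorem~\ref{thm:strictwidth}, the hypothesis of bounded strict width makes the second alternative impossible, so the case is essentially vacuous (degenerate), and no canonisation or behaviour analysis is required. As written, your argument leaves the theorem unproved precisely on the cases where $\maxbound_{\homograph}$ can exceed $3$ in the $C^s_n$ family, and the step you flag as ``the main obstacle'' is one the paper never has to take. A secondary omission: in several branches the cited classifications return the alternative ``$\mathbb{A}$ is homomorphically equivalent to a reduct of $(\N;=)$,'' and your proposal says nothing about why the conclusion holds in that subcase, whereas the paper at least signals that it is handled separately.
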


\begin{proof}
By the classification of Lachlan and Woodrow~\cite{LachlanWoodrow}, we have that $\homograph$ is either the random graph, a Henson graph $H_k$ with a forbidden $k$-clique where $k \geq 3$, 
a disjoint set of $n$ cliques of size $s$ denoted by $C^s_n$ or a complement of either $C^s_n$ or $H_k$. 
The case where $\homograph$ is $C^{\omega}_1$, $C^{1}_{\omega}$, $C^{\omega}_{\omega}$ or $H_k$ with $k \geq 3$ follows by Corollary~\ref{cor:newDich}.
If $C^s_n$ is such that $3 \leq n < \omega$ or 
$3 \leq s < \omega$, then by Theorem~60 in~\cite{homographs-arxiv}, a first-order expansion $\mathbb{A}$ 
of $C^s_n$ is either homomorphically equivalent to a reduct of $(\N; =)$ or is not preserved by an oligopotent qnu-operation and we are done.
If $\mathbb{A}$ is a first-order expansion of $C^{\omega}_2$, then by Theorem~61 in~\cite{homographs-arxiv} either it is homomorphically equivalent to a reduct of $(\N;=)$ or
is not preserved by an oligopotent qnu-operation or pp-defines both $E$ and $N$ and   is preserved by a canonical ternary injection of behaviour minority which is hyperplanely balanced of behaviour xnor and then $\mathbb{A}$ has relational width $(2,3)$ by Corollary~\ref{cor:Comega23min}. If $\mathbb{A}$ is a first-order expansion of $C^2_{\omega}$, then by Theorem~62
in~\cite{homographs-arxiv}, we have that either $\mathbb{A}$ is homomorphically equivalent to a reduct of $(\N; =)$, or it is not preserved by an oligopotent qnu-operation or it pp-defines both $E$ and $N$ and is preserved by a canonical binary injection  of behaviour min that is $N$-dominated or a ternary canonical operation $h$ with $h(N,\cdot,\cdot) = h(\cdot,N,\cdot) = h(\cdot,\cdot,N) = N$ and which behaves like a minority on $\{E,=\}$. In the former case the language $\mathbb{A}$ has relational width $(2,3)$ by Lemma~\ref{lem:graphreductbinary}, in the latter by Corollary~\ref{cor:EEqternaryQNU}.

The remaining case is where $\structA$ 
is a first-order expansion of  the random graph $G$ preserved by an oligopotent qnu-operation. By Theorem 6.1 in~\cite{BodirskyP15} we have that a first-order expansion of $G$ is either homorphically equivalent to a reduct of $(\N; =)$
and then we are done or pp-defines both $E$ and $N$ in which case, by 
Theorem~9.3 in~\cite{BodirskyP15}, we have that:
\begin{itemize}
\item $\mathbb{A}$ is preserved by  a binary injection  of behaviour max which is either balanced or E-dominated, 
by a binary injection  of behaviour min which is either balanced or $N$-dominated, 
by a binary injection which is E-constant, or a binary injection  which is  $N$-constant,
and then the theorem follows by Lemma~\ref{lem:graphreductbinary}, or 
\item $\mathbb{A}$ is preserved by a ternary injection of behaviour majority which additionally satisfies one of the conditions in Lemma~\ref{lem:graphmajorityequalities}  or of behaviour minority which additionally satisfies one of the conditions in
Lemma~\ref{lem:graphminorityequalities}, and then the theorem holds by Corollary~\ref{cor:ENternaryQNU}.
\end{itemize}
It completes the proof of the theorem.
\end{proof}

\section{Summary and Future Work}

In this paper we proved in particular that:
\begin{enumerate}
\item  \label{sum:binaryop} every first-order expansion 
of a homogeneous graph $\homograph$ preserved by a canonical binary operation considered in~\cite{BodirskyP15,equiv-csps,homographs-arxiv} and
\item \label{sum:stric} every first-order expansion of a homogeneous graph $\homograph$ with bounded strict width has relational width exactly $(2, \maxbound_{\homograph})$.
\end{enumerate}
A nice consequence of the former result is that all tractable reducts of $C^1_{\omega}, C^{\omega}_1, C^{\omega}_{\omega}$ and $H_{k}$ with $k \geq 3$ have relational width exactly $(2, \maxbound_{\homograph})$, and thus all tractable $\csp(\mathbb{A})$ may be solved by establishing $(2, \maxbound_{\homograph})$-minimality. Nevertheless, we find the latter result to be the main result of this paper. It is for the following reason.

Our general strategy is that we show that constraint languages $\mathbb{A}$ under consideration do not express `too many implications', i.e., quaternary relations that efficiently entail formulas of the form $(R_1(x_1, x_2) \implies R_2(x_3, x_4))$, see definitions in Section~\ref{sect:effent} and lemmas in Section~\ref{sect:suffconditions} and then use these facts in order to find a strategy of how to shrink a non-trivial $(2, \maxbound_{\homograph})$-minimal instance of the CSP so that it became a simple instance. 
In this paper, in order to show that certain relations are not pp-definable in $\structA$ we employ in particular some binary and ternary canonical operations. We believe that it is not in fact necessary and theorems analogous to Theorem~\ref{thm:mainresult} may be obtained for large families of constraint languages  using only the fact that structures $\mathbb{A}$ under consideration are preserved by oligopotent  qnu-operations. 
Thereby we believe that Question~\ref{quest:relstrictwidth} may be answered in full generality.

\bibliography{STACS.bib}

\appendix

\section{Proof of Proposition~\ref{prop:minimality}}

 The algorithm of establishing $(k,l)$-minimality is quite straightforward and adopted from the world of finite CSP.
 First for every $l$-element subset $\{ v_{i_1}, \ldots, v_{i_l} \}$ of the variables $\V = \{ v_1, \ldots, v_n \}$ in $\varphi_I$ that does not occurr in any constraint in $\varphi_I$ we introduce
 a constraint $(( v_{i_1}, \ldots, v_{i_l}), A^l)$.
  Then until the process stabilizes, for every $W \subseteq \V$ of size at most $k$ and every two constraints 
 $((v_{i_1}, \ldots, v_{i_l} ), R_1)$, $((v_{j_1}, \ldots, v_{j_k}), R_2)$  such that $W$  is a subset of both $\{ x_1, \ldots, x_a \}$ and
 $\{ y_1, \ldots, y_b \}$ we replace $R_1$ with 
 $R'_1(v_{i_1}, \ldots, v_{i_l} ) \equiv 
\exists z_1 \cdots \exists z_c~R_1(v_{i_1}, \ldots, v_{i_l} ) \wedge R_2(v_{j_1}, \ldots, v_{j_k})$
 where $\{ z_1, \ldots, z_c \} = \{ v_{i_1}, \ldots, v_{i_l} \} \setminus  \{ v_{j_1}, \ldots, v_{j_k} \}$; and we replace $R_2$ with $R'_2(v_{j_1}, \ldots, v_{j_k}) \equiv 
 \exists z_1 \cdots \exists z_c~R_1(v_{i_1}, \ldots, v_{i_l} ) \wedge R_2(v_{j_1}, \ldots, v_{j_k})$
 where $\{ z_1, \ldots, z_c \} = \{ v_{i_1}, \ldots, v_{i_l} \} \setminus  \{ v_{j_1}, \ldots, v_{j_k} \}$.
 
 We introduce $O(\left| \V \right|^l)$ new constraints and then we do the replacement  at most 
 the number of constraints multiplied by the number of pairwise different orbits of tuples of the length 
 of the longest constraint.  Indeed, in every step of the algorithm at least one orbit from at least one constraint is removed.
 Since there are at most $O(\left| \V \right|^l)$ constraints and the length of every constraint is bounded by a constant: $\max(l, m)$
 where $m$ is the largest arity in the signature of $\mathbb{A}$. It follows that for fixed $(k,l)$ the process of establishing $(k,l)$-minimality may be performed in time $O(\left| \V \right|^l)$. $\square$

\section{Proof of Corollary~\ref{cor:relwidth}}

Here we use notions and results from~\cite{UnaryDichotomy}. 
Let $(b_1, \ldots, b_m)$ be a tuple of elements in $\mathbb{B}$. A quantifier-free (qf-) type of  $(b_1, \ldots , b_m)$, also called an $m$-type in B, is the set of all quantifier-free formulas $\phi(z_1, \ldots, z_m)$ such that $\mathbb{B} \models \phi(b_1,\ldots , b_m)$. The structures $\mathbb{B}$ under considerations have finite relational signatures, and hence there are only finitely many $m$-types in $\mathbb{B}$. The set of all $m$-types in $\mathbb{B}$ will be denoted by
$T_{\mathbb{B}, m}$.

Define now $T_{\mathbb{B}, m}(\mathbb{A})$ to be a finite structure whose elements are the $m$-types $T_{\mathbb{B}, m}$ of 
$\mathbb{B}$ and relations are as follows. 
\begin{itemize}
\item We have a unary constraint for each relation $R$ of $\mathbb{A}$ of arity $r$ with a definition $\chi(z_1,\ldots,z_r)$ in $\mathbb{B}$. For $i: [r] \to [m]$ we write $\langle \chi(z_{i(1)},\ldots, z_{i(r)} \rangle$ to denote 
the unary relation that consists of all the types that contain $\chi(z_{i(1)}, \ldots  , z_{i(r)})$, and add all such relations to $T_{\mathbb{B},m(\mathbb{A})}$\footnote{As in~\cite{UnaryDichotomy}, we use functions to index tuples. }.
\item We also have a binary relation 
$\textrm{Comp}_{i,j}$
for all $r \in [m]$ and $i, j : [r] \to [m]$. Define $\textrm{Comp}_{i,j}$ to be the binary relation that contains the pairs $(p, q)$ of $m$-types such that for every quantifier-free formula $\chi(z_1, \ldots , z_s)$ of $\mathbb{B}$ and 
$t: [s] \to [r]$, the formula $\chi(z_{it(1)},\ldots,z_{it(s)})$ is in $p$ iff $\chi(z_{jt(1)},\ldots,z_{jt(s)})$ is in $q$.
\end{itemize}

In the proof of Theorem 3.1 in~\cite{UnaryDichotomy}, an instance $\mathcal{I}_A$ 
of $\csp(\mathbb{A})$ over variables $\V = \{v_1,\ldots,v_n\}$
is transformed into an instance $\mathcal{I}_T$ of $\csp(T_{\mathbb{B}, m}(\mathbb{A}))$  
over variables that are increasing functions $I$ from $[m]$ to $\V$. The intention is  that $h: \V \rightarrow A$ is a solution to $\mathcal{I}_A$  
if and only if  $g: I \to T_{\mathbb{B}, m}$ such that $g(v)$ is the type of 
$(h(v(1), \ldots, h(v(m))$ is a solution to $\mathcal{I}_T$.  
\begin{itemize}
\item The instance $\mathcal{I}_T$ has
a unary constraint for every  constraint  $(((j(1), \ldots, j(r)), R)$
in $\mathcal{I}_A$ where $R$ is a relation of $\mathbb{A}$
with a qf-definition $\chi(z_1,\ldots,z_r)$ over $\mathbb{B}$
 and $j:[r] \to V$.  Let $v \in I$ be such that $\textrm{Im}(j) \subseteq  \textrm{Im}(v)$ and $U$ be the relation symbol of $T_{\mathbb{B},m}(\mathbb{A})$ that denotes the unary relation $\langle \chi(z_{v^{-1}j(1)}, \ldots , z_{v^{-1}j(r)}) \rangle$. We then add $((v),U)$ to $\mathcal{I}_T$.
\item Furthermore, $\mathcal{I}_T$ has a binary constraint 
$((v,s), \textrm{Comp}_{v^{-1}k,s^{-1}k})$ for all $v,s \in I$ and bijections 
$k: [r] \to \textrm{Im}(v) \cap \textrm{Im}(s)$. 
\end{itemize}

\noindent
By the proof of Theorem 3.1 in~\cite{UnaryDichotomy} we have the following.
\begin{fact}
\label{fact:solutionsTransfer}
The assignment $h: \V \rightarrow A$ is a solution to $\mathcal{I}_A$  
if and only if  $g: I \to T_{\mathbb{B}, m}$ such that $g(v)$ is the type of 
$(h(v(1), \ldots, h(v(m))$ is a solution to $\mathcal{I}_T$.  
\end{fact}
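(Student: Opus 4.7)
The statement is essentially a careful bookkeeping exercise: both directions of the biconditional should fall out by unfolding the definitions of the two kinds of constraints in $\mathcal{I}_T$. My plan is to argue the unary constraints and the binary (compatibility) constraints separately, since they play very different roles. The unary constraints encode the original $\mathcal{I}_A$-constraints and will give me the genuine two-way equivalence; the binary constraints are automatically satisfied as soon as $g$ is defined as the type function of some $h$, so they contribute only to one direction.

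First I would handle the unary constraints. Fix a constraint $C = ((j(1), \ldots, j(r)), R)$ of $\mathcal{I}_A$ with qf-definition $\chi(z_1, \ldots, z_r)$ in $\mathbb{B}$, and the variable $v \in I$ chosen so that $\mathrm{Im}(j) \subseteq \mathrm{Im}(v)$. The corresponding unary constraint of $\mathcal{I}_T$ is $((v), U)$ where $U = \langle \chi(z_{v^{-1}j(1)}, \ldots, z_{v^{-1}j(r)}) \rangle$. Writing $j(i) = v(v^{-1}j(i))$, we have the chain of equivalences
\begin{align*}
h \text{ satisfies } C
&\iff \mathbb{B} \models \chi(h(j(1)), \ldots, h(j(r))) \\
&\iff \mathbb{B} \models \chi\bigl(h(v(v^{-1}j(1))), \ldots, h(v(v^{-1}j(r)))\bigr) \\
&\iff \chi(z_{v^{-1}j(1)}, \ldots, z_{v^{-1}j(r)}) \in g(v) \\
&\iff g(v) \in U,
\end{align*}
using only the definition of $g(v)$ as the qf-type of $(h(v(1)), \ldots, h(v(m)))$ together with the definition of $U$. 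This yields the unary part of the equivalence in both directions.

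Next I would verify the binary compatibility constraints. For $v, s \in I$ and a bijection $k : [r] \to \mathrm{Im}(v) \cap \mathrm{Im}(s)$, I need that $(g(v), g(s)) \in \mathrm{Comp}_{v^{-1}k, s^{-1}k}$ whenever $g$ is the type function of some $h$. By the definition of $\mathrm{Comp}_{i,j}$, this means showing that for every qf-formula $\chi(z_1, \ldots, z_s)$ of $\mathbb{B}$ and every $t: [s] \to [r]$,
\[
\chi(z_{v^{-1}kt(1)}, \ldots, z_{v^{-1}kt(s)}) \in g(v) \iff \chi(z_{s^{-1}kt(1)}, \ldots, z_{s^{-1}kt(s)}) \in g(s).
\]
Both sides unfold, via the identities $v(v^{-1}kt(i)) = kt(i) = s(s^{-1}kt(i))$, to the single condition $\mathbb{B} \models \chi(h(kt(1)), \ldots, h(kt(s)))$, so the equivalence is immediate. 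Consequently the binary constraints are satisfied by $g$ regardless of whether $h$ is a solution.

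Combining these two observations gives the fact: the binary constraints of $\mathcal{I}_T$ impose no condition on $h$ beyond being well-defined, while the unary constraints of $\mathcal{I}_T$ translate exactly into the original constraints of $\mathcal{I}_A$. I do not foresee a real obstacle; the only care needed is in tracking the indexing functions $v, s, j, k, t$ and the pre-images $v^{-1}$, $s^{-1}$ so that the substitutions line up, and in checking that each original constraint has at least one covering $v \in I$ so that no constraint of $\mathcal{I}_A$ is silently dropped in the translation.
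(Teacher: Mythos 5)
Your proof is correct for the statement as written, but it is worth noting that the paper does not actually prove this fact at all: it is imported verbatim from the proof of Theorem~3.1 in~\cite{UnaryDichotomy}, so your self-contained verification is a genuine addition rather than a retracing. Your decomposition is the right one. Since $v$ is injective and $\mathrm{Im}(j)\subseteq\mathrm{Im}(v)$, the identity $j(i)=v(v^{-1}j(i))$ makes the unary constraint $((v),U)$ with $U=\langle\chi(z_{v^{-1}j(1)},\ldots,z_{v^{-1}j(r)})\rangle$ literally equivalent to the original constraint via the definition of the qf-type, which gives both directions of the biconditional; and the $\mathrm{Comp}$-constraints collapse to a tautology once $g$ is the type function of a single global $h$, because $v(v^{-1}kt(i))=kt(i)=s(s^{-1}kt(i))$ forces both membership conditions to unfold to the same statement $\mathbb{B}\models\chi(h(kt(1)),\ldots,h(kt(s)))$. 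Your closing caveat about each constraint of $\mathcal{I}_A$ having a covering $v\in I$ is also the right thing to check; it holds because $m\geq\textrm{arity}(\mathbb{A})$ and $I$ consists of all increasing maps $[m]\to\V$. One remark for perspective: the fact as stated (and as you prove it) only concerns assignments $g$ that arise as type functions of some $h$, which is pure bookkeeping; the genuinely nontrivial content of the Bodirsky--Mottet reduction is that an \emph{arbitrary} solution of $\mathcal{I}_T$ can be realized by actual elements of $\mathbb{B}$, and that direction is not covered by the statement nor by your argument, even though the paper later invokes the fact in a context where that stronger form is what is really being used. That is an imprecision in the paper's phrasing rather than a gap in your proof.
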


By Lemma 4.7 in~\cite{UnaryDichotomy}, the structure $T_{\mathbb{B}, m}(\mathbb{A})$ has a ternary and a four-ary polymorphism required 
for bounded width. It folows that $T_{\mathbb{B}, m}(\mathbb{A})$ has bounded width $(2,3)$, and hence $\csp(T_{\mathbb{B}, m}(\mathbb{A}))$ is solvable by establishing $(2,3)$-minimality. For the proof of the corollary we will show that  
the algorithm of establishing minimality transforms an instance 
$\mathcal{I}_A$ of $\csp(\mathbb{A})$ to a non-trivial $(2m,3m)$-minimal
instance  iff the algorithm of establishing minimality transforms an instance 
$\mathcal{I}_T$ of $\csp(T_{\mathbb{B}, m}(\mathbb{A}))$  to a $(2,3)$-minimal instance.

The right-to-left implication is easy to observe. Indeed,  if  an instance $\mathcal{I}_T$ of $\csp(T_{\mathbb{B}, m}(\mathbb{A}))$  has  a $(2,3)$-minimal non-trivial equivalent instance of the CSP, then $\mathcal{I}_T$ has a solution.
By Fact~\ref{fact:solutionsTransfer}, we have that $\mathcal{I}_A$ has also a solution and hence there is a $(2m,3m)$-minimal non-trivial instance of the CSP equivalent to $\mathcal{I}_A$.

For the left-to-right implication consider a non-trivial $(2m, 3m)$-minimal instance $\mathcal{I}'_A$ of the CSP equivalent to $\mathcal{I}_A$
that is obtained from $\mathcal{I}_A$ by first adding an extra constraint 
$((j(1), \ldots j(r)), A^r)$ for all $m \leq r \leq 3m$ and all $j: [r] \to [m]$ and then establishing $(2m, 3m)$-minimality. In consequence we obtain a $(2m, 3m)$-minimal instance with the following constraints:
\begin{itemize}
\item $((j(1),\ldots, j(r)), R')$ for every constraint of the form $((j(1),\ldots, j(r)), R)$ in $\mathcal{I}_A$ where $R' \subseteq R$, and 
\item a constraint $((j(1),\ldots, j(r)), S')$ for all $m \leq r \leq 3m$ and $j:[r] \rightarrow \V$ originating from a constraint $((j(1), \ldots j(r)), A^r)$.
\end{itemize}

The next step is to construct  $\mathcal{I}'_T$ from
$\mathcal{I}'_A$ and $\mathcal{I}_T$. To this end, for $u_1, \ldots, u_k \in \textrm{I}$ and a constraint $C = (\overline{x}, R)$ whose scope contain $\textrm{Im}(u_1), \ldots, \textrm{Im}(u_k)$ we define 
$\mathrm{Types}(C,u_1, \ldots, u_k)$ to be a set of
$k$-tuples of $m$-types which for every tuple $t \in R$ 
contains $(\textrm{type}_1, \ldots, \textrm{type}_k)$ where 
$\textrm{type}_i$ is the type of the projection of  $t$ to $\textrm{Im}(u_i)$.
We are now ready to define
$\mathcal{I}'_T$.

\begin{enumerate}
\item \label{ITprim:unarynew} For a constraint $C = ((j(1), \ldots, j(m)), S')$ in $\mathcal{I}'_A$ we have 
$(j, \textrm{Types}(C, j))$ in $\mathcal{I}'_T$.
\item \label{ITprim:binarynew} For a constraint $C = ((j(1), \ldots, j(r)),S')$ with $m < r \leq 2m$
and all  $u,v \in I$ with $\textrm{Im}(u), \textrm{Im}(v) \subseteq \textrm{Im}(j)$ we have a binary constraint $((u,v), \textrm{Types}( C, u, v))$ in $\mathcal{I}'_T$.
\item \label{ITprim:ternarynew} For a constraint $((j(1), \ldots, j(r)),S')$ with $m+2 < r \leq 3m$
and all  $u,v, w \in I$ with $\textrm{Im}(u), \textrm{Im}(v), \textrm{Im}(w)  \subseteq \textrm{Im}(j)$ we have a ternary constraint $((u,v, w), (\textrm{Types}(C,u,v,w))$ in $\mathcal{I}'_T$. 
\item \label{ITprim:unaryold} Let $((v), U)$ be a unary constraint in $\mathcal{I}_{T}$
and $(j, \textrm{Types}(C, j))$ a constraint from Item~\ref{ITprim:unarynew} above. Then 
$\mathcal{I}'_{T}$ contains $((v), U \cap \textrm{Types}(C, j))$.

\item \label{ITprim:binaryold} Let $((v,s), \textrm{Comp}_{v^{-1}k,s^{-1}k})$ be a binary constraint in $\mathcal{I}_{T}$ and $((v,s), \textrm{Types}(C, v,s))$ a constraint from Item~\ref{ITprim:binarynew}. Then $\mathcal{I}'_{T}$ contains $((v,s), \textrm{Comp}_{v^{-1}k,s^{-1}k}  \cap  \textrm{Types}(C, v,s))$.
\end{enumerate}

In order to complete the proof of the corollary, we need to show that 
 $\mathcal{I'}_T$ is  non-trivial, $(2,3)$-minimal and equivalent to $\mathcal{I_T}$.
 
\begin{fact}
The instance $\mathcal{I'}_T$ is equivalent to $\mathcal{I}_{T}$.
\end{fact}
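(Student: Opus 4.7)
My plan is to establish the equivalence by showing two inclusions of solution sets.

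The easy direction is that every solution of $\mathcal{I}'_T$ is a solution of $\mathcal{I}_T$. This is immediate from the construction: items~\ref{ITprim:unaryold} and~\ref{ITprim:binaryold} keep exactly the scopes of the original unary and binary constraints of $\mathcal{I}_T$, only replacing their relations by intersections with additional sets; any assignment satisfying the intersection satisfies the original. Items~\ref{ITprim:unarynew}--\ref{ITprim:ternarynew} add constraints on fresh scopes and do not affect the satisfaction of the constraints of $\mathcal{I}_T$.

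For the reverse direction, I would take an arbitrary solution $g: I \to T_{\mathbb{B},m}$ of $\mathcal{I}_T$ and show it also satisfies every constraint of $\mathcal{I}'_T$. By Fact~\ref{fact:solutionsTransfer} applied to $g$, there is a solution $h: \V \to A$ of $\mathcal{I}_A$ whose type function is $g$. The instance $\mathcal{I}'_A$ was built from $\mathcal{I}_A$ by first adding constraints with relations $A^r$ (which impose no restriction on solutions) and then applying the minimization algorithm from Proposition~\ref{prop:minimality}, a procedure that is solution-preserving. Hence $h$ remains a solution of $\mathcal{I}'_A$.

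Now for any constraint $C = ((j(1),\ldots,j(r)), S')$ of $\mathcal{I}'_A$ from which $\mathcal{I}'_T$ derives a constraint of the form $((u_1,\ldots,u_k), \textrm{Types}(C, u_1,\ldots,u_k))$ (as in items~\ref{ITprim:unarynew}--\ref{ITprim:ternarynew}), one has $(h(j(1)), \ldots, h(j(r))) \in S'$. For each $u_i$ with $\textrm{Im}(u_i) \subseteq \textrm{Im}(j)$, the tuple $(h(u_i(1)),\ldots, h(u_i(m)))$ is a projection of this tuple and its qf-type is by definition $g(u_i)$. Thus $(g(u_1),\ldots,g(u_k))$ lies in $\textrm{Types}(C, u_1, \ldots, u_k)$, which is exactly the witness required. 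For the intersected constraints of items~\ref{ITprim:unaryold} and~\ref{ITprim:binaryold}, $g$ satisfies the original $\mathcal{I}_T$-constraint by assumption and the type-constraint by the argument just given, hence it satisfies their intersection.

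The main subtlety I anticipate is purely bookkeeping: making sure the scope–image correspondence between constraints of $\mathcal{I}'_A$ over variables $\V$ and constraints of $\mathcal{I}'_T$ over variables in $I$ is handled correctly, so that the definition of $\textrm{Types}(C, u_1, \ldots, u_k)$ as the image under the type map genuinely matches the tuples assigned by $g$. Once that correspondence is spelled out, both directions reduce to a direct application of Fact~\ref{fact:solutionsTransfer} together with the solution-preservation of the minimization algorithm.
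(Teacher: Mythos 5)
Your proof is correct and follows essentially the same route as the paper's: both directions reduce to Fact~\ref{fact:solutionsTransfer} plus the observation that the relations $A^r$ and the (solution-preserving) minimality algorithm do not remove any solution of $\mathcal{I}_A$, so the witnessing assignment $h$ satisfies every constraint of $\mathcal{I}'_A$ and hence $g$ lands in each $\textrm{Types}(C,u_1,\ldots,u_k)$. You spell out the easy inclusion (intersected relations only shrink) which the paper leaves implicit, but the substance is identical.
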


\begin{proof}
By Fact~\ref{fact:solutionsTransfer} we have that
$h: \V \rightarrow A$ is a solution to $\mathcal{I}_A$  
if and only if  $g: I \to T_{\mathbb{B}, m}$ such that $g(v)$ is the type of 
$(h(v(1), \ldots, h(v(m))$ is a solution to $\mathcal{I}_T$.  We need to show that every such $g$ is also a solution to $\mathcal{I}'_T$. Since $h$ satisfies all new constraints in $\mathcal{I}_A$, i.e., constraints of the form $((j(1),\ldots, j(r)), S')$ for all $m \leq r \leq 3m$ and $j:[r] \rightarrow \V$ originating from a constraint $((j(1), \ldots j(r)), A^r)$, it is straightforward to check that $g$ satisfies all constraints in $\mathcal{I}'_T$ from Items~\ref{ITprim:unarynew}--\ref{ITprim:ternarynew}.
 Since $g$ satisfies these new constraints  and
all constraints in $\mathcal{I}_T$, it also satisfies constraints in $\mathcal{I}'_T$
defined in Items~\ref{ITprim:unaryold} and~\ref{ITprim:binaryold}.
\end{proof}

\begin{fact}
The instance $\mathcal{I}'_T$ is non-trivial.
\end{fact}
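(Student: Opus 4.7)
The plan is to verify non-emptiness constraint-by-constraint, separating the relations of $\mathcal{I}'_T$ into two groups according to their origin in the construction: those freshly introduced in Items~\ref{ITprim:unarynew}--\ref{ITprim:ternarynew} (coming from the padding constraints of $\mathcal{I}'_A$) and those obtained by intersecting an existing $\mathcal{I}_T$-relation with a Types-set in Items~\ref{ITprim:unaryold}--\ref{ITprim:binaryold}. The two tools at our disposal are the non-triviality of $\mathcal{I}'_A$ (every relation $S'$ in a constraint of $\mathcal{I}'_A$ contains at least one tuple) and its $(2m, 3m)$-minimality.

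For the fresh constraints $(\,(u_1, \ldots, u_k),\ \textrm{Types}(C, u_1, \ldots, u_k)\,)$ of Items~\ref{ITprim:unarynew}--\ref{ITprim:ternarynew}, where $C = ((j(1), \ldots, j(r)), S')$ with $k \leq 3$ and $\textrm{Im}(u_1), \ldots, \textrm{Im}(u_k) \subseteq \textrm{Im}(j)$, the argument is direct: picking any $t \in S'$ and taking the $k$-tuple consisting of the $m$-types of the projections $t|_{\textrm{Im}(u_i)}$ yields an explicit witness in $\textrm{Types}(C, u_1, \ldots, u_k)$.

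For Items~\ref{ITprim:unaryold}--\ref{ITprim:binaryold} we will in fact prove the stronger statement that the Types-set lies inside the corresponding $U$ or $\textrm{Comp}_{\,\cdot\,,\,\cdot}$, so the intersection equals the (already non-empty) Types-set. For Item~\ref{ITprim:unaryold}, the relation $U$ is determined by a formula $\chi(z_{v^{-1}j'(1)}, \ldots, z_{v^{-1}j'(r)})$ coming from a constraint $((j'(1), \ldots, j'(r)), R)$ of $\mathcal{I}_A$ with $\textrm{Im}(j') \subseteq \textrm{Im}(v) = \textrm{Im}(j)$ and $r \leq \textrm{arity}(\mathbb{A}) < m$; after padding and minimality, this constraint survives in $\mathcal{I}'_A$ as some $((j'(1), \ldots, j'(r)), R')$ with $R' \subseteq R$. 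Applying clause~\ref{M2} of Definition~\ref{def:minimality} with $W = \textrm{Im}(j')$ (of size at most $m \leq 2m$) to the two constraints $((j'(1), \ldots, j'(r)), R')$ and $C = ((j(1), \ldots, j(m)), S')$ of $\mathcal{I}'_A$ forces every $t \in S'$ to project into $R' \subseteq R$; equivalently, the type of $t$ satisfies $\chi(z_{v^{-1}j'(1)}, \ldots, z_{v^{-1}j'(r)})$ and thus lies in $U$. For Item~\ref{ITprim:binaryold}, non-emptiness of $\textrm{Comp}_{v^{-1}k, s^{-1}k} \cap \textrm{Types}(C, v, s)$ follows from the trivial observation that if $(p, q)$ is the pair of types of projections of a common tuple $t \in S'$ to $\textrm{Im}(v)$ and $\textrm{Im}(s)$, then $p$ and $q$ must agree on the overlap $\textrm{Im}(v) \cap \textrm{Im}(s)$ (the bijection $k : [r] \to \textrm{Im}(v) \cap \textrm{Im}(s)$ witnesses precisely this condition).

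The main obstacle, and the only substantive point, is the bookkeeping in the last case: we must confirm that every scope $\textrm{Im}(j')$ arising from an original $\mathcal{I}_T$-constraint fits within the $2m$-bound of minimality, which is exactly why the upper bound in the statement is $(2m, 3m)$ with $m = \max(\textrm{arity}(\mathbb{A}) + 1, \textrm{arity}(\mathbb{B}) + 1, \mathbb{L}_{\mathbb{B}}, 3)$. Once this is verified, the argument above applies uniformly, even when several $\mathcal{I}_T$-constraints contribute (via Items~\ref{ITprim:unaryold}--\ref{ITprim:binaryold}) to the same pair of variables, since each intersection step still collapses to the underlying non-empty Types-set.
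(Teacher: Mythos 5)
Your proposal is correct and follows essentially the same route as the paper: non-emptiness of the Types-constraints from the non-triviality of $\mathcal{I}'_A$, containment of the Types-sets in the unary relations $U$ (which the paper asserts via the existence of a type satisfying all unary constraints, and which you justify more explicitly through condition~\ref{M2} of $(2m,3m)$-minimality), and containment of the binary Types-sets in the $\textrm{Comp}$ relations because the two projected types of a common tuple agree on the overlap. The extra bookkeeping you supply for the unary case is a harmless elaboration of what the paper leaves implicit.
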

 
\begin{proof}
We need to show that all relations associated to  constraints in $\mathcal{I}'_T$
are non-empty. Since $\mathcal{I}'_A$ is non-trivial, this statement clearly holds for all constraints from Items~\ref{ITprim:unarynew}--\ref{ITprim:ternarynew}.
Since all constraints in $\mathcal{I}'_T$ from Item~\ref{ITprim:unarynew} are non-empty, we have that for every $v \in I$
there is an $m$-type which satisfies all unary constraints in $\mathcal{I}_T$
of the form $((v), U)$. It follows that relations in constraints in $\mathcal{I}'_T$ in Item~\ref{ITprim:unaryold} are non-empty. Finally observe that for every 
constraint $((v,s), R)$ in $\mathcal{I}'_T$ from Item~\ref{ITprim:binarynew},
the 
binary relation $R$  is contained in all relations
of the form $\textrm{Comp}_{v^{-1}k,s^{-1}k}$. It follows that relations in constraints defined in Item~\ref{ITprim:binaryold} are also non-empty. It completes the proof of the fact.
\end{proof} 
 
\begin{fact}
The instance $\mathcal{I'}_T$ is $(2,3)$-minimal.
\end{fact}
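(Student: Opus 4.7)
The plan is to verify the two clauses of Definition~\ref{def:minimality} for $(k,l) = (2,3)$: coverage of every at most $3$-element subset of variables in $I$, and projection consistency on every at most $2$-element subset.

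For coverage, I would proceed by case analysis on the number of distinct elements among three chosen variables $u, v, w \in I$. A single $u$ is covered by Item~\ref{ITprim:unarynew}, which produces a unary constraint $((u), \textrm{Types}(C, u))$ from the arity-$m$ $\mathcal{I}'_A$-constraint on scope $\textrm{Im}(u)$; such a constraint exists because $\mathcal{I}'_A$ was padded with constraints $((j(1), \ldots, j(r)), A^r)$ for every $r \in [m, 3m]$ and every $j : [r] \to \mathcal{V}$. A pair of distinct $u, v$ satisfies $|\textrm{Im}(u) \cup \textrm{Im}(v)| \in [m+1, 2m]$, so Item~\ref{ITprim:binarynew} supplies a binary constraint on $(u,v)$ from any $\mathcal{I}'_A$-constraint of arity in $(m, 2m]$ whose scope contains this union. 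For three pairwise distinct $u, v, w$, the union of their images has size in $[m+1, 3m]$; when this size is at most $m+2$ I would pad with arbitrary repeated variables to reach an $\mathcal{I}'_A$-constraint of arity strictly greater than $m+2$ whose scope still contains $\textrm{Im}(u) \cup \textrm{Im}(v) \cup \textrm{Im}(w)$, after which Item~\ref{ITprim:ternarynew} yields a ternary constraint on $(u, v, w)$.

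For projection consistency the central observation is that the $\textrm{Types}$-map commutes with projection: for every $\mathcal{I}'_A$-constraint $C$ and $u_1, \ldots, u_k \in I$ with images in the scope of $C$, the projection of $\textrm{Types}(C, u_1, \ldots, u_k)$ onto a subtuple $(u_{i_1}, \ldots, u_{i_l})$ equals $\textrm{Types}(C, u_{i_1}, \ldots, u_{i_l})$. Combined with the $(2m, 3m)$-minimality of $\mathcal{I}'_A$—which guarantees that any two of its constraints project identically onto every common subset of size at most $2m$—this immediately yields projection consistency on every $W \subseteq I$ of size at most $2$ among all constraints arising from Items~\ref{ITprim:unarynew}--\ref{ITprim:ternarynew}, since $|\textrm{Im}(u) \cup \textrm{Im}(v)| \leq 2m$ always. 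To incorporate the refined constraints of Items~\ref{ITprim:unaryold} and~\ref{ITprim:binaryold}, I would use the fact that the $\mathcal{I}_T$-relations $U$ and $\textrm{Comp}_{v^{-1}k, s^{-1}k}$ encode qf-type compatibility that is automatically satisfied by any tuple in $\textrm{Types}(C, v)$ or $\textrm{Types}(C, v, s)$, because such tuples arise from a single solution of $C$; consequently the intersections in Items~\ref{ITprim:unaryold}--\ref{ITprim:binaryold} refine but do not disturb the projection compatibility already established.

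The main obstacle I anticipate is the bookkeeping for ternary-to-binary and binary-to-unary projections in the presence of the Items~\ref{ITprim:unaryold}--\ref{ITprim:binaryold} intersections: one must verify that the ternary Item~\ref{ITprim:ternarynew} relation projects into the refined binary Item~\ref{ITprim:binaryold} relation on every pair of its coordinates, and into the refined unary Item~\ref{ITprim:unaryold} relation on every singleton coordinate. This reduces to checking that $\textrm{Types}(C, u, v, w)$ lies coordinatewise in the appropriate $\textrm{Comp}$- and $U$-sets, which follows from the type-theoretic construction of $T_{\mathbb{B}, m}(\mathbb{A})$ together with Fact~\ref{fact:solutionsTransfer}. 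The verification is conceptually routine but technically demanding because of the notational overhead of multiple increasing functions $u, v, w, j$, the reindexing bijections $k$, and the various intersections across Items~\ref{ITprim:unarynew}--\ref{ITprim:binaryold}.
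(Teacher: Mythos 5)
Your proposal is correct and follows essentially the same route as the paper's proof: coverage of triples via the padded Item~\ref{ITprim:ternarynew} constraints, reduction of the Items~\ref{ITprim:unaryold}--\ref{ITprim:binaryold} constraints to those of Items~\ref{ITprim:unarynew}--\ref{ITprim:binarynew} by showing the intersections with $U$ and $\textrm{Comp}$ are trivial (the paper states this as outright equality of the relations, which is what your ``does not disturb'' argument in fact establishes), and projection consistency among the Items~\ref{ITprim:unarynew}--\ref{ITprim:ternarynew} constraints from the $(2m,3m)$-minimality of $\mathcal{I}'_A$ together with the compatibility of $\textrm{Types}$ with projections. You merely make explicit some bookkeeping (the commutation of $\textrm{Types}$ with projection, the arity ranges) that the paper leaves implicit.
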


\begin{proof}
Clearly all $s,u,v \in I$ occur together in the scope of some constraint in $\mathcal{I}'_T$ from Item~\ref{ITprim:ternarynew}. 
For Condition~\ref{M2} in Definition~\ref{def:minimality},
observe both of the following:
\begin{itemize}
\item for all $v \in I$, the constraint $(v,U)$ in $\mathcal{I}'_T$
in Item~\ref{ITprim:unaryold} and all constraints $(v, U')$ in Item~\ref{ITprim:unarynew}, we have that $U = U'$, and
\item for all $ v,u \in I$, the constraint $((v,u), R)$ in $\mathcal{I}'_T$
in Item~\ref{ITprim:binaryold} and all constraints $((v,u), R')$ in Item~\ref{ITprim:binarynew}, we have that $R = R'$.
\end{itemize}
Thus, in order to satisfy Condition~\ref{M2}, it is enough to restrict to the case
where $C_1, C_2$ are constraints from Items~\ref{ITprim:unarynew}--\ref{ITprim:ternarynew}. But
Condition~\ref{M2} in Definition~\ref{def:minimality} holds for all pairs of constraints in Items~\ref{ITprim:unarynew}--\ref{ITprim:ternarynew} because
 $\mathcal{I}'_{A}$ is $(2m, 3m)$-minimal. 
It completes the proof of the fact.
\end{proof}

\section{Proof of Proposition~\ref{prop:binRW}}

Let $h$ be any binary canonical operation from the formulation of the lemma.
We will show that in any case: 
\begin{itemize}
\item $f(x_1, x_2, x_3, x_4) = h(h(h(x_1, x_2), x_3), x_4)$ and
\item $g(x_1, x_2, x_3) = h(h(x_1, x_2), x_3)$
\end{itemize}
are weak near-unanimity operations modulo $\overline{\Aut(\homograph)}$
where $\homograph$ is a homogeneous graph and that they satisfy the 
condition in Corollary~\ref{cor:relwidth}. Clearly, $f$ and $g$ are canonical.
In order to complete the proof of the proposition, we will show that all :
\begin{itemize}
\item $f(y,x, x,x), f(x,y,x,x), f(x,x,y,x)$ and $f(x,x,x,y)$, 
\item $g(y,x,x), g(x,y,x), g(x,x,y)$, and 
\item $f(y,x,x,x)$ and $g(y,x,x)$ have the same behaviour.
\end{itemize}

Observe that if a behaviour $B$ of $h$  is min or max, then 
it is a similattice operation on the three-element set $\{ E, N, = \}$.
It is therefore straightforward to check that all operations obtained from $f$ and $g$ mentioned in the items above are of the same behaviour as $h$. Thus, the proposition follows in this case where $h$ is of behaviour min or max.

If $h$ is $E$-constant or $N$-constant, then  
all operations mentioned in the items above are $E$-constant or $N$-constant, respectively. Thus, the proposition follows also in this case.
$\square$

\section{Proof of Observation~\ref{obs:minequality}}

The relation $\sim$  is clearly reflexive and symmetric. Assume on the contrary that it is not 
transitive. Then there are $i,j,k \in [n]$ such that $\instance_{i,j}, \instance_{j,k} \subseteq \{ = \}$ and $\instance_{i,k}$ contains 
$O \in \{E, N \}$. Since $\mathcal{I}$ is $(2, \maxbound)$-minimal and $\maxbound \geq 3$, it contains a constraint $C$ whose scope contains $\{v_i, v_j, v_k\}$. But $C$ is not satisfiable by $a: \{ v_i, v_j, v_k \} \rightarrow D$ 
such that $a(v_i) = a(v_j), a(v_j) = a(v_k)$ and $(a(v_i), a(v_k)) \in O$. It contradicts $(2, \maxbound)$-minimality of 
$\mathcal{I}$ and completes the proof of the observation.
$\square$

\section{Proof of Proposition~\ref{prop:ClausesQNU}}

In order to prove the proposition, we will show that any $\mathbb{A}$ is preserved by an oligopotent qnu-operation, see Theorem~\ref{thm:strictwidth}.
We start with a couple of definitions.

A $k$-tuple $t$ with $k \geq 3$ has the main value $a$ if all but at most one entry in $t$ takes the value $a$. Note that not all tuples have the main value, but if a tuple has the main value, then it is unique. It is straightforward to see that every qnu-operation sends all tuples with the same main value to the same element of the image. 
For a natural number $k \geq 3$, define $\sim_k$ to be the equivalence relation on $A^k$ such that $u,v \in A^k$ satisfy $u \sim_k v$ if and only if $u$ and $v$ have the same main value. We first state a simple observation.
   
\begin{observation}
\label{obs:appropQNU}
Let  $p =2m+1$ for some natural number $m$  and $R \in \{ E, N\}$.
Then  there exists an oligopotent quasi near-unanimity operation $f$ of arity $p$ 
such that all of the following hold:
\begin{enumerate}
\item \label{qnu:equality} $f(u) = f(v)$ if and only if $u \sim_{p} v$;
\item \label{qnu:preservesEN} $f$ preserves both $E$ and $N$;
\item \label{qnu:twomainvalues} if $u$ has the main value $c$ and $v$ has the main value $d$, then $(f(u), f(v)) \in R$ if and only if $(c, d) \in R$;
\item \label{qnu:onemainvalue} if $u$ has the main value $c$ and $v$ has no main value and 
there exists $I \subseteq [p]$ with $\left| I \right| \geq m+1$ such that $(c, v[i]) \in R$ for all 
$i \in I$
then $(f(u),f(v)) \in R$.
\end{enumerate}
\end{observation}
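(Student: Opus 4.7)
The plan is to construct $f$ by transfinite induction over an enumeration of $A^p$, exploiting the extension property of the random graph $\homograph$ to fulfil the required constraints at every step. My first move is to handle all tuples with a main value: for any $t$ with main value $c$, set $f(t) := c$. This choice immediately establishes (a)~the quasi-near-unanimity identities, since for $p \geq 3$ each of $(x,\ldots,x)$ and of its one-entry variants has main value $x$ whenever $x \neq y$; (b)~oligopotency, since the diagonal $g(x) := f(x,\ldots,x) = x$ is the identity and thus lies trivially in the group generated by $\Aut(\homograph)$; and (c)~condition~(\ref{qnu:twomainvalues}), which becomes tautological because $(f(u), f(v)) = (c, d)$ by construction.

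For tuples without a main value I enumerate the $\sim_p$-equivalence classes $K_1, K_2, \ldots$, fix a representative $t_i$ in each $K_i$, and pick $f(t_i) \in A$ so as to satisfy the following demands: (i)~whenever $t_j$ has already been processed and is componentwise $E$-related (respectively $N$-related) to $t_i$, the pair $(f(t_j), f(t_i))$ must lie in $E$ (respectively $N$), securing condition~(\ref{qnu:preservesEN}); (ii)~for each $c \in A$ that is the main value of some tuple and for which at least $m+1$ entries of $t_i$ are $R$-related to $c$, the pair $(c, f(t_i))$ must lie in $R$, securing condition~(\ref{qnu:onemainvalue}); and (iii)~$f(t_i)$ must differ from every previously defined $f(t_j)$ with $t_j \not\sim_p t_i$, securing condition~(\ref{qnu:equality}). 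The value is then propagated from $t_i$ to all of $K_i$ canonically via the diagonal action of $\Aut(\homograph)$ on the class.

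The main obstacle is to verify that the constraints at each inductive step are both consistent and realisable in $\homograph$. For pairwise consistency of (i) and (ii), a case analysis suffices: if $t_j$ has main value $c$ and is componentwise $E$-related to $t_i$, then at least $p-1 = 2m$ entries of $t_i$ are $E$-adjacent to $c$, leaving at most one entry $N$-related to $c$, so the $N$-version of~(\ref{qnu:onemainvalue}) cannot simultaneously fire on the same $c$; an analogous argument handles the symmetric situation. Realisability is the delicate point: condition (ii) a priori involves witnesses $c$ lying outside $\{t_i[1], \ldots, t_i[p]\}$, but by tracking the orbit-type of $t_i$ the $R$-adjacency requirements organise into a finite family of orbit-type demands on $f(t_i)$, and the universal-extension property of $\homograph$ then supplies a witness simultaneously satisfying these demands together with the finitely many constraints from (i) and (iii). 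The global injectivity clause of~(\ref{qnu:equality}) is inherited from the stepwise distinctness, completing the construction.
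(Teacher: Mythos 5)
There is a genuine gap, and it sits at your very first move. By forcing $f(t) := c$ for every tuple $t$ with main value $c$, you turn Items~\ref{qnu:preservesEN} and~\ref{qnu:onemainvalue} into an infinite and unsatisfiable system of adjacency constraints on $f(v)$ for a tuple $v$ without a main value. Concretely, for $R = E$, Item~\ref{qnu:onemainvalue} now demands that $f(v)$ be $E$-adjacent to every element of $S_v$, the set of all $c \in A$ that are $E$-adjacent to at least $m+1$ entries of $v$ --- because every $c$ is the main value of its own constant tuple and you have declared $f(c,\ldots,c)=c$. The set $S_v$ is infinite (it contains every common neighbour of the entries of $v$), and no vertex of the random graph is adjacent to all of $S_v$: given any candidate $w$, the extension property produces some $c$ adjacent to all (or, if $w$ is itself an entry of $v$, to all but that one) of the entries of $v$ and non-adjacent to $w$; such a $c$ lies in $S_v$ since $p-1 = 2m \geq m+1$. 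The symmetric argument kills the case $R = N$, and the same phenomenon already breaks Item~\ref{qnu:preservesEN} (take $u$ ranging over constant tuples componentwise $E$-related to $v$). Your claim that these requirements ``organise into a finite family of orbit-type demands'' is exactly the step that fails: the demands are indexed by an infinite definable set of vertices, not by finitely many orbit types over the entries of $v$, and the extension property only controls finitely many adjacencies at a time.

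The paper's proof avoids this trap precisely by \emph{not} anchoring $f$ on the diagonal. It defines an abstract countable graph $G_f$ on the $\sim_p$-classes, with edges given by three clauses matching Items~\ref{qnu:preservesEN}--\ref{qnu:onemainvalue}, and embeds $G_f$ into the random graph in one shot; the image of the class of $(c,\ldots,c)$ is then a fresh vertex, and all of the ``infinitely many'' constraints are satisfied simultaneously because they are nothing but the edges and non-edges of a single countable graph. Oligopotency is not lost in the process: the diagonal $x \mapsto f(x,\ldots,x)$ is a self-embedding of $\homograph$ and hence lies in $\overline{\Aut(\homograph)}$. A secondary remark: the $\sim_p$-class of a tuple without a main value is a singleton, so your ``propagation via the diagonal action of $\Aut(\homograph)$'' is either vacuous or conflates $\sim_p$-classes with orbits; if it actually enlarged the classes it would contradict Item~\ref{qnu:equality}.
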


\begin{proof} Consider a graph $G_f$
whose vertices are the equivalence classes  of $\sim_p$ and $[u]_{\sim_p}$ and $[v]_{\sim_p}$ in $A^{p}/ \sim_p$ 
are connected by an edge if one of the following holds:
\begin{itemize}
\item $u$ has the main value $c$,  $v$ has the main value $d$  and $(c,d) \in E$,
\item $u$ has the main value $c$,  $v$ has no main value and $I = \{ i \in [2m+1] \mid  (c, v_i) \in E \}$ satisfies $\left| I \right| \geq m+1$,
\item neither $u$ nor $v$ has the main value and $(u[i], v[i]) \in E$ for all $i \in [2m+1]$.
\end{itemize}
Since $G_f$  is a countable graph, it clearly embeds into the random graph by an embedding $f$. It is now straightforward to check that $f$ satisfies all 4 items in the formulation of the observation.
\end{proof}

For every relation $R$ in $\mathbb{A}$ set $\varphi_R$ to be some definition of $R$ given by the conjunction of clauses of the form given in the formulation of the 
proposition and $r = k + l$ be the maximal number of disequalities over all clauses in all $\varphi_R$ for all $R$ in $\mathbb{A}$. Let $m$ be a natural number satisfying
$m \geq r + 2$.  We will prove that 
$\mathbb{A}$ is preserved by a $(2m+1)$-ary oligopotent qnu-operation
 $f$ from Observation~\ref{obs:appropQNU}. By appeal to Theorem~\ref{thm:strictwidth}, it will imply that $\mathbb{A}$ has bounded strict width.

In fact, by Theorem~\ref{thm:Galoisconn}, it is enough to prove that $f$ 
preserves any relation defined by a single clause of the desired form with the number of disequalities $(k + l)$ less than $r$. 
Assume on the contrary that $f$ violates some $R$ defined by a single clause $\psi$ 
of the considered form over variables $X = \{ x_1, \ldots, x_k, y_1, y_2, z_1, \ldots, z_l \}$
 and 
set $a(v)$ to be the vector $(a_1(v), \ldots, a_{2m+1}(v))$ for all $v \in X$ 
where $a_1, \ldots, a_{2m+1}: X \to A$ 
are such that all $a_i$ with $i \in [2m+1]$ satisfy $\psi$ but $h: X \to A$
defined so that $h(v) = f(a_1(v), \ldots, a_{2m+1}(v))$ for all $v \in X$ does not satisfy $\psi$.
In particular, $h(y_1) = h(x_i)$ for all $i \in [k]$, $h(y_2) = h(z_i)$ for all $i \in [l]$
 and $(h(y_1), h(y_2)) \notin R$. Assume without loss of generality that 
all $a(x_1), \ldots, a(x_k)$ are different than $a(y_1)$ and that 
all $a(z_1), \ldots, a(z_l)$ are different than $a(y_2)$.
Then we have both of the following:
\begin{itemize}
\item either $\{ x_1, \ldots, x_k \}$ is empty  or $a(y_1), a(x_1), \ldots, a(x_k)$ have all the same main value $c$, and
\item either $\{ z_1, \ldots, z_l \}$ is empty  or  
$a(y_2), a(z_1), \ldots, a(z_l)$
have the same main value $d$. 
\end{itemize}

If both $\{ x_1, \ldots, x_k \}$ and $\{ z_1, \ldots, z_l \}$ are empty, then $\psi$ 
is $R(y_1, y_2)$ and we have the contradiction.
Indeed, by Item~\ref{qnu:preservesEN} in Observation~\ref{obs:appropQNU}, $f$ preserves both $E$ and $N$. 
Next, turn to the case where exactly one holds:
either $\{ x_1, \ldots, x_k \}$ or $\{ z_1, \ldots, z_l \}$ is empty.
Assume without loss of generality that $\{ z_1, \ldots, z_l \}$ is empty.
We now consider a couple of 
cases. Observe that in this case   $a(y_1)$ has the main value $c$. If $a(y_2)$ also has  the main value, then for all but at most two $i \in [2m+1]$ we have $(a_i(y_1)), a_i(y_2)) \in R$. Since $m \geq 2$, by the definition of $f$, it follows that $ 
(h(y_1), h(y_2)) \in R$.
Thus, we may assume that $a(y_2)$
does not have the main value. 
Set $I \subseteq [2m+1]$ to be the set of indices such that $i \in I$
if and only if $(a_i(x_j) = a_i(y_1))$ for all $j \in [k]$. 
Observe that $(a_i(y_1), a_i(y_2)) \in R$ for all $i \in I$.
If $(a_1(y_1), \ldots, a_{2m+1}(y_1))$ is a constant tuple, then $\left| I \right| \geq m+1$, then, by Item~\ref{qnu:onemainvalue} in Observation~\ref{obs:appropQNU}, we have $(h(y_1), h(y_2)) \in R$. 
If $a(y_1)$ is not a constant tuple, then either $i$ such that $a_i(y_1) \neq c$
is in $I$ or not. 
Observe that in the first case $I = [2m+1]$ and that it contradicts the assumption.  In the other case, since $m \geq k+1$, we have that
 $\left| I \right| \geq m+1$ and for all $i \in I$ it holds
$a_i(y_1) = c$. Clearly, for all $i \in I$ we have also that $(a_i(y_1), a_i(y_2)) \in R$.
This time, by Item~\ref{qnu:onemainvalue} in
Observation~\ref{obs:appropQNU}, we have that $(h(y_1), h(y_2)) \in R$. It completes the proof of the case where either $\{ x_1, \ldots, x_k \}$ or $\{ z_1, \ldots, z_l \}$ is empty.

Finally, we turn to the case where neither $\{ x_1, \ldots, x_k \}$ nor $\{ z_1, \ldots, z_l \}$ is empty.
Then both $a(y_1)$ and $a(y_2)$ have the main values $c$ and $d$, respectively.
Since $(c,d) \in R$ we have by Item~\ref{qnu:twomainvalues} in Observation~\ref{obs:appropQNU} that $(h(y_1), h(y_2)) \in R$.  It completes the proof of the proposition.

\section{Proof of Proposition~\ref{prop:qnuExTwoEquivClass}}

In order to prove that $\mathbb{A} = (A; E, N, R)$ where 
$(A; E)$ is $C^{\omega}_2$ and $R(x_1, x_2, x_3) \equiv ((E(x_1, x_2) \wedge N(x_2, x_3)) \vee (N(x_1, x_2) \wedge E(x_2, x_3))$ has bounded strict width,
we will show that it is preserved by an oligopotent qnu-operation. 
To this end, we will see elements of $A$ as pairs in $\{ 0,1\} \times \mathbb{N}$
so that $((d_1, e_1), (d_2, e_2)) \in E$ if and only if $d_1 = d_2$. Let now $m: \{ 0,1 \}^3 \rightarrow \{ 0,1 \}$ be the majority operation,  i.e., it satisfies 
$m(0,0,0) = m(1,0,0) = m(0,1,0) = m(0,0,1) = 0$ and
$m(1,1,1) = m(0,1,1) = m(1,0,1) = m(0,1,1) = 1$ and $i: \mathbb{N}^3 \to \mathbb{N}$, where $\mathbb{N}$ is the set of natural numbers, be any injection. We claim that $f : A^3 \to A$ defined so that it satisfies:
$f((d_1, e_1), (d_2, e_2), (d_2, e_2)) = (m(d_1,d_2, d_3), i(e_1, e_2, e_3))$
is an oligopotent qnu-operation.

The operation is oligopotent since it clearly preserves $E$ and $N$.
In order to see that $R$ is preserved by $f$ consider 
elements $(e_i^j, d_i^j)$ in $A$ with $i, j \in [3]$ such that for all 
$j \in [3]$ we have $((d_1^j, e_1^j),(d_2^j, e_2^j),(d_3^j, e_3^j)) \in R$.
We have to show that 
$(f((d_1^1, e_1^1),(d_1^2, e_1^2),(d_1^3, e_1^3)),
f((d_2^1, e_2^1),$
$(d_2^2, e_2^2),(d_2^3, e_2^3)),
f((d_3^1, e_3^1),(d_3^2, e_3^2),(d_3^3, e_3^3)))$ is in $R$.
Observe that two out of $\{  d_1^2, d_2^2, d_3^2 \}$ need to have the same value.
Assume without loss of generality that $d_1^2 =  d_2^2 = 0$.
Further, if $d_1^1 = d_2^1 = 0$ and $d_1^3 = d_2^3 = 1$ or 
$d_1^1 = d_2^1 = 1$ and $d_1^3 = d_2^3 = 0$, then we are clearly 
done.
Hence, without loss of generality we may assume that 
$d_1^1  = 0, d_2^1 = 1$ and  $d_1^3 = 1$, $ d_2^3 = 0$. But in this case
exactly one out of $\{ d_1^3, d_3^3 \}$ equals $0$ and still $f$ preserves $R$.
$\square$

\section{Proof of Lemma~\ref{lem:injO1dominating}}
Let $\mathcal{I}$ be a non-trivial $(2, \maxbound)$-minimal instance of the CSP equivalent to an instance of $\csp(\mathbb{A})$ and let 
$\varphi_I$ be a conjunction of atomic formulas containing $R(v_{i_1}, \ldots, v_{i_r})$ for every constraint of the form $((v_{i_1}, \ldots, v_{i_r}), R)$ in $\mathcal{I}$~\footnote{It is well-known that all such relations $R$ are pp-definable in $\mathbb{A}$}.
Then consider 
\begin{align}
\varphi_S &:= \varphi_S^{O_1} \wedge \varphi_S^{O_2} \wedge \varphi_S^{=} \nonumber
\end{align}
where 
\begin{itemize}
\item $\varphi_S^{O_1} := \bigwedge_{\instance_{i,j} \ni O_1} O_1(v_i, v_j)$,
\item $\varphi_S^{O_2} := \bigwedge_{\instance_{i,j} = \{ O_2, = \}} O_2(v_i, v_j) \wedge \bigwedge_{\instance_{i,j} = \{ O_2 \}} O_2(v_i, v_j)$, and
\item $\varphi_S^{=} := \bigwedge_{\instance_{i,j} = \{ = \}} v_i = v_j$.
\end{itemize}

In order to prove that $\mathcal{I}$ has a solution, we will show that  $\varphi_I \wedge \varphi_S$
is satisfiable.
Let $\varphi_I^{O_1}$ be $\varphi_I$ where every atomic formula $\psi$ is replaced by 
$\psi^{O_1}$ equivalent to 
$\psi \wedge \eta_S^{O_1}$ where $\eta_S^{O_1}$ is $\varphi_S^{O_1}$ 
restricted to conjuncts involving variables in $\psi$. We first prove that 
every $\psi \wedge \eta_S^{O_1}$ defines a non-empty relation. Assume the contrary. Then there exists $\psi$, a minimal non-empty subset $\mathbb{I}^{O_1}$
of $\{ (i,j) \mid \instance_{i,j} \ni O_1 \}$ and $(i_3, i_4)$ with $P_{i_3, i_4} \ni O_1$ not in $\mathbb{I}^{O_1}$ 
 such that 
$(\psi \wedge \bigwedge_{(i,j) \in \mathbb{I}^{O_1}} O_1(v_i,v_j))$ entails $\uuOtwo(v_{i_3}, v_{i_4})$.
Let $(i_1, i_2) \in \mathbb{I}^{O_1}$ and $\{ y_1, \ldots, y_l \}$ be all variables occurring in $\psi$.
Set
\begin{align}
\label{eq:quaternary01impO2}
R(x_1, x_2, x_3, x_4) &\equiv (\exists y_1 \cdots \exists y_l~(\psi \wedge \bigwedge_{(i,j) \in \mathbb{I}^{O_1} \setminus \{ (i_1, i_2) \}} O_1(v_i, v_j) \wedge \bigwedge_{j \in [4]} x_j = v_{i_j})).
\end{align}
Since $\mathbb{A}$ is a first-order expansion of $\homograph$ and in consequence contains both $E$ and $N$, we have that $\mathbb{A}$ pp-defines $R$.
Furthermore, observe that $R$ efficiently entails $(O_1(x_1, x_2) \implies \uuOtwo(x_3, x_4))$. The relation $R$ clearly entails  $(O_1(x_1, x_2) \implies \uuOtwo(x_3, x_4))$.
Since $\mathbb{I}^{O_1}$ is minimal,
we have that $R$ contains a tuple $t_1$ such that $(t_1[1], t_1[2]) \in O_1$ and $(t_1[3], t_1[4]) \in \uuOtwo$.
On the other hand, the minimality of $\mathbb{I}^{O_1}$ implies that $R$ contains a tuple 
$t_2$ satisfying $(t_2[3],t_2[4]) \in O_1$. Since $R$ entails $(O_1(x_1, x_2) \implies \uuOtwo(x_3, x_4))$
we have that $(t_2[1], t_2[2]) \notin O_1$. It follows that $R$ is a $[(O_1(x_1, x_2) \implies \uuOtwo(x_3, x_4))]$-relation and contradicts the assumption that $\mathbb{A}$ does not pp-define 
such relations. Thus, every $\psi^{O_1}$ is satisfiable.

It follows that for every atomic formula $\psi^{O_1}$ in $\varphi_I^{O_1}$ the projection of $\psi^{O_1}$ to any $\{v_i,v_j\}$ with $\instance_{i,j} \ni O_1$ equals $O_1$.
Observe that the following is also true: the projection of every $\psi^{O_1}$ to $\{ v_i, v_j \}$ with $O_2 \subseteq 
\instance_{i,j} \subseteq \uuOtwo$ contains $O_2$. Indeed, otherwise there exists a minimal 
$\mathbb{I}^{O_1}$
of $\{ (i,j) \mid \instance_{i,j} \ni O_1 \}$ and $(i_3, i_4)$ with $\instance_{i_3, i_4}$ satisfying 
$O_2 \subseteq P_{i_3, i_4} \subseteq \uuOtwo$ such that 
$(\psi \wedge \bigwedge_{(i,j) \in \mathbb{I}^{O_1}} O_1(v_i,v_j))$ entails $(v_{i_3} = v_{i_4})$.
Let $(i_1, i_2) \in \mathbb{I}^{O_1}$ and $\{ y_1, \ldots, y_l \}$ be all variables occurring in $\psi$.
In the same way as above,  we argue that 
$R(x_1, x_2, x_3, x_4)$ defined in the same way as the relation in~(\ref{eq:quaternary01impO2})
contains a tuple $t_1$ such that 
$(t_1[1], t_1[2]) \in O_1$ and $(t_1[3] = t_1[4])$ as well as a tuple $t_2$ satisfying
$(t_2[1], t_2[2]) \in \uuOtwo$ and $(t_1[3],  t_1[4]) \in O_2$.
It implies that $R$
efficiently entails $(O_1(x_1, x_2) \rightarrow x_3 = x_4)$, which contradicts the assumption that $R$
does not pp-define $[(O_1(x_1, x_2) \rightarrow x_3 = x_4)]$-relations.

In the second step of the transformation of the instance, we obtain $\varphi_I^{O_2}$ from $\varphi_I^{O_1}$ by replacing every atomic formula
$\psi^{O_1}$ with $\psi^{O_2}$ equivalent to $(\psi^{O_1} \wedge \eta_S^{O_2})$
where $\eta_S^{O_2}$ is $\varphi_S^{O_2}$ restricted to variables in $\psi$. 
We claim that a relation defined by any $(\psi^{O_1} \wedge \eta_S^{O_2})$  is non-empty. 
Assume on the contrary that it is not the case for some $(\psi^{O_1} \wedge \eta_S^{O_2})$. Define $\mathbb{I}^{O_2}$ to be the set of all $( i,j ) \in [n]^2$ where $\{ v_i , v_j \}$ are in the scope of $\psi^{O_2}$ and the projection of $\psi^{O_1}$
to $\{ v_i, v_j \}$ contains $\uuOtwo$. Since $\psi^{O_1}$  is satisfiable and $\psi^{O_2}$ is not and since the projection of $\psi^{O_1}$ to $\{ v_i, v_j \}$ with $\instance_{i,j}$ satisfying 
$O_2 \subseteq \instance_{i,j} \subseteq \uuOtwo$ contains $O_2$, there exists a minimal non-empty $\mathbb{J} \subseteq \mathbb{I}^{O_2}$ and  $(i_3, i_4) \in \mathbb{I}^{O_2} \setminus \mathbb{J}$
such that $(\psi^{O_1} \wedge \bigwedge_{(i,j) \in \mathbb{J}} O_2(v_i, v_j))$
entails $(v_{i_3} = v_{i_4})$. This time, having in mind that $\psi^{O_1}$ is equivalent to 
$\psi \wedge \eta_S^{O_1}$ 
we can pp-define a $[(O_2(x_1, x_2) \implies x_3 = x_4), 
(\uuOtwo(x_1, x_2) \wedge \uuOtwo(x_3, x_4))]$-relation, which again contradicts the assumption
that such relations are not pp-definable in $\mathbb{A}$.

Observe that in the obtained $\varphi_I^{O_2}$ 
a projection of any constraint $\psi^{O_2}$ to any pair of variables $\{ v_i, v_j \}$ in the scope of $\psi$
equals $E, N$ or $=$ and the projections agree in different atomic formlas Thus, in fact $\mathcal{I}'$ obtained from
$\varphi_I^{O_2}$ by replacing atomic formulas of the form $R(v_{i_1}, \ldots, v_{i_r})$ with constraints $((v_{i_1}, \ldots, v_{i_r}), R)$
is a $(2, \maxbound)$-minimal instance equivalent to an instance of $\csp((D; E, N, =))$
where $(D; E)$ is $\homograph$. By Observation~\ref{obs:ENEqualityWidth}, it follows that $\mathcal{I}'$, and hence, by Proposition~\ref{prop:minimality}, $\mathcal{I}$ has a solution.
$\square$.

\section{Proof of Lemma~\ref{lem:hnear23min}}
Let $\mathcal{I}$ be a $(2,3)$-minimal non-trivial instance of the CSP
equivalent to an instance of $\csp(\mathbb{A})$ and let $\varphi_I$ be a conjunction of atomic formulas obtained from $\mathcal{I}$ by replacing every
constraint of the form $((v_{i_1}, \ldots, v_{i_r}), R)$ with an atomic formula
$R(v_{i_1}, \ldots, v_{i_r})$.
We need to prove that $\varphi_I$
is satisfiable. 
For an atomic formula $\psi$ in $\varphi_I$ we define  $\psi^N$ to be  equivalent to 
$\psi \wedge \eta^N$ where  $\eta^N$ is the conjunction of $\bigwedge_{\instance_{i,j} \supseteq \{ N \}} N(v_i, v_j)$ restricted to variables occurring in $\psi$. 
We first show that 
$\varphi_I^N$ obtained from $\varphi_I$ by replacing every  $\psi$ with 
$\psi^N$ is satisfiable.

Assume first on the contrary that there exists  $\psi^N$ in  $\varphi_I^N$ defining an empty relation. Then there exists a non-empty and minimal set $\mathbb{I}^N$ of pairs
of indices  $(i,j)$ satisfying $\instance_{i,j} \supseteq \{ N \}$ and $(i_3, i_4) \notin \mathbb{I}^N$
but with  $P_{i_3,i_4} \supseteq \{ N \}$ such that $(\psi \wedge \bigwedge_{(i,j) \in \mathbb{I}^N} N(v_i, v_j))$
entails $\uuE(v_{i_3}, v_{i_4})$. Let $\{ y_1, \ldots, y_l\}$ be all variables occuring in $\psi$ and $(i_1, i_2) \in \mathbb{I}^N$. Then:
\begin{align}
\label{eq:quaternaryNC2omega}
R(x_1, x_2, x_3, x_4) &\equiv (\exists y_1 \cdots \exists y_l~(\psi \wedge \bigwedge_{(i,j) \in \mathbb{I}^N \setminus \{ (i_1, i_2) \}}
 N(v_i, v_j) \wedge \bigwedge_{j \in [4]} x_j = v_{i_j})) 
\end{align} 
entails $(N(x_1, x_2) \implies \uuE(x_3, x_4))$. Since $\mathbb{I}$ is minimal and non-empty, the relation $R$
contains a tuple $t_1$ such that $(t_1[1], t_1[2]) \in N$ and $(t_1[3], t_1[4]) \in \uuE$ 
as well as a tuple $t_2$ satisfying $(t_2[1], t_2[2]) \in \uuE$ and $(t_2[3], t_2[4]) \in N$. It follows that $\mathbb{A}$ pp-defines a $[(N(x_1, x_2) \implies \uuE(x_3, x_4))]$-relation, which
contradicts the assumption of the lemma. 

We will now show that $\mathcal{I}^N$ obtained from $\varphi_I^N$ 
by replacing every atomic formula of the form $R(v_{i_1}, \ldots, v_{i_r})$ by
$((v_{i_1}, \ldots, v_{i_r}), R)$ 
is $(2,3)$-minimal. Since $\mathcal{I}$ is $(2,3)$-minimal, clearly every subset of at most $3$ variables is in the scope of some constraint. 
Further, if the projection of any constraint in $\mathcal{I}^N$  to $\{ v_i, v_j \}$ with $i,j \in [n]$ contains $N$,
it equals $N$, and hence the projections of any two constraint in $\mathcal{I}^N$
to $\{ v_i, v_j \}$ with $\instance_{i,j} \supseteq \{ N \}$ and $i,j \in [n]$ agree.
   In order to complete the  proof of the fact that $\mathcal{I}^N$ satisfy Property~\ref{M2} in Definition~\ref{def:minimality},  we concentrate
   on $\instance_{i,j}$ that do not contain $N$. For all such $i,j \in [n]$  
and all $O \in \instance_{i,j}$  we will show that $(\psi^N \wedge O(v_{i},v_{j}))$ is satisfiable. 
If $\left| \instance_{i,j} \right| = 1$, we are done by the fact that every $\psi^N$ is satisfiable.
Thus, we turn to the case where $\instance_{i,j} = \{ E, = \}$.
Assume on the contrary that there are $\psi^N$,  $\{ v_{i_3}, v_{i_4} \}$ in the scope of $\psi^N$ and $O \in \{ E, = \}$
such that $(\psi^{N} \wedge O(v_{i_3},v_{i_4}))$ is not satisfiable. It follows that 
there is a non-empty and minimal $\mathbb{I}^N \subseteq \{ (i,j) \mid \instance_{i,j} \supseteq \{ N \} \}$ such that 
$(\psi \wedge \bigwedge_{(i,j) \in \mathbb{I}^N} N(v_i, v_j))$ entails $O_1(v_{i_3}, v_{i_4})$ where $O_1 \in \{ E, = \} \setminus \{ O \}$. Hence
$R(x_1, x_2, x_3, x_4)$ defined by the formula~(\ref{eq:quaternaryNC2omega})
entails $(N(x_1, x_2) \implies O_1(x_3, x_4))$. Again, using the minimality of $\mathbb{I}^N$
we can show that $R$ contains a tuple $t_1$ such that $(t_1[1], t_1[2]) \in N$ and $(t_1[3], t_1[4]) \in O_1$
as well as a tuple $t_2$ satisfying $(t_2[1], t_2[2]) \in \uuE$ and $(t_2[3], t_2[4]) \in O$.
It follows that $R$ efficiently entails $(N(x_1, x_2) \implies O_1(x_3, x_4))$ and that it entails $(\uuE(x_3, x_4))$.
Since $\mathbb{A}$ contains $N$, the relation $R$ is a $[(N(x_1, x_2) \implies O_1(x_3, x_4)),(\uuE(x_3, x_4))]$-relation pp-definable in $\mathbb{A}$.
Now, regardless of whether $O_1$ is $E$ or $=$, we have a contradiction with the assumptions of the lemma.

From now on we assume that $\mathcal{I}^N$ is $(2,3)$-minimal and we set $P^N_{i,j}$ for $i,j \in [n]$ to be the set of orbitals $O \in \{ E, N, = \}$ contained in 
the projection of any constraint in $\mathcal{I}^N$ to $\{ v_i, v_j \}$. 
Actually $P^N_{i,j} = \instance_{i,j}$ if $N \notin \instance_{i,j}$
and $P^N_{i,j} = \{ N \}$ otherwise. 
 If $\left| P^N_{i,j} \right| = 1$ for all $i,j \in [n]$, then 
$\mathcal{I}^N$ is a non-trivial $(2,3)$-minimal instance of the CSP equivalent to an instance of $\csp((D;E, N, =))$ where $(D;E)$ is $C^2_{\omega}$.
Since $\mathbb{L}_{C^2_{\omega}}$ is $3$,
it follows by Observation~\ref{obs:ENEqualityWidth} that $\mathcal{I}^N$ and, by Proposition~\ref{prop:minimality}, that $\mathcal{I}$ has a solution. It completes the proof of the lemma 
in the case where $\left| P^N_{i,j} \right| = 1$ for all $i,j \in [n]$. 

From now on we assume that there exists $i,j \in [n]$ such that $\left| P^N_{i,j} \right| = 2$. In particular, we have that $\uuE$ is definable in $\mathbb{A}$.
Since $\mathcal{I}^N$ is $(2,3)$-minimal, it is easy to see that $\instance_{i,j}^N$ with $i,j \in [n]$
respects the transitivity of $\uuE$ in the sense that for all  $i,j,k \in [n]$: if $P^N_{i,j} \cap \{ E, = \} \neq \emptyset$ and 
$P^N_{j,k} \cap \{ E, = \} \neq \emptyset$, then $P^N_{i,k} \cap \{ E, = \} \neq \emptyset$.  Set now $J_1, \ldots, J_m$
to be the partition of $\V$ such that for all $l \in [m]$ and $i,j \in J_l$ we have that $P^N_{i,j} \cap \{ E, = \} \neq \emptyset$. 
Then construct $\varphi^i$  
for all $i \in [m]$ to be a formula that for every atomic formula $\psi^N$ in $\varphi_I^N$ contains 
 $\psi^i$ equivalent to $(\exists v_{j_1} \cdots \exists v_{j_l}~\psi^N)$ where 
 $\{ j_1, \ldots, j_l \} = [n] \setminus J_i$.
Observe that every $\varphi^i$ is equivalent to an instance $\mathcal{I}^i$  of  $\csp(\mathbb{A}_i)$ such that $\mathbb{A}_i$ has a pp-definition in $\mathbb{A}$
and hence is also preserved by an oligopotent qnu-operation $f$. Furthermore, the core $\mathbb{A}^c_i$ of $\mathbb{A}_i$ for every $i \in [m]$
is finite. Assume without loss of generality that $\mathbb{A}^c_i$ admits a fo-definition over a single edge $\{ a_{2i-1}, a_{2i} \}$ in $A$ and that
$f$ is idempotent on $\{ a_1, \ldots, a_{2m}\}$.
Let $h_i$ with $i \in [m]$ 
be a homomorphism from $\mathbb{A}_i$
to $\mathbb{A}^c_i$ that is, without loss of generality, idempotent on $\{ a_{2i-1}, a_{2i} \}$.
Then, since $f$ preserves $\uuE$, we have that $h_i(f(x_1, \ldots, x_k))$ restricted to $\{ a_{2i-1}, a_{2i} \}$ is
a near-unanimity operation preserving $\mathbb{A}^c_i$. Since an instance of the CSP equivalent to $\varphi^i$ is $(2,3)$-minimal also when evaluated in $\mathbb{A}^c_i$,
it follows by Theorem~10 that $\varphi^i$ for all $i \in [l]$ has a solution $s_i$ in 
$\{ a_{2i-1}, a_{2i} \}$.

For all $l \in [m]$ and $i,j \in J_l$ we set $O_{i,j}$ to $E$ if 
$s(v_i) \neq s(v_j)$ and to $=$ otherwise. In order to complete the proof of the lemma we will show that 
$(\varphi^N_I \wedge \bigwedge_{l \in [m]} \bigwedge_{i,j \in J_l} O_{i,j}(v_i, v_j))$ is satisfiable.
Assume the contrary. Then there exists a 
$\psi^N$ in $\varphi_I^N$ and a
minimal non-empty subset $\mathbb{J}$ of pairs of indices $(i,j) \in \bigcup_{l \in [m]} J_l \times J_l$ of variables occurring in $\psi^N$ such that $(\psi^N \wedge \bigwedge_{(i,j) \in \mathbb{J}} O_{i,j}(v_i, v_j))
$ is not satisfiable.
Since $s_i$ is a solution to $\varphi^i$ for all $i \in [m]$ we have that $\mathbb{J}$ contains at least two pairs: $(i_1, i_2) \in J_a$ and $(i_3, i_4) \in J_b$
with $a \neq b$. Set $y_1, \ldots, y_l$ to be all variables occurring in  $\psi$ and consider now the relation
\begin{align}
R(x_1, x_2, x_3, x_4) &\equiv (\exists y_1 \cdots y_m~(\psi^N \wedge \bigwedge_{(i,j) \in \mathbb{J} \setminus \{ (i_1, i_2), (i_3, i_4) \}} 
O_{i,j}(v_i, v_j) \wedge \bigwedge_{k \in [4]} x_k = v_{i_k})). \nonumber
\end{align}
Since $a \neq b$, the relation $R$ entails 
$N(x_2, x_3)$ and $(O_{i_1, i_2}(x_1, x_2) \implies O'_{i_3, i_4}(x_3, x_4))$ where 
$O'_{i_3, i_4} \in \{ E, = \} \setminus \{ O_{i_3, i_4} \}$. By the minimality of $\mathbb{J}$ 
we have that $R$ contains a tuple $t_1$ 
such that $(t_1[1], t_1[2]) \in O_{i_1, i_2}$ and $(t_1[3], t_1[4]) \in O'_{i_3, i_4}$ as well as a tuple
$t_2$ satisfying $(t_2[1], t_2[2]) \in O'_{i_1, i_2}$ where $O'_{i_1, i_2} \in \{ E, = \} \setminus \{ O_{i_1, i_2} \}$ and $(t_2[3], t_2[4]) \in O_{i_3, i_4}$.
Since $\mathbb{A}$ contains $E,N$,  we have that it  pp-defines a $[(O_{i_1, i_2}(x_1, x_2) \implies O'_{i_3, i_4}(x_1, x_2), (\uuE(x_1, x_2) \wedge N(x_2, x_3) \wedge 
\uuE(x_3, x_4))]$-relation where $O_{i_1, i_2}, O'_{i_3, i_4} \in \{ E, = \}$. It contradicts the assumptions of the lemma and proves that $s$ satisfies  $\varphi_I$, and hence it is a solution to $\mathcal{I}$. It completes the proof of the lemma.
$\square$

\section{Proof of Lemma~\ref{lem:Comega23excludes}}

Let $\mathcal{I}$ be a non-trivial $(2,3)$-minimal instance of the CSP equivalent to 
an instance of
$\csp(\mathbb{A})$ and let 
$0,1 \in A$ be two elements in two
different infinite cliques in $C^{\omega}_2$ and $g: A \rightarrow \{ 0,  1\}$  a function that sends all elements in  the infinite clique containing  $0$ to $0$ and all elements in the infinite clique containing $1$ to $1$.  Define the structure $g(\mathbb{A})$ over the domain $\{ 0,1 \}$ to be a structure in which every $R$ in $\mathbb{A}$ is replaced by $g(R)$ 
in which in turn every tuple $t \in R$ is replaced by $g(t)$. Since $f$ preserves $E$ and hence 
$\uuE(x_1, x_2) \equiv (\exists y~E(x_1, y) \wedge E(y, x_2))$, we may assume without loss of generality that $f$ sends 
$(a_1, \ldots, a_k)$ for $a_1, \ldots, a_k$ in the same infinite clique to an element in the same infinite clique.
It follows that 
$f'(x_1, \ldots, x_k) := g(f(x_1, \ldots, x_k))$ restricted to $\{ 0, 1 \}$ is a near-unanimity operation preserving $g(\mathbb{A})$. Consider then $g(\mathcal{I})$ which is an instance of $\csp(g(\mathbb{A}))$
obtained from $\mathcal{I}$ by replacing every constraint $((v_{i_1}, \ldots, v_{i_r}),R)$  by $((v_{i_1}, \ldots, v_{i_r}),g(R))$.
Notice that $g(\mathcal{I})$ is $(2,3)$-consistent. Clearly for every set of three variables, there is a constraint containing these variables in its scope. Further, for any two variables $v_i, v_j$ with $i,j \in [n]$, the projection of any constraint $$((v_{k_1}, \ldots, v_{k_r}),g(R))$$ to $\{v_i,  v_j \}$ contains $\{ (0,0), (1, 1) \}$ if the projection of $((v_{k_1}, \ldots, v_{k_r}),R)$ to $\{ v_i, v_j \}$ contains $=$ or $E$ and $\{ (0,1), (1, 0) \}$ if the projection of $((v_{k_1}, \ldots, v_{k_r}),R)$ to $\{ v_i, v_j \}$ contains $N$. Hence $g(\mathcal{I})$ is $(2,3)$-minimal. By Theorem~10 we have that
$g(\mathcal{I})$ has a solution $s_g$. 
For every constraint $C$ we let
\begin{itemize}
\item $\mathbb{I}_C^d$ to be the set of pairs of indices $(i,j) \in [n]^2$ of variables occurring in $C$ such that
 $s_g(v_i) \neq s_g(v_j)$
and 
\item $\mathbb{I}_C^s$ to be the set of pairs of indices $(i,j) \in [n]^2$ of variables occurring in $C$
such that
 $s_g(v_i) = s_g(v_j)$ and $\instance_{i,j} \supseteq \{ E \}$.
\end{itemize}

Consider now $\mathcal{I}^N$ obtained from $\mathcal{I}$ by replacing every constraint $C := ((v_{i_1}, \ldots, v_{i_r}), R)$ in $\mathcal{I}$
by $C^N := ((v_{i_1}, \ldots, v_{i_r}), R')$ 
where $R'(v_{i_1}, \ldots, v_{i_r}) \equiv (R(v_{i_1}, \ldots, v_{i_r}) \wedge \bigwedge_{(i,j) \in \mathbb{I}_\psi^d} N(v_i, v_j))$. Since $s_g$ is the solution to $g(\mathcal{I})$, we have that every $R'$ is non-empty. Furthermore, we claim that the projection of every $C^N$ to $\{ v_i, v_j \}$ with $(i,j) \in \mathbb{I}_C^s$ contains $E$. Assume on the contrary that there is a constraint $C^N$ and a pair 
$(i_3, i_4) \in \mathbb{I}_C^s$ such that the projection of $C^N$ to $\{ v_i, v_j \}$ does not contain $E$.
It follows that there exists a minimal non-empty subset $\mathbb{J}^d_{C} \subseteq \mathbb{I}^d_{C}$
 such that $(\eta_{\psi} := \bigwedge_{(i,j) \in \mathbb{J}^C_{\psi}} N(v_i, v_j) \wedge R')$ entails $\uuN(v_{i_3}, v_{i_4})$ and does not entail $N(v_{i_3}, v_{i_4})$. 
Since $\mathbb{A}$ does not pp-define $\uuN$, we have that $\eta_{\psi}$ entails $(v_{i_3} = v_{i_4})$.
Let $(i_1, i_2) \in \mathbb{J}^C_{\psi}$ and $\{ y_1, \ldots, y_l \}$ be the scope of $C$. Then 
\begin{align}
R(x_1, x_2, x_3, x_4) &:= (\exists y_1 \cdots \exists y_l~(\bigwedge_{(i,j) \in \mathbb{J}^d_{C} \setminus \{ (i_1, i_2) \}} N(v_i, v_j) \wedge \psi \wedge \bigwedge_{j \in [4]} v_{i_j} = x_j)) \nonumber
\end{align}
entails $(N(x_1, x_2) \implies (x_3 = x_4))$. Since $s_g$ is a solution to $g(\mathcal{I})$, the relation $R$
clearly contains a tuple $t_1$ with $(t_1[1], t_1[2]) \in N$ and $(t_1[3] = t_1[4])$. On the other hand, the set 
$ \mathbb{J}^d_{C}$ is minimal, and hence $R$ contains a tuple $t_2$ satisfying $(t_2[3] \neq t_2[4])$ and 
$(t_2[1], t_2[2]) \notin N$.  We obtain that $R$ is a $[(N(x_1, x_2) \rightarrow (x_3 = x_4))]$-relation.
Since $\mathbb{A}$ contains $N$, it pp-defines $R$.
It contradicts the assumption of the lemma. We have that the projection of every $C^N$ to any $(i,j ) \in \mathbb{I}_C^s$ contains $E$. 

In the next step we construct $\mathcal{I}^E$ which is obtained from $\mathcal{I}^N$ by replacing every constraint $C^N := ((v_{k_1}, \ldots, v_{k_r}), R)$ by $C^E := ((v_{k_1}, \ldots, v_{k_r}), R')$
where $R'(v_{k_1}, \ldots, v_{k_r}) \equiv (R(v_{k_1}, \ldots, v_{k_r}) \wedge \bigwedge_{(i,j) \in \mathbb{I}_C^s} E(v_i, v_j))$.
We claim that every such $R'$ is non-empty. Assume the contrary. Then there exists a minimal 
and non-empty $\mathbb{J}_C^s \subseteq \mathbb{I}_C^s$ and $(i_3, i_4) \in 
\mathbb{I}_C^s \setminus \mathbb{J}_C^s$
 such that  
$(\eta_{C}^E := R'(v_{k_1}, \ldots, v_{k_r}) \wedge \bigwedge_{(i,j) \in \mathbb{J}_C^s} E(v_i, v_j))$ entails $\uuN(v_{i_3}, v_{i_4})$ and does not entail $N(v_{i_3}, v_{i_4})$.
Again, since $\mathbb{A}$ does not pp-define $\uuN$, the formula $\eta_{\psi}^E$ entails $(v_{i_3} = v_{i_4})$.
Let $(i_1, i_2) \in \mathbb{J}_{C}^s$.
 Then 
\begin{align}
R(x_1, x_2, x_3, x_4) &:= (\exists v_{k_1} \cdots \exists v_{k_r}~(\psi^N \wedge \bigwedge_{(i,j) \in \mathbb{J}_C^s \setminus \{ (i_1, i_2) \}} E(v_i, v_j) \wedge \bigwedge_{j \in [4]} v_{i_j} = x_j)) \nonumber
\end{align} 
entails $(E(v_{i_1}, v_{i_2}) \implies (v_{i_3} = v_{i_4}))$. Since the projection of every $C^N$ to any
$(i,j ) \in \mathbb{I}_C^s$ contains $E$, the relation $R$ contains a tuple $t_1$ with $(t_1[1], t_1[2]) \in E$
and $(t_1[3] = t_1[4])$.
On the other hand, since $\mathbb{J}_C^s$ is minimal there is a tuple $t_2 \in R$ with $(t_2[3] \neq t_2[4])$
and $(t_2[1], t_2[2]) \notin E$. Thus, $R$ is a $[(E(x_1, x_2) \rightarrow x_3 = x_4)]$-relation. Since $\mathbb{A}$ contains $E$ and $N$, we have that $R$
is pp-definable in $\mathbb{A}$. It contradicts the assumption of the lemma and completes the proof of the fact that every $R'$ is non-empty.

Since $\mathbb{A}$ does not pp-define $\uuN$, for every $i,j \in [n]$, the projection of any $C^E$ 
to $\{ v_i , v_j \}$ is $N$ if $(i,j) \in \mathbb{I}_C^d$, it is $E$ if $(i,j) \in \mathbb{I}_C^s$ and 
$=$ otherwise that is when $\instance_{i,j} = \{ = \}$. Thus, $\mathcal{I}^E$ is a $(2,3)$-minimal instance equivalent 
to an instance of $\csp((D; E, N, =))$, where $(D; E)$ is $C^{\omega}_2$. Since $\mathbb{L}_{C^{\omega}_2}$ is $3,$ it follows by Observation~\ref{obs:ENEqualityWidth} that $\mathcal{I}^E$ and in consequence, by Proposition~\ref{prop:minimality}, $\mathcal{I}$ has a solution.
It completes the proof of the lemma.
$\square$

\section{Proof Of Lemma~\ref{lem:graphreductbinary}}

We will show that in any of the considered cases $\mathbb{A}$ pp-defines none of the relations mentioned in the
formulation of Lemma~\ref{lem:injO1dominating} where $O_1 = E, O_2 = N$ 
in Cases~\ref{graphreductbinary:max} and~\ref{graphreductbinary:Econstant} and $O_1 = N, O_2 = E$
in Cases~\ref{graphreductbinary:min} and~\ref{graphreductbinary:Nconstant}.

Assume first on the contrary that $\mathbb{A}$ pp-defines a $[(O_1(x_1, x_2) \implies \uuOtwo(x_3, x_4))]$-relation.
The relation 
$R$ contains:
\begin{itemize}
\item a tuple $t_1$ such that $(t_1[1],t_1[2]) \in O_1$ and $(t_1[3], t_1[4]) \in \uuOtwo$, and
\item a tuple $t_2$ such that $(t_2[1],t_2[2]) \in \uuOtwo$ and $(t_1[3], t_1[4]) \in O_1$.
\end{itemize} 
Observe that $t:= f(t_1, t_2)$, where $f$ is any operation from the formulation of the lemma, satisfies $(t[1],t[2]) \in O_1$ and $(t[3], t[4]) \in O_1$ in any case. 
But containment of $t \in R$ contradicts the fact that $R$ 
efficiently entails $(O_1(x_1, x_2) \implies \uuOtwo(x_3,x_4))$. 

Then we turn to showing that $[(O_1(x_1, x_2) \implies x_3 = x_4)]$-relations
$R$ cannot be pp-definable in $\mathbb{A}$. Assume on the contrary that the relation $R$
contains 
\begin{itemize}
\item a tuple $t_1$ such that $(t_1[1],t_1[2]) \in O_1$ and $t_1[3] = t_1[4]$, and
\item a tuple $t_2$ such that $(t_2[1],t_2[2]) \in \uuOtwo$ and $(t_1[3] \neq t_1[4])$.
\end{itemize} 
This time, regardless of the case,  $t:= f(t_1, t_2)$ satisfies $(t[1],t[2]) \in O_1$ and $(t[3] \neq t[4])$.
It contradicts the assumption that $R$ 
efficiently entails $(O_1(x_1, x_2) \implies x_3  = x_4)$.

Finally we turn to $[(O_2(x_1,x_2) \implies x_3 = x_4), (\uuOtwo(x_1, x_2) \wedge \uuOtwo(x_3, x_4))]$-relations.
Assume on the contrary that $R$ is such a relation and is preserved by one of operations $f$
mentioned in the formulation of the lemma. 
The relation $R$  contains:
\begin{itemize}
\item a tuple $t_1$ such that $(t_1[1], t_1[2]) \in O_2$ and $t_1[3] = t_1[4]$, and
\item a tuple $t_2$ such that $(t_2[1] = t_2[2])$ and $(t_2[3], t_2[4]) \in O_2$.
\end{itemize}
If $f$ is of behaviour min or max and $O_1$-dominated or  $O_1$-constant, 
then clearly $\uuOtwo$ and in consequence $R$ are not preserved by $f$.
If $f$ is of behaviour min or max and balanced, then $t:= f(t_1, t_2)$ satisfies $(t[1],t[2]) \in O_1$ and $(t[3] \neq t[4])$.
This contradicts the fact that $R$ efficiently entails 
$(O_2(x_1,x_2) \implies x_3 = x_4)$ and completes the proof of the lemma. 
$\square$

\section{Proof of Lemma~\ref{lem:graphmajorityequalities}}

A $[(O(x_1, x_2) \implies x_3 = x_4)]$-relation $R$
  contains:
\begin{itemize}
    \item a tuple $t_1$ such that $(t_1[1], t_1[2]) \in O$ and
    $(t_1[3] = t_1[4])$, and
    \item a tuple $t_2$ such that $(t_2[1], t_2[2]) \notin O$ and
    $(t_2[3] \neq t_2[4])$, 
\end{itemize}
Let $f$ be a ternary injections of behaviour majority specified in  the formulation of the lemma.
To reach a contradiction, it is enough to show that $f$ applied, 
perhaps multiple times, to $t_1, t_2$ produces a tuple $t$ such that $(t[1],t[2]) \in O$ and $(t[3] \neq t[4])$.
Indeed, it implies that $R$ does not entail 
$(O(x_1, x_2) \implies x_3 = x_4)$ and contradicts the assumption.

Let $t := f(t_1, t_1, t_2)$. 
Since $f$ is an injection and $t_2[3] \neq t_2[4]$, it always holds that
$t[3] \neq t[4]$. If $t_2[1] \neq t_2[2]$, then since $f$ is of behaviour majority,
we have that $(t[1], t[2]) \in O$. If $t_2[1] = t_2[2]$, then  since    $(t_1[1], t_1[2]) \in O$, it is straightforward to check that 
$(t[1], t[2]) \in O$ for $f$ that is 
\begin{itemize}
    \item hyperplanely balanced and of behaviour projection,
    \item hyperplanely of behaviour max and $E$-dominated,   
    \item hyperplanely of behaviour min and $N$-dominated,
    \item hyperplanely $E$-constant if $O = E$,
    \item hyperplanely $N$-constant if $O = N$.
\end{itemize}

For the remaining two cases: the case where 
 $f$ is hyperplanely $E$-constant and $O$ is $N$ 
 and the case where $f$ is hyperplanely $N$-constant and $O$
 is $E$,
observe that  $t' = f(t_1, t_1, t)$ satisifes $(t'[1],t'[2]) \in O, t'[3] \neq t'[4]$.
It completes the proof of the lemma.
$\square$

\section{Proof of Lemma~\ref{lem:graphminorityequalities}}

Assume on the contrary that $\mathbb{A}$ pp-defines a
$[(O(x_1, x_2) \implies x_3 = x_4)]$-relation $R$.
The relation $R$ contains
\begin{itemize}
    \item a tuple $t_1$ such that $(t_1[1], t_1[2]) \in O$ and $t_1[3] = t_1[4]$, and
    \item a tuple $t_2$ such that $(t_2[1], t_2[2]) \notin O$ and $t_1[3] \neq t_1[4]$.
\end{itemize}

The proof goes along the lines of the proof of Lemma~\ref{lem:graphmajorityequalities}.
Again, to reach a contradiction, we will produce
 a tuple $t$ satisfying $(t[1], t[2]) \in O$ and $(t[3] \neq t[4])$.
It contradicts that $R$ is a $[(O(x_1, x_2) \implies x_3 = x_4)]$-relation and will complete the proof of the lemma. 

Consider $t = f(t_1, t_2, t_2)$ where $f$ is any operation from the formulation of the lemma. 
Since $t_2[3] \neq t_2[4]$ and $f$ is an injection, we have that 
$t[3] \neq t[4]$. Furthermore, since $f$ is of behaviour minority
we have $(t[1], t[2]) \in O$ always when 
$(t_2[1] \neq t_2[2])$.
Thus, we turn to the case where $t_2[1] = t_2[2]$.
If $f$ is hyperplanely balanced and of behaviour projection, then
$(t[1], t[2]) \in  O$ and we are done.
Further, if $f$ is hyperplanely of behaviour projection and E-dominated, or
hyperplanely balanced of behaviour xnor we obtain 
$(t[1], t[2]) \in  E$ and hence we are done in the case where $O$ is $E$.
Similarly, if $f$ is hyperplanely of behaviour projection and N-dominated, or
hyperplanely balanced of behaviour xor we obtain 
$(t[1], t[2]) \in  N$ and in this case we are done where $O$ is $N$.
Thus, either we are done or we have $t$ such that $(t[1], t[2]) \in O_1$ with $\{ O_1 \} = \{ E, N \} \setminus \{ O \}$ and 
$(t[3] \neq t[4])$. Now $t' := f(t_1, t, t)$ is the desired tuple satisfying $(t'[1],  t'[2]) \in O$ and $(t'[3] \neq t'[4])$.
It completes the proof of the lemma.
$\square$

\section{Proof of Lemma~\ref{lem:implexcludbyh}}

Assume first on the contrary that a $[(N(x_1,x_2) \implies \uuE(x_3, x_4))]$-relation $R$ is pp-definable in $\mathbb{A}$
and let $t_1$ be a tuple in $R$ with $(t_1[1], t_1[2]) \in N$ and $(t_1[3], t_1[4]) \in \uuE$ and let
$t_2 \in R$ satisfy $(t_2[1], t_2[2]) \in \uuE$ and $(t_2[3], t_2[4]) \in N$. Then for $t = h(t_1, t_1, t_2)$
we have that $(t[1], t[2]), (t[3], t[4]) \in N$.
It contradicts the fact that $R$ efficiently entails $(N(x_1,x_2) \implies \uuE(x_3, x_4))$ and completes the proof for the first case.

In the second case we again assume the contary and consider a $[(N(x_1,x_2) \implies x_3 = x_4)]$-relation $R$.
This time we have $t_1 \in R$ with $(t_1[1],t_1[2]) \in N$ and $(t_1[3] = t_1[4])$  and $t_2 \in R$
with $(t_2[1], t_2[2]) \in \uuE$ and $(t_2[3] \neq t_2[4])$. Now, 
 regardless of whether $(t_2[3], t_2[4]) \in N$ or $(t_2[3], t_2[4]) \in E$
the tuple $t = h(t_1, t_1, t_2)$ satisfies $(t[1], t[2]) \in N$ and $(t[3] \neq t_4)$.
It contradicts the fact that $R$
is a $[(N(x_1,x_2) \implies x_3 = x_4))]$-relation. 

The proof of the third case goes in the same way. We assume on the contrary that $R$  pp-definable in $\mathbb{A}$
is a  $[(N(x_1,x_2) \implies E(x_3, x_4)), (\uuE(x_3, x_4))]$-relation. This time we have 
$t_1$ with $(t_1[1], t_1[2]) \in N$ and $(t_1[3] t_1[4]) \in E$ and $(t_2[1], t_2[2]) \notin N$ and $(t_2[3] = t_2[4])$.
Observe that $t := f(t_1, t_1, t_2)$ satisfies $(t[1],  t[2]) \in N$ and $(t[3] = t[4])$, which contradicts that 
$R$ is a $[(N(x_1,x_2) \implies E(x_3, x_4)), (\uuE(x_3, x_4))]$-relation.
$\square$

\section{Proof of Lemma~\ref{lem:atmostoneENorNE}}

In what follows, a quatenary tuple  $t$ is called an $\OP$-tuple with $O,P \in \{ E, N, = \}$ if $(t[1], t[2]) \in O$ and $(t[3], t[4]) \in P$.
A tuple is called an constant tuple if all its entries have the same value, injective if all its entries are pairwise different.  

Assume now on the contrary that $\mathbb{A}$ preserved by a ternary injection of behaviour minority or majority and a $k$-ary oligopotent qnu-operation $f$
pp-defines both  a $[(E(x_1, x_2) \rightarrow \uuN(x_3, x_4))]$-relation $R_{\EN}$ and  a $[(N(x_1,x_2) \implies \uuE(x_3,x_4))]$-relation $R_{\NE}$.
Since by Lemmas~\ref{lem:graphmajorityequalities} and~\ref{lem:graphminorityequalities}, the structure $\mathbb{A}$ pp-defines no $[(O(x_1, x_2) \rightarrow x_3 = x_4)]$-relations with $O \in \{ E, N \}$,
the relation $R_{\EN}$ contains an $\EN$-tuple. On the other hand $R_{\EN}$ contains  a tuple 
$t$ such that $(t[1], t[2]) \in \uuN$ and $(t[3], t[4]) \in E$. This time by Lemmas~\ref{lem:graphmajorityequalities} and~\ref{lem:graphminorityequalities}, we have that
$R_{\EN}$ must have an $\NE$-tuple. In a similar way we show that $R_{\NE}$ contains both an $\EN$-tuple and a $\NE$-tuple.

In the remainder of the proof we first show that a relation $R_{\NE}$ (resp., $R_{\EN}$) pp-defines a 
$[(N(x_1, x_2) \rightarrow \uuE(x_3, x_4))]$-relation
$R'_{\NE}$ (resp. a $[(E(x_1, x_2) \rightarrow \uuN(x_3, x_4))]$-relation $R'_{\EN}$) that contains 
an   $\EN$-tuple and a  $\NE$-tuple of some desirable properties.
Finally, we will show that both $R'_{\EN}$ and $R'_{\NE}$ contain either an $\EE$-tuple or a $\NN$-tuple.
It follows that either $R'_{\EN}$ is not an $[(E(x_1, x_2) \rightarrow \uuN(x_3, x_4))]$-relation
or $R'_{\NE}$ is not a $[(N(x_1, x_2) \rightarrow \uuE(x_3, x_4))]$-relation. It will complete the proof of the lemma.
We continue with an easy observation on $\EN$-tuples and $\NE$-tuples.

\begin{observation}
\label{obs:oneequalityEN}
Let $t$ be either an $\EN$-tuple or a $\NE$-tuple. Then there are at most two different $i,j \in [4]$ such that $t[i] = t[j]$.
\end{observation}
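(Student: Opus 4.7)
The plan is to exploit the fact that in a homogeneous graph $\homograph = (A; E)$ the binary relations $E$ and $N$ are both irreflexive and symmetric, and are mutually exclusive with equality. Thus if $t$ is an $\EN$-tuple, then $(t[1], t[2]) \in E$ forces $t[1] \neq t[2]$ and $(t[3], t[4]) \in N$ forces $t[3] \neq t[4]$. Consequently, any equality $t[i] = t[j]$ with $i \neq j$ must be a \emph{cross} equality, i.e.\ with one index in $\{1,2\}$ and the other in $\{3,4\}$. The same reasoning applies to $\NE$-tuples, so without loss of generality I will handle the $\EN$ case.

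The goal is then to show that at most one such cross equality can occur, so that at most two distinct indices are involved in any equality among the entries of $t$. I would assume for contradiction that two distinct cross equalities hold and perform a short case analysis on the four possible cross pairs $\{(1,3), (1,4), (2,3), (2,4)\}$. The cases split into two types. First, if two cross equalities share an index in $\{1,2\}$ (e.g.\ $t[1] = t[3]$ and $t[1] = t[4]$), then transitivity forces $t[3] = t[4]$, contradicting $(t[3], t[4]) \in N$; symmetrically, sharing an index in $\{3,4\}$ forces $t[1] = t[2]$, contradicting $(t[1], t[2]) \in E$. Second, if the two cross equalities are disjoint in indices, then either $t[1] = t[3]$ and $t[2] = t[4]$, giving $(t[3], t[4]) = (t[1], t[2]) \in E$ and hence $(t[3], t[4]) \in E \cap N = \emptyset$; or $t[1] = t[4]$ and $t[2] = t[3]$, giving $(t[3], t[4]) = (t[2], t[1])$, which lies in $E$ by symmetry of $E$, again contradicting $(t[3], t[4]) \in N$.

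All four disjoint-or-overlapping configurations therefore lead to contradictions, so at most one cross equality among the entries of $t$ can hold, involving exactly two distinct indices of $[4]$. Since I argued at the outset that no equality within $\{1,2\}$ or within $\{3,4\}$ is possible, this completes the proof of the observation. The step I expect to require the most care is the disjoint-cross-equality case, since it crucially uses both the irreflexivity and the symmetry of $E$ and $N$ together with the disjointness $E \cap N = \emptyset$; the rest is essentially bookkeeping.
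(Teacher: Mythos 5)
Your proof is correct and takes essentially the same approach as the paper: a direct case inspection using that $E$, $N$, and $=$ are pairwise disjoint symmetric orbitals, so no equality can occur within $\{1,2\}$ or $\{3,4\}$ and no two cross equalities can coexist. If anything, your organization by the six possible pairs of cross equalities (shared index vs.\ disjoint indices) is more systematic and complete than the paper's write-up, which only walks through the cases starting from $t[2]=t[3]$ and $t[1]=t[4]$.
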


\begin{proof}
The proof is performed by the inspection of cases. Since $t$ is an $\EN$-tuple or a $\NE$-tuple, we have
$(t[1], t[2]) \in O_1$, $(t[3], t[4]) \in O_2$ where $\{ O_1, O_2 \} = \{ E, N \}$. Hence, in particular $O_1, O_2$ are two different orbitals.
In the case where $t[2] = t[3]$, we have that 
$(t[1], t[3])\in O_1$ and
 $(t[2], t[4]) \in O_2$.  Since $O_1$ and $O_2$ are different, it follows that 
$t[1] \neq t[4]$.  
Furthermore, 
at most one may be true: either $(t[1] = t[3])$ or $(t[2] = t[4])$. Indeed, otherwise $(t[1], t[4])$ would be in both $O_1$ and $O_2$. If either $(t[1] = t[3])$ or $(t[2] = t[4])$, then since 
$(t[1] \neq t[2])$ and $(t[3] \neq t[4])$, we have that $t[1] \neq t[4]$.
The remaining case to be considered is when $t[1] = t[4]$. Then $(t[2], t[4]) \in O_1$ and $(t[1], t[3]) \in O_2$. It implies that $t[2] \neq t[3]$ and completes the proof of the lemma.  
\end{proof}

\noindent
We now prove that $R_{\EN}$ (resp. $R_{\NE}$) pp-defines $R'_{\EN}$ (resp. $R'_{\NE}$) of the desired properties.

\begin{observation}
\label{obs:ENtuples}
The relation $R_{\EN}$ (resp. $R_{\NE}$) pp-defines 
a $[(E(x_1, x_2) \implies \uuN(x_3, x_4))]$-relation $R'_{\EN}$ (resp. a $[(N(x_1, x_2) \implies \uuE(x_3, x_4) )]$ -relation $R'_{\NE}$) that contains 
\begin{enumerate}
\item \label{ENtuples:eqconnected} either both an  $\EN$-tuple $t_{\EN}$ and a $\NE$-tuple $t_{\NE}$ such that for all $t \in \{ t_{EN}, t_{NE} \}$
we have $(t[2] = t[3])$, or 
\item \label{ENtuples:injective} both an injective $\EN$-tuple $t_{\EN}$ and an injective $\NE$-tuple $t_{\NE}$.
\end{enumerate}
\end{observation}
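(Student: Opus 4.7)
The plan is to exhibit $R'_{\EN}$ as a pp-definition from $R_{\EN}$ together with the orbitals $E$, $N$, and $=$ of $\mathbb{A}$; the construction of $R'_{\NE}$ from $R_{\NE}$ is completely symmetric. The preceding paragraph has already established that $R_{\EN}$ contains at least one $\EN$-tuple $s$ and at least one $\NE$-tuple $s'$, and by Observation~\ref{obs:oneequalityEN} each of them has at most one pair of equal coordinates. Thus each falls into one of five shapes: either injective, or with exactly one of the equalities $t[1]=t[3]$, $t[1]=t[4]$, $t[2]=t[3]$, $t[2]=t[4]$.

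Two cases are immediate. If $R_{\EN}$ contains an injective $\EN$-tuple together with an injective $\NE$-tuple, setting $R'_{\EN} := R_{\EN}$ witnesses item 2. If $R_{\EN}$ contains an $\EN$-tuple and an $\NE$-tuple both of shape $t[2]=t[3]$, setting
\[
R'_{\EN}(x_1,x_2,x_3,x_4) \;\equiv\; R_{\EN}(x_1,x_2,x_3,x_4) \wedge (x_2 = x_3)
\]
witnesses item 1. In both cases the implication $E(x_1,x_2) \implies \uuN(x_3,x_4)$ is inherited from $R_{\EN}$, and the chosen matched pair supplies the forward and backward witnesses required by the definition of efficient entailment.

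The bulk of the work lies in the remaining shape-mismatch configurations. For each pairing of shapes for $(s,s')$ I would attempt to pp-define $R'_{\EN}$ by conjoining $R_{\EN}$ with additional atomic formulas from $\{E,N,=\}$ on appropriately chosen pairs of variables, possibly existentially gluing a second copy of $R_{\EN}$ to introduce auxiliary variables. The objective is to filter $R_{\EN}$ to a pp-definable subset whose $\EN$- and $\NE$-witnesses agree in shape. The crucial input is the fact, established earlier in the proof of Lemma~\ref{lem:atmostoneENorNE}, that $\mathbb{A}$ pp-defines no $[(O(x_1,x_2) \implies x_3 = x_4)]$-relation for $O\in\{E,N\}$ (Lemmas~\ref{lem:graphmajorityequalities} and~\ref{lem:graphminorityequalities}); this rules out the degenerate scenarios in which all candidate witnesses inside the restricted subset would collapse to have $x_3=x_4$ or $x_1=x_2$, thereby forcing enough non-degenerate tuples to exist.

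The main obstacle is the combinatorial case analysis: for each of the (at most) twenty-five mismatched shape combinations one must produce an explicit pp-definition and verify both that the resulting relation is a $[(E(x_1,x_2) \implies \uuN(x_3,x_4))]$-relation and that it contains a matched pair of the required type. The hardest subcases are those where $R_{\EN}$ exhibits only one equality-type among its $\EN$-tuples and an incompatible equality-type among its $\NE$-tuples; the matching there should be handled by chaining several pp-definitions, using coordinate-permutation variants of $R_{\EN}$ together with the non-pp-definability results to exclude degenerate outcomes.
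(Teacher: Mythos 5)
There is a genuine gap: the entire content of the observation is the case where the $\EN$-witnesses and the $\NE$-witnesses do not obviously match in shape, and for that case you only describe an intention (``I would attempt\dots'', ``should be handled by\dots'') rather than an argument. The twenty-five-case programme you sketch --- conjoining extra atomic formulas, gluing copies of $R_{\EN}$, and invoking the non-pp-definability of $[(O(x_1,x_2)\implies x_3=x_4)]$-relations --- is not the mechanism that makes this work, and nothing in your proposal shows it would close. Those non-definability facts are used earlier in the proof of Lemma~\ref{lem:atmostoneENorNE} to guarantee that $R_{\EN}$ and $R_{\NE}$ contain $\EN$- and $\NE$-tuples at all; they do not control \emph{where} the equalities inside those tuples sit, which is the issue here.

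The missing idea is to use the canonical ternary injection $g$ (of behaviour majority or minority) directly on the witness tuples. First, if $R$ contains an $\OP$-tuple $t$ with $t[i]\neq t[j]$ and any $\PO$-tuple $t''$, then $g(t,t'',t'')$ (majority) or $g(t,t,t'')$ (minority) is a $\PO$-tuple that is still non-equal at $(i,j)$, because $g$ is injective; so the sets of coordinate pairs at which non-equality is realized by $\EN$-tuples and by $\NE$-tuples coincide. Second, applying $g$ to two $\EN$-tuples whose (by Observation~\ref{obs:oneequalityEN} unique) equality pairs differ produces an injective $\EN$-tuple. Together these collapse your ``shape-mismatch configurations'' entirely: either every $\EN$- and every $\NE$-tuple of $R$ carries the same single equality pair $\{i,j\}$, or $R$ already contains an injective $\EN$-tuple and an injective $\NE$-tuple. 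In the former case one does not need to conjoin anything --- a coordinate permutation of $R$ (e.g.\ $R'(x_1,x_2,x_3,x_4)\equiv R(x_2,x_1,x_3,x_4)$ when $\{i,j\}=\{1,3\}$) moves the common equality to positions $\{2,3\}$ while preserving efficient entailment of $(E(x_1,x_2)\implies\uuN(x_3,x_4))$, since $E$ and $N$ are symmetric. Your two ``immediate'' cases are fine (though conjoining $x_2=x_3$ is unnecessary), but without the polymorphism argument the proposal does not prove the statement.
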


\begin{proof}
Notice first that whenever $R \in \{ R_{\NE}, R_{\EN} \}$ contains  an $\OP$-tuple $t$ with $\{ O, P \} = \{ E, N \}$ which satisfies $t[i] \neq t[j]$ for some different $i,j \in [4]$, then
$R$ contains also a $\PO$-tuple $t'$ with $t'[i] \neq t'[j]$. Indeed, if $R$ is preserved by a ternary injection  $g$ of behaviour majority, then the desired $t'$ is the result of 
$g(t,t'',t'')$ where $t''$ is some $\PO$-tuple in $R$.  On the other hand, if $R$ is preserved by a ternary injection  $g$ of behaviour minority, the desired $t'$ is the result of 
$g(t,t,t'')$ where $t''$ is some $\PO$-tuple in $R$.

It follows, by Observation~\ref{obs:oneequalityEN} that either there is some exactly one $\{i,j\}$ such that all $\EN$-tuples and all $\NE$-tuples $t$ in $R$ satisfy $t[i] = t[j]$
or $R$ contains an injective $\EN$-tuple and an injective $\NE$-tuple. In the latter case we are done since for the desired $R'_{\NE}(R'_{\EN})$ we simply take $R_{\NE}(R_{\EN})$ while in the former case $\{ i,j \}$ is $\{ 2, 3 \}$, $\{ 1,3 \}$, $\{ 2,4 \}$, or $\{ 1, 4 \}$.
\begin{itemize}
\item If $\{ i,j \} = \{ 2,3 \}$, then we set $R'(x_1, x_2, x_3, x_4) \equiv R(x_1,x_2, x_3, x_4) $
\item If $\{ i,j \} = \{ 1,3 \}$, then we set  $R'(x_1,x_2, x_3, x_4) \equiv R(x_2, x_1, x_3, x_4)$.
\item If $\{ i,j \} = \{ 2,4 \}$, then we set  $R'(x_1,x_2, x_3, x_4) \equiv R(x_1, x_2, x_4, x_3)$.
\item If $\{ i,j \} = \{ 1,4 \}$, then we set  $R'(x_1,x_2, x_3, x_4) \equiv R(x_2, x_1, x_4, x_3)$.
\end{itemize}

It is straightforward to check that in each of the four cases $R'$ efficiently entails the formula $(E(x_1,x_2) \implies \uuN(x_3,x_4))$  (resp., $(N(x_1,x_2) \implies \uuE(x_3,x_4))$) if
$R$ efficiently entails $(E(x_1,x_2) \implies \uuN(x_3,x_4))$ (resp., $(N(x_1,x_2) \implies \uuE(x_3,x_4))$)
and contains both an $\EN$-tuple $t_{EN}$ and a $\NE$-tuple $t_{\EN}$ 
such that for all $t \in \{ t_{\EN}, t_{\NE} \}$ we have $(t[2] = t[3])$.
It completes the proof of the observation. 
\end{proof}

In Section~\ref{sect:canonicalop} we allowed to write 
$R_1 \cdots R_k(a_1, a_2)$  for $k$-tuples $a_1,a_2 \in A^k$ if and only if 
$R_i(a_1[i] , a_2[i])$ holds for all $i \in [k]$. 
With reference to this notation we say that $a_1, a_2$
are $\{ R_1, \ldots, R_k\}$-connected. Clearly, the relations $\{ R_1, \ldots, R_k\}$
do not have to be pairwise different. 
We write $n_{R_i(a_1,  a_2)}$ to denote the number of coordinates at which $R_j$ with $j \in [k]$ equals $R_i$. 
 For example, we write $N\!E\!\!=(a,b)$ for $a,b \in A^3$
if $(a[1],b[1]) \in N, (a[2],b[2]) \in N$, and $a[3] = b[3]$ and say that $a,b$ are $\{ E, N, = \}$-connected.  
In this case we have that $n_{E(a,b)} = n_{N(a,b)} = n_{=(a,b)} = 1$.

We say that a pair of tuples $u,s$ is constant-injective if $u$ is a constant tuple, $s$ is an injective tuple and no entry in $s$ equals the main value of $u$.

\begin{observation}
\label{obs:complementaryEN}
Let $R \in \{  R'_{\EN},  R'_{\NE} \}$ and $m \in [k]$.  If for all $\{E, N \}$-connected constant-injective pairs $u,s \in A^k$ with $n_{E(u,s)} = m$ 
we have $(f(u), f(s))$ in $E$ (resp.  in $N$), then either
\begin{itemize}
\item for all constant-injective pairs $u,s \in A^k$ with $n_{E(u,s)} \in \{ k - m, k-m+1 \}$
it holds $((f(u), f(s))$ in $N$ (resp. in $E$), or
\item $R$  contains either an $\EE$-tuple or a $\NN$-tuple.
\end{itemize}
\end{observation}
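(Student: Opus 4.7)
The plan is to argue by contradiction, assuming both that the first alternative of the conclusion fails and that $R$ contains neither an $\EE$-tuple nor an $\NN$-tuple. By the failure of the first alternative, fix a constant-injective pair $(u',s')$ with $n_{E(u',s')}\in\{k-m,k-m+1\}$ and $(f(u'),f(s'))\notin N$ (resp.\ $\notin E$); fix also the witnesses $t_\EN,t_\NE\in R$ delivered by Observation~\ref{obs:ENtuples}.

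The core step is to build a tuple $t\in R$ of the form $t=f(t^1,\ldots,t^k)$ where each $t^i$ is an $\Aut(\mathbb{A})$-translate of $t_\EN$ for $i$ in a set $I\subseteq[k]$ of size $m$, and of $t_\NE$ for $i\notin I$, chosen so that the following four column-conditions hold simultaneously: the coord-$1$ column is constant, the coord-$2$ column is injective and avoids the coord-$1$ value, the coord-$3$ column is constant equal to the constant value of $u'$, and the coord-$4$ column is exactly $s'$. The existence of such automorphisms $\alpha_i,\beta_i\in\Aut(\mathbb{A})$ is guaranteed by $\omega$-categoricity of $\mathbb{A}$; the injective sub-case of Observation~\ref{obs:ENtuples} is the flexible one, while the Eq sub-case (where $t[2]=t[3]$ for both witnesses) forces the coord-$2$ and coord-$3$ columns to coincide and is precisely what produces the alternative value $k-m+1$ in the conclusion's range. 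Writing $t=f(t^1,\ldots,t^k)\in R$: the hypothesis applied to the coord-$(1,2)$ column gives $(t[1],t[2])\in E$ (resp.\ $\in N$); the pp-formula defining $R$, which entails $E(x_1,x_2)\implies\uuN(x_3,x_4)$ for $R=R'_{\EN}$ and $N(x_1,x_2)\implies\uuE(x_3,x_4)$ for $R=R'_{\NE}$, forces $(t[3],t[4])\in\uuN$ (resp.\ $\in\uuE$); and the failure assumption applied to the coord-$(3,4)$ column $(u',s')$ excludes $N$ (resp.\ $E$). The only orbital compatible with both constraints is equality, so $t[3]=t[4]$.

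To upgrade this residual $t$ to the desired $\EE$-tuple (resp.\ $\NN$-tuple), I would apply $f$ a second time to $k$ tuples in $R$ built from $t$ together with $t_\EN$ (resp.\ $t_\NE$): using the qnu-identities of $f$, one copy of the new witness is absorbed while the remaining entries re-expose an $n_E=m$ constant-injective pattern in coord pair $(1,2)$, and the equality in coord pair $(3,4)$ of $t$ is broken by the $N$-entry of $t_\EN$ (resp.\ $E$-entry of $t_\NE$), yielding an $R$-tuple that is $\EE$ (resp.\ $\NN$) and contradicting the assumption. The principal obstacle is the simultaneous realization step in the core construction, namely aligning the coord-$1$ and coord-$3$ columns to be constant while matching the coord-$4$ column exactly to the prescribed injective $s'$; this requires a careful use of the $\Aut(\mathbb{A})$-orbit structure and is especially delicate in the Eq sub-case of Observation~\ref{obs:ENtuples}, which accounts for why the conclusion's range is $\{k-m,k-m+1\}$ rather than just $\{k-m\}$.
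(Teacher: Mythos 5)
Your plan inverts the paper's construction, and the inversion is where it breaks. The paper places the (suitably modified) counterexample pair into coordinates $1$--$2$ and a hypothesis-satisfying constant-injective pair with $n_E=m$ into coordinates $3$--$4$, so that one application of $f$ to $k$ translates of $t_{\EN},t_{\NE}$ directly produces an $\EE$-tuple (resp.\ $\NN$-tuple). You instead put an $n_E=m$ pattern into coordinates $1$--$2$ and the counterexample pair $(u',s')$ into coordinates $3$--$4$, aiming to use the entailment $E(x_1,x_2)\implies\uuN(x_3,x_4)$. Before any details, note that if your core construction were realizable it would prove too much: you would get $(f(u'),f(s'))\in\uuN$, and since a constant-injective pair is $\neq$-connected in every coordinate and $f$ preserves $\neq$, this forces $(f(u'),f(s'))\in N$, contradicting the choice of $(u',s')$ outright -- i.e., the first bullet would hold unconditionally and the second alternative of the observation would be superfluous. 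The construction indeed cannot always be carried out: your pattern requires \emph{two} constant columns (coordinates $1$ and $3$), so the single pair of constant values must simultaneously realize the $(1,3)$-orbital of $t_{\EN}$ and that of $t_{\NE}$, over which Observation~\ref{obs:ENtuples} gives no control; and in the sub-case where $t[2]=t[3]$ for both witnesses, demanding column $2$ injective while column $3$ is constant is impossible. Your claim that this coincidence is what "produces" the value $k-m+1$ is also wrong: in the paper that slack comes from replacing the constant tuple $u$ by an almost-constant tuple $t$ with the same main value (so $f(t)=f(u)$ by the qnu identity), which shifts $n_E$ by at most one; you give no mechanism for handling $n_{E(u',s')}=k-m+1$.

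The "upgrade" step is likewise unsound. Applying $f$ to $k-1$ copies of $t$ and one translate of $t_{\EN}$ yields third and fourth columns that agree in all but one coordinate and share the main value $t[3]=t[4]$; the quasi near-unanimity identities then send both columns to the \emph{same} element, so the equality is preserved, not "broken by the $N$-entry of $t_{\EN}$". Under the other reading (one copy of $t$ among $k-1$ translates of the witnesses), the pair of columns in coordinates $3$--$4$ is neither constant-injective nor $\{E,N\}$-connected, so neither the hypothesis nor preservation of $E$ and $N$ says anything about $f$'s output. Converting the derived identity $f(u')=f(s')$ into the existence of an $\EE$- or $\NN$-tuple of $R$ is the entire content of the observation, and it is deferred to this step without justification. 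To repair the argument, follow the paper's orientation: take $t$ with the main value of $u'$, values disjoint from $s'$, and $n_{E(t,s')}=k-m$; put $(t,s')$ into columns $1$--$2$, repeat $s'$ (or a fresh injective column) in column $3$, and add one fresh constant column in coordinate $4$ forming an $n_E=m$ constant-injective pair with column $3$; then $(f(t),f(s'),f(\cdot),f(\cdot))$ is the required $\EE$-tuple after a single application of $f$.
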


\begin{proof}
We only handle the case where $(f(u), f(s)) \in E$ for 
all  $\{ E, N \}$-connected constant-injective $u,s \in A^k$ with $n_{E(u,s)} = m$ and we show that in this case if $R$ does not satisfy the first item, then $R$ contains an $\EE$-tuple. 
The proof of the other case is analogous. The difference is that we show the existence of a $\NN$-tuple.
Assume that there is a $\{ E, N \}$-connected constant-injective
$u,s \in A^k$ 
such that
$n_{E(u,s)} \in  \{ (k - m), (k-m+1) \}$ and $(f(u), f(s)) \notin N$.
Since $f$ preserves a first-order expansion $\mathbb{A}$ of the random graph, it also 
preserves $E, N$ and $\neq$. Thus, we may assume that $(f(u), f(s)) \in E$.

Let now $t \in A^k$ be a tuple with the same main value as $u$ such that none of the values in $t$ occurrs in $s$,
$n_{E(t,s)} = k-m$ and $t,s$ are $\{ E, N \}$-connected. If already $n_{E(u,s)} = k-m$, then we take $t$ to be $u$, otherwise $n_{E(u,s)} = k-m +1$ and then using the extension property we simply change a value of one of the entries in $u$ so that the obtained $t$ satisfy the desired properties.
Further, either Case~\ref{ENtuples:eqconnected} or Case~\ref{ENtuples:injective} in Observation~\ref{obs:ENtuples} holds.
 
For Case~\ref{ENtuples:eqconnected} select a constant tuple $u'$ with the main value $d$ such that 
for all $i \in [k]$ the tuple $(t[i], s[i], s[i], u'[i])$ equals either $t_{\EN}$ or $t_{\NE}$. By 
Observation~\ref{obs:oneequalityEN} we may choose $u'$ such that $d$ is different from all values in $t$ and $s$
and hence the existence of $u'$ follows by the extension property of the random graph. In particular, we have that 
$n_{E(s, u')} = k$. By the assumption, it follows that $(f(s),f(u')) \in E$. On the other hand, since 
$t$ has the same main value as $u$, it follows that $(f(t), f(s)) \in E$. The tuples 
$t_{\EN}$ and $t_{\NE}$ are in $R$ and hence $(t[i], s[i], s[i], u'[i])$ is in $R$ for every $i \in [k]$ which is pp-definable in $\mathbb{A}$ and hence preserved by $f$. It follows that $R$ contains an $\EE$-tuple. It completes the proof of the claim for Case~\ref{ENtuples:eqconnected} in Observation~\ref{obs:ENtuples}. In the proof for 
Case~\ref{ENtuples:injective}  we choose
first an injective $s'$ with values not occurring in $t,s$ and then a constant $u'$ with the main value $d$ which again does not occur in none of $t,s,s'$ so that 
for all $i \in [k]$ the tuple $(t[i], s[i], s'[i], u'[i])$ equals either $t_{\EN}$ or $t_{\NE}$.
Since in this case both $t_{\EN}$ and $t_{\NE}$ are injective we can clearly find first such  $s'$and $u$ 
using the extension property of the random graph.
 Then using the same reasoning as in Case~\ref{ENtuples:eqconnected} we show that also in 
Case~\ref{ENtuples:injective} we obtain an $\EE$-tuple in $R$. It completes the proof of the observation.
\end{proof}

We continue the proof of the lemma. Since  $f$ preserves $E$, we have that 
$(f(u),f(s)) \in E$ for all constant injective pairs $u,s \in A^k$ with  
$n_{E(u,s)} = k$. By one application of 
Observation~\ref{obs:complementaryEN}, it follows that
either $R$ contains 
the desired $\EE$-tuple or 
$(f(u),f(s)) \in N$ for all constant-injective pairs $u,s$ with $n_{E(u,s)} =1$.
In the latter case, 
by another application of the claim we have that $R$ contains the desired $\NN$-tuple or
$(f(u),f(s)) \in E$ for all 
constant-injective pairs $u,s$ with $n_{E(u,s)} = n-1$.
By inductively continuing this process we obtain that 
$(f(u),f(s)) \in E$ for all constant-injective pairs 
$u, s$ 
with $n_{E(u,s)} = i$ where $i \in [k]$,
and on the other hand that
$(f(u),f(s)) \in N$ for all constant-injective pairs 
$u, s$ 
with $n_{E(u,s)} = k - i$ where $i \in [k]$ or 
$R$ contains either an $\EE$-tuple or a $\NN$-tuple.
Since  $(f(u),f(s))$ cannot be in both  $E$ and $N$,
we obtain that either an $\EE$-tuple or a $\NN$-tuple is in $R$. 
It follows that both $R'_{\EN}$ and $R'_{\NE}$ contain either an $\EE$-tuple
or a $\NN$-tuple. Hence either $R'_{\EN}$ does not efficiently entail $(E(x_1, x_2) \rightarrow \uuN(x_3, x_4))$
or $R'_{\EN}$ does not efficiently entail $(E(x_1, x_2) \rightarrow \uuN(x_3, x_4))$. It implies the 
contradiction and  completes the proof of the lemma.
$\square$

\section{Proof of Lemma~\ref{lem:noEEqEqEImplications}}

Assume on the contrary that $\mathbb{A}$ pp-defines $R$ which is either a
$[(E(x_1, x_2) \implies (x_3 = x_4),(\uuE(x_1, x_2) \wedge N(x_2, x_3) \wedge \uuE(x_3, x_4))]\textrm{-relation}$ 
or a $[((x_1 = x_2) \implies E(x_3, x_4)],(\uuE(x_1, x_2) \wedge N(x_2, x_3) \wedge \uuE(x_3, x_4))]\textrm{-relation}$.
In any case $R$ contains both an $\EEq$-tuple $t_{E=}$ and an $\EqE$-tuple $t_{=E}$
such that for all $t \in \{  t_{E=}, t_{=E} \}$ we have $(t[2], t[3]) \in N$. 
In the remainder of the proof we show that $R$ preserved by $h$ and the qnu-operation $f$ from the formulation of the lemma containing both these tuples contains also an $\EE$-tuples $t_{\EE}$
and a $==$-tuple $t_{==}$. Observe that to this end it is enough to show that at least one of these tuples is in $R$.
Indeed, $h(t_{E=}, t_{=E}, t_{EE})$ produces an $==$-tuple and  
$h(t_{E=}, t_{=E}, t_{==})$ produces an $EE$-tuple. A relation that contains 
$t_{\EE}, t_{E=}, t_{=E}$ and $t_{==}$
efficiently entails netiher  $(E(x_1, x_2) \implies (x_3 = x_4))$ nor
$((x_1 = x_2) \implies E(x_3, x_4))$. It contradicts the assumption that $R$ does efficiently entail one of these relations and will complete the proof of the lemma.
We let $k$ to be the arity of the qnu-operation $f$  and continue with a simple observation. 

\begin{observation}
\label{obs:compEEqEqE}
Let $m \in [k]$.  If for all $\{ E, = \}$-connected $u, s \in A^k$ such that $u$ is constant and $n_{E(u,s)} = m$ 
it holds that $(f(u), f(s)) \in E$ (resp. $(f(u) = f(s))$). Then either
\begin{itemize}
\item for   
all $\{ E, = \}$-connected $u, s \in A^k$ such that $u$ is constant and $n_{E(u,s)} = \{ k - m, k-m+1 \}$
we have $f(u) = f(s)$ (resp. $(f(u), f(s)) \in E$), or
\item $R$ contains an $\EE$-tuple (resp. an $==$-tuple).
\end{itemize}
\end{observation}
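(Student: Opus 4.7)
The plan is to mimic the argument of Observation~\ref{obs:complementaryEN} in the $C^{2}_{\omega}$ setting, with the orbitals $E$ and ${=}$ playing the roles that $E$ and $N$ played there. I treat the first case explicitly; the second is completely analogous after swapping the outputs $E$ and ${=}$. Assume the first alternative fails, so there exist $u$ constant and $s$ with $(u,s)$ being $\{E,=\}$-connected, $n_{E(u,s)} \in \{k-m,k-m+1\}$, and $f(u) \neq f(s)$. Since $\uuE$ is pp-definable from $E$ in $C^{2}_{\omega}$ via $\exists z\,(E(x,z) \wedge E(z,y))$, the operation $f$ preserves $\uuE$, so $(f(u),f(s)) \in E$.

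Next I would pass to a companion tuple $t \in A^{k}$ with the same main value as $u$ and $n_{E(t,s)} = k-m$: take $t = u$ when $n_{E(u,s)} = k-m$; otherwise flip a single $E$-position into an ${=}$-position by setting $t[i] := s[i]$ at one index where $(u[i],s[i]) \in E$. Because $f$ is a qnu-operation, it sends all tuples sharing the main value $a$ of $u$ to $f(a,\ldots,a)$, so $f(t) = f(u)$ and hence $(f(t),f(s)) \in E$.

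The main step is the choice of $u'$ and $s'$. Let $a$ be the main value of $u$, let $d$ lie in the equivalence class of $C^{2}_{\omega}$ opposite to that of $a$, and put $u' := (d,\ldots,d)$. Define $s'$ coordinatewise: $s'[i] := d$ when $(t[i],s[i]) \in E$, and $s'[i]$ is any element of $d$'s class distinct from $d$ when $t[i] = s[i]$. A short case check of the six pairwise orbitals shows that every column $(t[i],s[i],s'[i],u'[i])$ has the pairwise pattern of $t_{E=}$ in the first subcase and of $t_{=E}$ in the second. Since $R$ is first-order definable in the homogeneous structure $C^{2}_{\omega}$, it is invariant under $\Aut(C^{2}_{\omega})$, so each column lies in $R$, and applying $f$ column-wise gives $(f(t),f(s),f(s'),f(u')) \in R$.

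Finally, $u'$ is constant, $(u',s')$ is $\{E,=\}$-connected, and $n_{E(u',s')}$ equals the number of indices where $t[i] = s[i]$, which is exactly $m$. The hypothesis then forces $(f(u'),f(s')) \in E$, so by symmetry $(f(s'),f(u')) \in E$, and the 4-tuple produced above is an $EE$-tuple in $R$. In the second case the hypothesis instead yields $f(u') = f(s')$ together with $f(u) = f(s)$, producing a ${==}$-tuple. The main obstacle is nothing conceptual but the bookkeeping for the column check; once the class of $d$ is fixed to be disjoint from those of $t$ and $s$, the six pairwise relations of each column fall into place automatically, so invariance of $R$ under $\Aut(C^{2}_{\omega})$ suffices to place the columns in $R$.
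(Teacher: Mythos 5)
Your construction reproduces the paper's proof of this observation almost step for step: the same contrapositive setup, the same companion tuple $t$ with the main value of $u$ and $n_{E(t,s)}=k-m$, the same column-wise assembly of quadruples lying in the orbits of $t_{E=}$ and $t_{=E}$ with the third and fourth entries drawn from a fresh $2$-clique, and the same final appeal to the hypothesis for the pair $(u',s')$ with $n_{E(u',s')}=m$. The type check of the columns, the use of $f(t)=f(u)$ for tuples sharing a main value, and the count $n_{E(u',s')}=m$ are all correct, and the ``resp.''\ case is indeed obtained by swapping the two outputs.

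The one step that fails as written is your justification that $f$ preserves $\uuE$. The formula $\exists z\,(E(x,z)\wedge E(z,y))$ does \emph{not} define $\uuE$ in $C^2_{\omega}$: every clique there has exactly two vertices, so a common neighbour $z$ of $x$ and $y$ exists only when $x=y$, and the formula defines equality. (It does define $\uuE$ in $C^{\omega}_{2}$, where the cliques are infinite; this, together with your reference to ``the class opposite to that of $a$'' --- of which there are infinitely many in $C^2_{\omega}$, though any other class serves --- suggests the two graphs got swapped.) The fact you need is nevertheless true, and it is exactly what the paper invokes when it writes that $f$ preserves $R$ and therefore $\uuE$: since $R$ entails $\uuE(x_1,x_2)$ and contains both an $\EEq$-tuple and an $\EqE$-tuple, its projection to the first two coordinates is a union of orbitals contained in $\uuE$ meeting both $E$ and $=$, hence equals $\uuE$; so $\uuE$ is pp-definable from $R$ and preserved by $f$. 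With that repair, which changes nothing downstream, your argument goes through.
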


\begin{proof}
We only handle the case where $(f(u), f(s)) \in E$ for 
all  $\{ E, = \}$-connected $u,s \in A^k$ such that $u$ is constant and 
$n_{E(u,s)} = m$. Assume that there exists a constant $u \in A^k$ and $s \in A^k$
such that $u,s$ are $\{ E, =  \}$-connected, $n_{E(u,s)} \in \{ k - m, k-m+1 \}$,
and $(f(s) \neq f(u))$. 
Since $f$ preserves $R$ and therefore $\uuE$, we may assume that $(f(u), f(s)) \in E$.
We will show that in this case $R$ contains an $\EE$-tuple.
Select $t$ to be a $k$-tuple with the same main value as $u$ and such that 
$n_{E(t,s)} = k - m$. Observe that all entries in $t, s$ comes from a single edge in $C^2_{\omega}$.
Then, clearly, there exists $s', u' \in A^k$ of values originating from any other edge in $C^2_{\omega}$
such that $n_{E(u',s')} =m$ and such that 
for all $i \in [k]$ it holds that 
$(t[i], s[i], s'[i], u'[i])$ is either $t_{E=}$ or $t_{=E}$.
By the assumption, $(f(s'), f(u')) \in E$, and hence $(f(t), f(s),f(s'), f(u'))$ is an $\EE$-tuple in $R$. It completes the proof of the claim.
\end{proof}

Since $f$ preserves $E$, we have that 
$(f(u),f(s)) \in E$ for all $u,s \in A^k$ with  
$n_{E(u,s)} = k$. By one application of 
Observation~\ref{obs:compEEqEqE}, it follows that
either $R$ contains 
the desired $\EE$-tuple or 
$(f(u) = f(s))$ for all 
$\{ E, = \}$-connected $u,s \in A^k$ where $u$
is constant and $n_{E(u,s)} =1$.
If the application did not return the desired tuple, then
by another application of the claim we have that 
$(f(u),f(s)) \in E$ for all $\{ E, = \}$-connected 
$u, s \in A^k$ such that $u$ is constant
and $n_{E(u,s)} = n - 1$. 
By inductively continuing this process we obtain that 
$(f(u),f(s)) \in E$ for all $\{ E, = \}$-connected $u,s \in A^k$ such that $u$ is constant 
and  $n_{E(u,s)} = i$ where $i \in [k]$,
and on the other hand that
$(f(u) = f(s))$ for all constant 
$u$ and all $s$ 
with $n_{E(u,s)} = k - i$ where $i \in [k]$.
Since  $(f(u),f(s))$ cannot be in both  $E$ and $=$,
we obtain a contradiction. 
Thus, in consequence $R$ contains all: an $\EE$-tuple and an $==$-tuple. It implies a contradiction and completes   
the proof of the lemma.
$\square$

\section{Proof of Lemma~\ref{lem:noEEEqEqImplications}}

The proof goes along the lines of the proof of Lemma~\ref{lem:noEEqEqEImplications}.
We assume on the contrary that $\mathbb{A}$ pp-defines $R$ which is either a
$[(E(x_1, x_2) \implies (E(x_3, x_4)),(\uuE(x_1, x_2) \wedge N(x_2, x_3) \wedge \uuE(x_3, x_4))]$-relation or
 a  $[((x_1 = x_2) \implies (x_3 = x_4),(\uuE(x_1, x_2) \wedge N(x_2, x_3) \wedge \uuE(x_3, x_4))]$-relation 
and observe that in any case $R$ 
contains both an $\EE$-tuple $t_{\EE}$ and a $==$-tuple $t_{==}$ such that for all $t \in \{ t_{\EE}, t_{==} \}$
we have $(t[2], t[3]) \in N$. We will show that such $R$ preserved by 
the oligopotent qnu-operation $f$ from the formulation of the lemma and $h$
contains additionally an $\EEq$-tuple $t_{E=}$ and an $\EqE$-tuple $t_{=E}$. In fact, it is enough to show that $R$
contains one of these tuples. Indeed, $h(t_{EE}, t_{==}, t_{E=})$ produces an $\EqE$-tuple and 
$h(t_{EE}, t_{==}, t_{=E})$ an $\EEq$-tuple. A relation containing 
$\{ t_{EE}, t_{==}, t_{E=}, t_{=E} \}$ efficiently entails neither $(E(x_1, x_2) \implies (E(x_3, x_4))$
nor  $((x_1 = x_2) \implies (x_3 = x_4))$. It contradicts the assumption and will complete the proof of the lemma.
We let $k$ to be the arity of $f$ and continue with a simple observation.

\begin{observation}
\label{obs:compEEEqEq}
Let $m \in [k]$. If for all $\{ E, = \}$-connected $u, s \in A^k$ such that $u$ is constant and $n_{E(u,s)} = m$ it holds that 
$(f(u), f(s)) \in E$. Then either
\begin{itemize}
\item for   
all $\{ E, = \}$-connected $u, s \in A^k$ such that $u$ is constant and $n_{E(u,s)} = m-1$
we have $(f(u), f(s)) \in E$, or
\item $R$ contains an $\EEq$-tuple.
\end{itemize}
\end{observation}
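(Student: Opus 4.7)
The plan is to mimic the proof of Observation~\ref{obs:compEEqEqE}, but to use the weak near-unanimity of $f$ to bridge the off-by-one mismatch between the hypothesis at $m$ and the supposed failure at $m-1$. I suppose the first alternative fails, so there exist a constant $u \in A^k$ with main value $c$ (in some edge $\{c,c'\}$ of $C^2_{\omega}$) and an $s \in A^k$ with entries in $\{c,c'\}$, $\{E,=\}$-connected to $u$, with $n_{E(u,s)} = m-1$ and $(f(u), f(s)) \notin E$. Since $f$ preserves $R$ and hence its projection $\uuE$ on the first two coordinates, and since $(u,s)$ is $\uuE$-connected, this forces $f(u) = f(s)$. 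The target is to produce an $\EEq$-tuple in $R$.

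The key trick is a one-coordinate surgery that raises the number of differing positions from $m-1$ to $m$ without changing the value of $f$. Let $I_1 = \{i \in [k] : s[i] = c'\}$, so $|I_1| = m-1$, and pick any $i_0 \in [k] \setminus I_1$ (which exists since $m \leq k$); note that $s[i_0] = c$. Let $t'$ be obtained from $u$ by setting $t'[i_0] := c'$ and leaving every other entry equal to $c$. By the qnu property, $f(t') = f(u) = f(s)$, while $t'$ and $s$ now differ at precisely the $m$ positions of $I_1 \cup \{i_0\}$, still with all entries in $\{c,c'\}$. Pick now an edge $\{d,d'\}$ of $C^2_{\omega}$ disjoint from $\{c,c'\}$, let $u''$ be the constant tuple with value $d$, and let $s''[i] := d'$ for $i \in I_1 \cup \{i_0\}$ and $s''[i] := d$ otherwise. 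Then $u''$ is constant and $\{E,=\}$-connected to $s''$ with $n_{E(u'',s'')} = m$, so the hypothesis of the observation yields $(f(u''), f(s'')) \in E$.

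Finally, I consider the four rows $u'', s'', t', s$ and examine the columns. For $i \in I_1 \cup \{i_0\}$ the column is $(d, d', x, y)$ with $\{x,y\} = \{c,c'\}$, which has the same quantifier-free type as $t_{\EE}$ (the middle $N$-step is forced by $\{c,c'\} \cap \{d,d'\} = \emptyset$); for $i \notin I_1 \cup \{i_0\}$ the column is $(d, d, c, c)$, matching $t_{==}$. Since $R$ is pp-definable in the $\omega$-categorical structure $\mathbb{A}$ and hence invariant under $\Aut(\mathbb{A})$, the edges $\{c,c'\}$ and $\{d,d'\}$ can be chosen by homogeneity so that every column lies in the $\Aut(\mathbb{A})$-orbit of $t_{\EE}$ or of $t_{==}$, and thus in $R$. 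Preservation of $R$ by $f$ then gives $(f(u''), f(s''), f(t'), f(s)) \in R$, and by construction $(f(u''), f(s'')) \in E$ while $f(t') = f(s)$, yielding the desired $\EEq$-tuple. The only delicate point is the off-by-one in the position count; it is resolved precisely by the qnu one-coordinate surgery, in direct analogy with how $n_{E(u,s)}$ was allowed to range over $\{k-m, k-m+1\}$ in the proof of Observation~\ref{obs:compEEqEqE}.
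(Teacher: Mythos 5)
Your proof is correct and follows essentially the same route as the paper's: assume a counterexample pair $(u,s)$ at level $m-1$, deduce $f(u)=f(s)$ from preservation of $\uuE$, flip one coordinate of $u$ to obtain a tuple with main value $c$ at Hamming distance $m$ from $s$ (so the qnu property keeps the $f$-value fixed), pair it column-wise with a constant/non-constant pair on a disjoint edge where the level-$m$ hypothesis applies, and read off an $\EEq$-tuple from the orbits of $t_{\EE}$ and $t_{==}$. The only (harmless) cosmetic differences are the order of the last two rows and the remark about "choosing" the edge $\{c,c'\}$, which is in fact fixed by the counterexample — but nothing in the argument depends on that choice.
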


\begin{proof}
Assume on the contrary that there exists a constant tuple $u  \in A^k$ and $s \in A^k$
such that $u,s$ are $\{ E, = \}$-connected, $n_{E(u,s)} = m -1$
and $(f(s), f(u)) \notin E$. Since $f$ preserves $R$ and hence $\uuE$, we have 
$(f(s) = f(u))$.
We will show that in this case $R$ contains an $\EEq$-tuple.
Select $t$ to be a $k$-tuple with the same main value as $u$ and such that 
$n_{E(t,s)} =m$. Observe that all entries in $t, s$ come from a single edge in $C^2_{\omega}$.
Then, clearly, there exists $s', u' \in A^k$ of values originating from any other edge in $C^2_{\omega}$
such that $n_{E(s',u')} =  m$ and such that 
for all $i \in [k]$ it holds that 
$(u'[i], s'[i], s[i], t[i])$ is either $t_{\EE}$ or $t_{==}$. 
Since $t$ has the same main-value as $u$ we have $(f(s) = f(t))$. Since $n_{E(s',u')} =  m$,
by the assumption, $(f(s'), f(u')) \in E$. It follows that 
$(f(u'), f(s'),f(s), f(t))$ is an $\EEq$-tuple in $R$. It completes the proof of the observation.
\end{proof}

The operation $f$ preserves $E$ and hence $(f(u),f(s)) \in E$ for all $u,s \in A^k$ with $n_{E(u,s)} = k$.
By Observation~\ref{obs:compEEEqEq}, it follows that
either $R$ contains an $\EEq$-tuple or
$(f(u), f(s)) \in E$ for all constant $u \in A^k$ and $s \in A^k$ such that $u,s$ are $\{ E, = \}$-connected and  $n_{E(u,s)} = k-1$.
By inductively continuing this argument, we either obtain  the desired  
$\EEq$-tuple or we 
have $(f(u), f(s)) \in E$ for all constant $u \in A^k$ and $s \in A^k$ such that $u,s$ are $\{ E, = \}$-connected and  $n_{E(u,s)} = k-1, k-2, \ldots, 1$.
The latter contradicts the fact that $f$ is a quasi near-unanimity operation. Indeed, for any constant $u$
and $t$ with the same main-value as $u$ which are  $\{ E, = \}$-connected and $n_{E(u,s)} = 1$, we have
$(f(u), f(s)) \in E$ and in consequence $(f(u) \neq f(s))$. 
 It completes the proof of the lemma.
$\square$

\end{document}